\numberwithin{equation}{section}
\theoremstyle{plain}
\newtheorem{thm}{Theorem}[section]
\newcommand{\E}{\mathbb{E}}
\newcommand{\tP}{\tilde{P}}	
\newcommand{\tPg}{\tilde{P}_{\gamma}}	
\newcommand{\tpi}{\tilde{\pi}}	
\newcommand{\tpig}{\tilde{\pi}_{\gamma}}
\newcommand{\tpigLi}{\tilde{P}_{\gamma,L,i}} 
\newcommand{\tpigLk}{\tilde{P}_{\gamma,L,k}} 
\newcommand{\tpigJi}{\tilde{P}_{\gamma,J,i}} 
\newcommand{\tpigJk}{\tilde{P}_{\gamma,J,k}} 
\newcommand{\RJik}{R_{\gamma,J,ik}} 
\newcommand{\RJi}{R_{\gamma,J,i}}
\newcommand{\RJk}{R_{\gamma,J,k}}
\newcommand{\RLi}{R_{\gamma,L,i}} 
\newcommand{\RLk}{R_{\gamma,L,k}}
\newcommand{\agik}{a_{\gamma,ik}} 
\newcommand{\agki}{a_{\gamma,ki}}  
\newcommand{\wgi}{w_{\gamma,i}} 
\newcommand{\wgk}{w_{\gamma,k}}
\newcommand{\wgj}{w_{\gamma,j}}
\newcommand{\giY}{\gamma \in \mathcal{Y}} 
\newcommand{\bb}[1]{\mathbb{#1}}
\theoremstyle{plain}
\newtheorem{lemma}[thm]{Lemma}
\newtheorem{cor}[thm]{Corollary}
\theoremstyle{definition}
\newtheorem{defi}[thm]{Definition}
\theoremstyle{remark}
\newtheorem{remark}[thm]{Remark}
\begin{document}
\algnewcommand{\algorithmicgoto}{\textbf{go to}}%
\algnewcommand{\Goto}[1]{\algorithmicgoto~\ref{#1}}%

\begin{frontmatter}
\title{A Framework for Adaptive MCMC Targeting Multimodal Distributions}
\runtitle{Adaptive MCMC for Multimodal Distributions}

\begin{aug}
\author{\fnms{Emilia} \snm{Pompe}\thanksref{t1,m1}\ead[label=e1]{emilia.pompe@stats.ox.ac.uk}},
\author{\fnms{Chris} \snm{Holmes}\thanksref{t2,m1}\ead[label=e2]{cholmes@stats.ox.ac.uk}}
\and
\author{\fnms{Krzysztof} \snm{{\L}atuszy\'nski}\thanksref{t3,m2}
\ead[label=e3]{K.G.Latuszynski@warwick.ac.uk}
}

\thankstext{t1}{Supported by the EPSRC and MRC Centre for Doctoral Training in Next Generation Statistical Science: the Oxford-Warwick Statistics Programme, EP/L016710/1, and the Clarendon Fund.}
\thankstext{t2}{Supported by the MRC, the EPSRC, the Alan Turing Institute, Health Data Research UK, and the Li Ka Shing foundation.}
\thankstext{t3}{Supported by the Royal Society via the University Research
Fellowship scheme.}

\runauthor{E. Pompe, C. Holmes and K. {\L}atuszy\'nski}
\affiliation{University of Oxford\thanksmark{m1} and University of Warwick\thanksmark{m2}}

\address{Department of Statistics \\
University of Oxford\\
24–29 St Giles'\\
Oxford OX1 3LB\\
United Kingdom\\
\printead{e1}\\
\phantom{E-mail:\ }\printead*{e2}}

\address{Department of Statistics \\
University of Warwick\\
Coventry, CV4 7AL\\
United Kingdom\\
\printead{e3}\\
}
\end{aug}

\begin{abstract} 
	We propose a new Monte Carlo method for sampling from multimodal distributions. The idea of this technique is based on splitting the task into two: finding the modes of a target distribution $\pi$ and sampling, given the knowledge of the locations of the modes. The sampling algorithm relies on steps of two types: local ones, preserving the mode; and jumps to regions associated with different modes. Besides, the method learns the optimal parameters of the algorithm while it runs, without requiring user intervention.  Our technique should be considered as a flexible framework, in which the design of moves can follow various strategies known from the broad MCMC literature. 
	
	In order to design an adaptive scheme that facilitates both local and jump moves, we introduce an auxiliary variable representing each mode and we define a new target distribution $\tilde{\pi}$ on an augmented state space $\mathcal{X}~\times~\mathcal{I}$, where $\mathcal{X}$ is the original state space of $\pi$ and $\mathcal{I}$ is the set of the modes. As the algorithm runs and updates its parameters, the target distribution $\tilde{\pi}$ also keeps being modified. This motivates a new class of algorithms, Auxiliary Variable Adaptive MCMC. We prove general ergodic results for the whole class before specialising to the case of our algorithm.
\end{abstract}

\begin{keyword}[class=MSC]
\kwd[Primary ]{60J05}
\kwd{65C05}
\kwd[; secondary ]{62F15}
\end{keyword}

\begin{keyword}
\kwd{multimodal distribution}
\kwd{adaptive MCMC}
\kwd{ergodicity}
\end{keyword}

\end{frontmatter}

\section{Introduction}\label{section:introduction}
Poor mixing of standard Markov Chain Monte Carlo (MCMC) methods, such as the Metropolis-Hastings algorithm or Hamiltonian Monte Carlo, on multimodal target distributions with isolated modes is a well-described problem in statistics. Due to their dynamics these algorithms struggle with crossing low probability barriers separating the modes and thus take a long time before moving from one mode to another, even in low dimensions. Sequential Monte Carlo (SMC) has often empirically proven to outperform MCMC on this task, its robust behaviour, however, relies strongly on the good between-mode mixing of the Markov kernel used within the SMC algorithm (see \cite{jasra2015error}). Therefore, constructing an MCMC algorithm which enables fast exploration of the state space for complicated target functions is of great interest, especially as multimodal distributions are common in applications. The examples include, but are not limited to, problems in genetics \cite{craiu2009learn, kou2006discussion}, astrophysics \cite{feroz2009multinest, feroz2013importance, tak2017repelling} and sensor network localisation~\cite{ihler2005nonparametric}.

Moreover, multimodality is an inherent issue of Bayesian mixture models (e.g. \cite{jasra2005markov}), where it is caused by label-switching, if both the prior and the likelihood are symmetric with respect to all components, or more generally, it may be caused by model identifiability issues or model misspecification (see \cite{drton2004multimodality}).

Designing MCMC algorithms for sampling from a multimodal target distribution $\pi$ on a $d$-dimensional space $\mathcal{X}$ needs to address three fundamental challenges: \begin{itemize} \item[(1)] Identifying high probability regions where the modes are located; \item[(2)] Moving between the modes by crossing low probability barriers; \item[(3)] Sampling efficiently within the modes by accounting for inhomogeneity between them and their local geometry.
\end{itemize}
These challenges are fundamental in the sense that the high probability regions in (1) are typically exponentially small in dimension with respect to a reference measure on the space $\mathcal{X}$. Furthermore, the design of basic reversible Markov chain kernels prevents them from visiting and crossing low energy barriers, hence from overcoming (2). Besides, accounting for inhomogeneity of the modes in (3) requires dividing the $d$-dimensional space $\mathcal{X}$ into regions, which is an intractable task on its own that requires detailed a priori knowledge of~$\pi$.

Existing MCMC methodology for multimodal distributions usually identifies these challenges separately and a systematic way of addressing (1-3) is not available. Section \ref{section:other_approaches} discusses main areas of the abundant literature on the topic in more detail.

In this paper we introduce a unifying framework for addressing (1-3) simultaneously  via a novel design of Auxiliary Variable Adaptive MCMC. The framework allows us to split the sampling task into mode finding, between-region jump moves and local moves. In addition, it incorporates parameter adaptations for optimisation of the local and jump kernels, and identification of local regions. Unlike other state space augmentation techniques for multimodal distributions, where the auxiliary variables are introduced to improve mixing on the extended state space, auxiliary variables in our approach help to design an efficient adaptive scheme. We present the adaptive mechanics and main properties of the resulting algorithm in Section \ref{section:contribution} after reviewing the literature.

\subsection{Other approaches}\label{section:other_approaches} Numerous MCMC methods have been proposed to address the issue of multimodality and we review briefly the main strands of the literature.

The most popular approach is based on tempering. The idea behind this type of methods relies on an observation  that raising a multimodal distribution $\pi$ to the power $\beta \in(0,1)$ makes the modes "flatter" and as a result, it is more likely to accept moves to the low probability regions. Hence, it is easier to explore the state space and find the regions where the modes of $\pi$ are located, addressing challenge (1) above, and also to move between these regions, addressing challenge (2). The examples of such methods, which incorporate $\pi^{\beta}$ by augmenting the state space, are parallel tempering  proposed by \cite{geyer1991markov} and its adaptive version \cite{miasojedow2013adaptive}, simulated tempering \cite{marinari1992simulated}, tempered transitions \cite{neal1996sampling} and the equi-energy sampler \cite{kou2006discussion}. Despite their popularity, tempering-based approaches, as noticed by \cite{woodard2009sufficient}, tend to mix between modes exponentially slowly in dimension if the modes have different local covariance structures. Addressing this issue is an area of active research~\cite{tawn2018weight}.

Another strand of research is optimisation-based methods, which address challenge (1) by running preliminary optimisation searches in order to identify local maxima of the target distribution. They use this information in their between-mode proposal design to overcome challenge (2). A method called Smart Darting Monte Carlo, introduced in \cite{andricioaei2001smart}, relies on moves of two types: jumps between the modes, allowed only in non-overlapping $\epsilon$-spheres around the local maxima identified earlier; and local moves (Random Walk Metropolis steps).  This technique was generalised in \cite{sminchisescu2011generalized} by allowing the jumping regions to overlap and have an arbitrary volume and shape. \cite{ahn2013distributed}~went one step further by introducing updates of the jumping regions and parameters of the proposal distribution at regeneration times, hence the name of their method Regeneration Darting Monte Carlo (RDMC). This includes a possibility of adding new locations of the modes at regeneration times if they are detected by optimisation searches running on separate cores.

Another optimisation-based method, Wormhole Hamiltonian Monte Carlo, was introduced by \cite{lan2014wormhole} as an extension of Riemanian Manifold HMC (see \cite{duane1987hybrid} and \cite{girolami2011riemann}).
The main underlying idea here is to construct a network of "wormholes" connecting the modes (neighbourhoods of straight line segments between the local maxima of $\pi$). The Riemannian metric used in the algorithm is a weighted mixture of a standard metric responsible for local HMC-based moves and another metric, influential in the vicinity of the wormholes, which shortens the distances between the modes. As before, updates of the parameters of the algorithm, including the network system, are allowed at regeneration times. 

As we will see later, the algorithm we propose also falls into the category of optimisation-based methods.

The Wang-Landau algorithm \cite{wang2001efficient, wang2001determining} or its adaptive version proposed by~\cite{bornn2013adaptive} belong to the exploratory strategies that aim to push the algorithm away from well-known regions and visit new ones, hence addressing challenge (1). The multi-domain sampling technique,  proposed in \cite{zhou2011multi}, combines the idea of the Wang-Landau algorithm with the optimisation-based approach. This algorithm relies on partitioning the state space into domains of attraction of the modes. Local moves are Random Walk Metropolis steps proposed from a distribution depending on the domain of attraction of the current state. Jumps between the modes follow the independence sampler scheme, where the new states are proposed from a mixture of Gaussian distributions approximating~$\pi$.

Other common approaches include Metropolis-Hastings algorithms with a special design of the proposal distribution accounting for the necessity of moving between the modes \cite{tjelmeland2001mode, tak2017repelling} and MultiNest algorithms based on nested sampling \cite{feroz2009multinest, feroz2013importance}.

\subsection{Contribution}\label{section:contribution} As mentioned before, the existing MCMC methods for multimodal distributions struggle to tackle challenges (1-3) simultaneously. In particular, challenge (3) typically fails to be addressed. The difficulty behind this challenge is that when modes have distinct shapes, different local proposal distributions will work well in regions associated with different modes. Note that the majority of the methods described above (all tempering-based techniques, the equi-energy sampler, the adaptive and non-adaptive Wang-Landau algorithm)  only employ a single transition kernel, regardless of the region. 

In applied problems optimal parameters of the MCMC kernels are unknown, therefore recent approaches involve tuning them while the algorithm runs. In case of unimodal target distributions Adaptive MCMC techniques prove to be useful \cite{roberts2009examples, MR2461882, MR3360496}. The parameters, such as covariance matrices, the scaling, the step size and the number of leapfrog steps of the involved Metropolis-Hastings, MALA, or HMC kernels can be learned on the fly as the simulation progresses, based on the samples observed so far. The adaptive algorithms remain ergodic under suitable regularity conditions \cite{MR2260070, roberts2007coupling, MR3012408, MR2648752, chimisov2018air}.

In case of multimodal distributions an analogous idea would be to apply these Adaptive MCMC methods separately to regions associated with different modes,  to improve the within-mode mixing. Note that in order to sample from different proposal distributions in regions associated with different modes, one needs to control at each step of the algorithm which region the current state belongs to. Besides, adapting parameters of the local proposal distributions on the fly must be based on samples that actually come from the corresponding region. The only known approach to assigning samples to regions is that of the multi-domain sampler \cite{zhou2011multi}. However, in their setting keeping track of the regions requires running a gradient ascent procedure at each MCMC step, which
imposes a high computational burden on the whole algorithm. Other optimisation-based approaches known in the literature (e.g.  \cite{sminchisescu2011generalized} and  \cite{ahn2013distributed}) tend to
ignore the necessity of assigning samples to regions and the possibility of moving between the modes via local  steps.

An issue that we have not raised so far is that the adaptive optimisation-based methods presented above, such as those of \cite{ahn2013distributed} and \cite{lan2014wormhole}, allow for adaptations only at regeneration times. Although this approach seems appealing from the point of view of the theory, since no further proofs of convergence are needed, it does not work well in practice in high dimensions. The reason for this is that regenerations happen rarely in large dimensions, which makes the adaptive scheme prohibitively inefficient.  Besides, identifying regeneration times using the method of \cite{mykland1995regeneration}, as authors of both algorithms propose, requires case-specific calculations which precludes any generic implementation of an algorithm based on regenerations. Moreover, the resulting  identified regenerations are of orders of magnitude more infrequent than the "true" ones which are already rare.

We aim to remedy these shortcomings by proposing a framework for designing an adaptive algorithm on an augmented state space $\mathcal{X} \times \mathcal{I}$, where $\mathcal{I} = \{1, \dots, N\}$, and the auxiliary variable $i$ of the resulting sample $(x, i)$ encodes the corresponding region for $x$. Local MCMC kernels update $x$ only, while jump kernels that move between the modes update $x$ and $i$ simultaneously. Furthermore, the design of the target distribution on the augmented state space prevents the algorithm from moving to a region associated with a different mode via local steps. In the sequel we make specific choices for the adaptive scheme, the local and jump kernels, as well as the burn-in routine used for setting up initial values of the parameters of the algorithm. However, the design is modular and different approaches can be incorporated in the framework. Besides, it allows for a multicore implementation of a large part of the algorithm.

This approach motivates introducing the Auxiliary Variable Adaptive MCMC class, where not only transition kernels are allowed to be modified on the fly, but also the augmented target distributions. It turns out that apart from our method, there is a wide range of algorithms that belong to this class, including adaptive parallel tempering or adaptive versions of pseudo-marginal MCMC. Thus our general ergodicity results, proved for the whole class under standard regularity conditions, can potentially be useful for analysing other methods.

The remainder of the paper is organised as follows. In Section \ref{section:adapt_mcmc} we present our algorithm, the Jumping Adaptive Multimodal Sampler (JAMS) and discuss its properties. In Section \ref{section:auxiliary_variable} we define  the Auxiliary Variable Adaptive MCMC class and establish convergence in distribution and a Weak Law of Large Numbers for this class, under the uniform and the non-uniform scenario. We present theoretical results specialised to the case of our proposed algorithm in Section \ref{section:ergodicty_amcmc}. Ergodicity is derived here from the analogues of the Containment and Diminishing Adaptation conditions introduced in \cite{roberts2007coupling}, as opposed to identifying regeneration times, which allows us to circumvent the issues described above.  The proofs of all our theorems along with some additional comments about the theoretical results are gathered in Supplementary Material A. Section \ref{section:examples} demonstrates the performance of our method on two synthetic and one real data example. Additional details of our numerical experiments are available in Supplementary Material B. We conclude with a summary of our results in Section \ref{section:summary}.

\section{Jumping Adaptive Multimodal Sampler (JAMS)}~\label{section:adapt_mcmc}~\subsection{Main algorithm}\label{subsection:main_algorithm}
Let $\pi$ be the multimodal target distribution of interest defined on $(\mathcal{X}, \mathcal{B}(\mathcal{X}))$. We introduce a collection of target distributions $\{\tpig \}_{\giY}$ on the augmented state space $\mathcal{X} \times \mathcal{I}$, where $\mathcal{I}:= \{1, \ldots, N\}$ is the finite set of indices of the modes of $\pi$. We defer the discussion about finding the modes to Section \ref{subsection:burn_in_algorithm}. Here $\gamma$ denotes the design parameter of the algorithm that may be adapted on the fly. For a fixed $\giY$, $\tpig$ is defined as
\begin{equation}\label{eq:pi_gamma_defined}
\tpig(x,i): = \pi(x) \frac{\wgi Q_i(\mu_i, \Sigma_{\gamma,i})(x)}{\sum_{j\in \mathcal{I}} \wgj Q_j(\mu_j, \Sigma_{\gamma,j})(x)},
\end{equation}
where $Q_i(\mu_i, \Sigma_{\gamma,i})$ is an elliptical distribution (such as the normal or the multivariate $t$ distribution) centred at $\mu_i$ with covariance  matrix $\Sigma_{\gamma, i}$. We shall think of $\{\mu_i\}_{i\in \mathcal{I}}$ and  $\{\Sigma_{\gamma, i}\}_{i\in \mathcal{I}}$ as locations and covariances of the modes of $\pi$, respectively.
Firstly, notice that constructing a Markov chain targeting~$\tpig$ provides a natural way of identifying the mode at each step by recording the auxiliary variable $i$. Besides, for each $B \in \mathcal{B}(\mathcal{X})$ and $\giY$  we have
\begin{equation}\label{eq:jams_main_equation}
\tpig(B \times \mathcal{I}) = 
\int_{B}\sum_{i \in \mathcal{I}} \pi(x) \frac{\wgi Q_i(\mu_i, \Sigma_{\gamma,i})(x)}{\sum_{j\in \mathcal{I}} \wgj Q_j(\mu_j, \Sigma_{\gamma,j})(x)} dx=  \int_{B}  \pi(x) \cdot 1  dx= \pi(B).
\end{equation}
Hence, $\pi$ is the marginal distribution of $\tpig$ for each $\giY$, so sampling from~$\tpig$ can be used to generate samples from $\pi$.

The sampling algorithm that we propose is summarised in Algorithm \ref{alg:mode_jumping}.  The method relies on MCMC steps of two types, performed with probabilities $1-\epsilon$ and $\epsilon$, respectively.

\begin{itemize}
	\item  \textbf{Local move}: Given the current state of the chain $(x, i)$ and the current parameter $\gamma$, a local kernel $\tP_{\gamma, L, i}$ invariant with respect to $\tpig$ is used to update $x$, while $i$ remains fixed, hence the mode is preserved. 
	\item  \textbf{Jump move}: Given the current state of the chain $(x, i)$ and the current parameter $\gamma$, a new mode $k$ is proposed with probability $\agik$. Then a new point $y$ is proposed using a distribution $\RJik(x, \cdot)$. The new pair is accepted or rejected using the standard Metropolis-Hastings formula such that jump kernel is invariant with respect to $\tpig$.
\end{itemize}

Our  choice for the local kernel is Random Walk Metropolis (RWM) with proposal $\RLi(x, \cdot)$ that follows either the normal or the $t$ distribution. This allows us to employ well-developed adaptation strategies for RWM and build on its stability properties to establish ergodicity of JAMS in Section \ref{section:ergodicty_amcmc}. However, in practice any other MCMC kernel, such as MALA or HMC, may be used. The standard Metropolis-Hastings acceptance probability formula that admits  $\tpig$ as the  invariant distribution for the local move becomes:
\begin{align}\label{eq:local_move_acceptance}
\begin{split}
\alpha_{\gamma, L}\left((x, i) \to (y, i)\right) & = \min \left[1, \frac{\tpig(y, i)}{\tpig(x, i)}\right] \\
&=   \min \left[1, \frac{\pi(y) Q_i(\mu_i, \Sigma_{\gamma,i})(y)}{\pi(x) Q_i(\mu_i, \Sigma_{\gamma,i})(x)}  \frac{\sum_{j\in \mathcal{I}} \wgj Q_j(\mu_j, \Sigma_{\gamma,j})(x)}{\sum_{j\in \mathcal{I}} \wgj Q_j(\mu_j, \Sigma_{\gamma,j})(y)}\right].
\end{split}
\end{align}

As for the jump moves, we consider two different methods of proposing a new point $y$ associated with mode $k$. The first one, which we call \emph{independent proposal jumps}, is to draw $y$ from an elliptical distribution centred at $\mu_k$ with covariance matrix $\Sigma_{\gamma, k}$, independently from the current point $(x,i)$. Since there is no dependence on $x$ and $i$, in case of independent proposal jumps the proposal distribution to mode $k$ will be denoted by $\RJk (\cdot)$.  For independent proposal jumps, the acceptance probability is equal to
\begin{align} \label{eq:independent_jump_move_acceptance}
\begin{split}
\alpha_{\gamma, J}\left((x, i) \to (y, k)\right) & = \min \left[1, \frac{\tpig(y, k)}{\tpig(x, i)} \frac{\agki \RJi(x)} {\agik \RJk(y)} \right].
\end{split}
\end{align}

Alternatively, given that the current state is $(x, i)$, we can propose a "corresponding" point $y$ in mode $k$ such that
\begin{equation*}
(x-\mu_i)^T \Sigma_{\gamma, i}^{-1} (x-\mu_i) = 	(y-\mu_k)^T \Sigma_{\gamma, k}^{-1} (y-\mu_k).
\end{equation*}
The required equality is satisfied for 
\begin{equation}\label{eq:deterministic_proposal}
y : = \mu_k + \Lambda_{\gamma, k}\Lambda_{\gamma,i}^{-1}(x - \mu_i),
\end{equation}
where 
$$
\Sigma_{ \gamma, i} = \Lambda_{\gamma,i } \Lambda_{ \gamma, i}^T \quad \text{and} \quad \Sigma_{\gamma, k} = \Lambda_{\gamma,k} \Lambda_{\gamma, k}^T.
$$
Herein this method will be called \emph{deterministic jumps}. The acceptance probability is then given by
\begin{equation}\label{eq:acc_prob_deterministic}
\alpha_{\gamma, J}\left((x, i) \to (y, k)\right) = 	\min\left[1, \frac{\tilde{\pi}(y,k)}{\tilde{\pi}(x,i)} \frac{\agki \sqrt{\det \Sigma_{\gamma,k}}}{\agik \sqrt{\det \Sigma_{\gamma,i}} } \right].
\end{equation}
Note that in both cases the design of the jump moves takes into account the shapes of the two modes involved, which helps achieving high acceptance rates and consequently improves the between-mode mixing.

\begin{algorithm}[!ht]
	\caption{JAMS: main algorithm prototype, iteration $n+1$}\label{alg:mode_jumping}
	\begin{algorithmic}[1]	
		\State {\textbf{Input:} current point $(x_n, i_n)$, list of modes $\{\mu_1, \ldots \mu_N \}$, constant $\epsilon \in (0,1)$, parameter $\gamma_{n} = \lbrace\Sigma_{\gamma_{n},i}, w_{\gamma_{n},i}, a_{\gamma_{n}, ik}\rbrace_{i,k \in \{1, \ldots, N\}}$, empirical means $m_1, \ldots, m_N$ and covariance matrices $S_1, \ldots, S_N$.} 
		\State {Generate $u \sim U[0,1]$.}
		\If {$u > \epsilon$}
		\State  {\textbf{Local move}:}
		\State{Propose a new value $y \sim R_{\gamma_{n}, L, i_n}(x_n, \cdot)$.}
		\State{Accept $y$ with probability $\alpha_{\gamma_{n}, L}\left((x_n, i_n) \to (y, i_n) \right)$}.
		\If {$y$ accepted}
		\State{$(x_{n+1}, i_{n+1}) := (y, i_n)$.}
		\Else
		\State{$(x_{n+1}, i_{n+1}) := (x_n, i_n)$.}
		\EndIf
		\Else
		\State  {\textbf{Jump move:}}
		\State {Propose a new mode $k \sim (a_{\gamma_n, i1}, \ldots, a_{\gamma_n, iN})$. }
		\State{Propose a new value $y \sim R_{\gamma_{n},J,ik}(x_n, \cdot)$.}
		\State{Accept $(y, k)$ with probability $\alpha_{\gamma_{n}, J}\left((x_n, i_n) \to (y, k) \right)$.}
		\If {$(y, k)$ accepted}
		\State{$(x_{n+1}, i_{n+1}) := (y, k)$.}
		\Else
		\State{$(x_{n+1}, i_{n+1}) := (x_n, i_n)$.}
		\EndIf
		\EndIf
		\State{Update the empirical mean $m_{i_{n+1}}$ and covariance matrix $S_{i_{n+1}}$ by including $x_{n+1}$.}
		\State{Update the parameter $\gamma_{n}$ to $\gamma_{n+1}$ according to Algorithm \ref{alg:params_updates}.}
		\State \Return {New sample $(x_{n+1}, i_{n+1})$, parameter $\gamma_{n+1}$, $m_{i_{n+1}}$ and $S_{i_{n+1}}$.}
	\end{algorithmic}
\end{algorithm}

As presented in Algorithm \ref{alg:mode_jumping},  the method involves learning the parameters on the fly. We design an adaptation scheme of three lists of parameters: covariance matrices (used both for adapting the target distribution $\tpig$ and the proposal distributions), weights $\wgi$ and probabilities $\agik$ of proposing mode $k$ in a jump from mode $i$. Hence, formally  $\mathcal{Y}$ refers to the product space of $\Sigma_{\gamma,i}$, $w_{\gamma,i}$ and $a_{\gamma,ik}$ for $i, k \in \{1, \ldots, N \}$ restricted by $\sum_{j \in \mathcal{I}} \wgj =1$ and $\sum_{k \in \mathcal{I}} \agik =1$ for each $\giY$ and each $i \in \mathcal{I}$. An adaptive scheme for $w_{\gamma,i}$ and $a_{\gamma,ik}$ that follows an intuitive heuristic is discussed briefly in Section~10  of Supplementary Material B.

\begin{algorithm}[ht]
	\caption{Updating the parameters, iteration $n+1$}\label{alg:params_updates}
	\begin{algorithmic}[1]	
		\State {\textbf{Input:} (in addition to the parameters of iteration $n+1$ of the main algorithm) number of samples observed so far in each mode $n_1, \ldots, n_N$, auxiliary matrices $\tilde{\Sigma}_{1}, \ldots, \tilde{\Sigma}_{N}$, positive integers $AC_1$ and $AC_2$, constants $\alpha, \alpha_{\text{opt}} \in(0,1)$, $\beta>0$. }
		\If {$n_{i_{n+1}}< AC_1$  }
		\If{\textbf{Local move}}
		\State{$\tilde{\Sigma}_{i_{n+1}}: = \exp \left(n_{i_{n+1}}^{-\alpha} \left(\alpha_{\gamma_n,L} - \alpha_{\text{opt}}\right)\right) \tilde{\Sigma}_{i_{n+1}} $.}
		\State{$\Sigma_{\gamma_{n+1}, i_{n+1}} := \tilde{\Sigma}_{i_{n+1}}+~\beta I_d$. }
		\EndIf
		\Else
		\If {$n_{i_{n+1}}$ is divisible by $AC_2$} 
		\State{$\Sigma_{\gamma_{n+1}, i_{n+1}} := S_{i_{n+1}}+\beta I_d.$} 
		\State{Update $w_{\gamma_{n},i}$ and $a_{\gamma_{n}, ik}$  to $w_{\gamma_{n+1},i}$ and $a_{\gamma_{n+1}, ik}$ for $i,k=1, \ldots, N$.}
		\EndIf
		\EndIf
		
	\end{algorithmic}
\end{algorithm}

Our method of adapting the covariance matrices $\Sigma_{\gamma,i}$ is presented in Algorithm~\ref{alg:params_updates}. For every $i \in \mathcal{I}$ the matrix $\Sigma_{\gamma,i}$ is based on the empirical covariance matrix of the samples from the region associated with mode $i$ obtained so far. This is possible in our framework by keeping track of the auxiliary variable $i$. Updates are performed every certain number of iterations (denoted by $AC_2$ in Algorithm \ref{alg:params_updates}). This method follows the classical Adaptive Metropolis methodology (cf. \cite{haario2001adaptive, roberts2009examples}) applied separately to the covariance structure associated with each mode. For the local proposal distributions the covariance matrices are additionally scaled by the factor $2.38^2/d$, which is commonly used as optimal for Adaptive Metropolis algorithms \cite{roberts1997weak, roberts2001optimal}. Since representing a covariance matrix in high dimensions reliably typically requires a large number of samples, we do not apply this method straight away. Instead, we perform adaptive scaling, aiming to achieve the optimal acceptance rate (typically  fixed at 0.234; see \cite{roberts1997weak, roberts2001optimal}) for local moves, until the number of samples observed in a given mode exceeds a pre-specified constant (denoted by $AC_1$ in Algorithm \ref{alg:params_updates}).

It is worth outlining that this special construction of the target distribution $\tpig$ makes it unlikely for the algorithm to escape  via local steps from the mode it is assigned to and settle in another one. Indeed, if a proposed point $y$ is very distant from the current mode $\mu_i$ and close to another mode $\mu_k$, the acceptance probability becomes very small due to the expression $Q_i(\mu_i, \Sigma_{\gamma,i})(y)$ in the numerator of (\ref{eq:local_move_acceptance})  and $Q_k(\mu_k, \Sigma_{\gamma,k})(y)$ in the denominator, as $Q_i(\mu_i, \Sigma_{\gamma,i})(y)$ will typically be tiny in such case and $Q_k(\mu_k, \Sigma_{\gamma,k})(y)$ will be large. This allows for controlling from which mode a given state of the chain was sampled. The property of our algorithm described above is crucial for its efficiency as it enables estimating  matrices $ \Sigma_{\gamma,i}$ based on samples that are indeed close to mode $\mu_i$, which in turn improves both the within-mode and the between-mode mixing. If we were working directly with $\pi$, the corresponding acceptance probability would be given by $\min\left[1, \frac{\pi(y)}{\pi(x)} \right]$ and we would not have a natural mechanism for preventing the sampler from visiting different regions via local moves.

\subsection{Burn-in algorithm}\label{subsection:burn_in_algorithm}
Note that Algorithm \ref{alg:mode_jumping} takes mode locations $\{\mu_1, \ldots, \mu_N\}$ and initial values of the matrices $\{\Sigma_{\gamma_0,1}, \ldots, \Sigma_{\gamma_0, N} \}$ as input. Recall also that further improvements in the estimation of $\Sigma_{\gamma, i}$ are possible after some samples in mode $i$ have been observed (see Algorithm \ref{alg:params_updates}). Hence, the  matrices $\{\Sigma_{\gamma_0,1}, \ldots, \Sigma_{\gamma_0, N}\}$ need to represent well the shapes of the corresponding modes so that jumps to all the modes are accepted reasonably quickly.

We address the issues of finding the local maxima of $\pi$, setting up the starting values of the covariance matrices and other aspects of the implementation of our method by introducing a burn-in algorithm, summarised by Algorithm~\ref{alg:burn_in_algorithm}. 

The burn-in algorithm  runs in advance, before the main MCMC sampler (Algorithm~\ref{alg:mode_jumping}) is started, in order to provide initial values of the parameters. Since it needs to find the locations of the modes of $\pi$, and this may be arbitrarily hard, one may prefer a version of this method in which the burn-in algorithm continues running in parallel to the main sampler on multiple cores. These cores communicate with the main sampler every certain number of iterations so that it can incorporate recently discovered modes into the augmented target distribution~$\tpig$.  For clarity of presentation, we focus on the sequential setting where the burn-in routine runs before the main algorithm, and in Sections \ref{section:auxiliary_variable} and \ref{section:ergodicty_amcmc} we develop ergodic theory that covers this case. However, the ergodic theory is immediately applicable to the version where the burn-in and the main algorithm run in parallel, as explained in Remark~\ref{rem:parallel_theory}. 

We sketch different stages of the burn-in routine below, in Sections \ref{subsubsection:starting_points} -- \ref{subsubsection:initial_matrix_estimation}, additional details are given in Supplementary Material B. The flowchart illustrating how the full algorithm works is shown in Figure \ref{fig:flowchart}. 

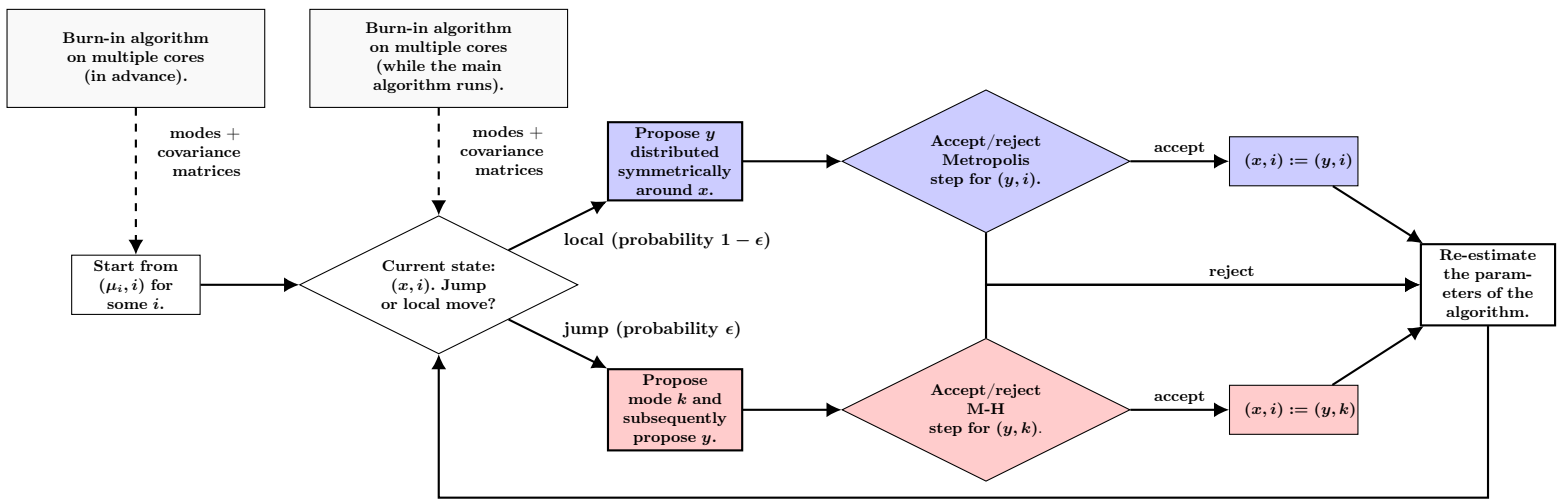
\begin{figure}
	\begin{tikzpicture}[
	scale=0.65, every node/.style={transform shape},
	trim left = -3.0cm,
	node distance =1cm and 2cm,
	process/.style = {rectangle, draw, fill=white,  thick,
		minimum width=1cm, minimum height=1cm, align=center, text width =25mm},
	mode/.style = {rectangle, draw, fill=white,
		minimum width=3cm, minimum height=2cm, align=center, text width =50mm},
	process2/.style = {rectangle, draw = none, fill=none, 
		minimum width=3cm, minimum height=1cm, align=center, text width =20mm},
	process1/.style = {rectangle, draw, fill=white,
		minimum width=2.6cm, minimum height=1cm, align=center, text width =20mm},
	decision/.style ={diamond, draw, fill=white, 
		minimum width=3cm, minimum height=1cm, align=center,aspect =2, text width = 30mm},
	arrow/.style = { thick, -stealth, -{Latex[length=2mm, width = 2mm]} },
	core/.style = {rectangle, draw = none, fill=none,
		minimum width=3cm, minimum height=1cm, align=center, text width = 250mm},
	arrownew/.style = { thick, -stealth, -{Latex[length=2mm, width = 2mm]}, dashed}
	]
	
	
	\node (dec1)  [decision]   {\textbf{Current state: $\boldsymbol{(x,i)}$. Jump or local move?}};
	\node (local1)  [process,above right=of dec1, fill =blue!20]   {\textbf{Propose $\boldsymbol{y}$ distributed symmetrically around $\boldsymbol{x}$.}};
	
	\node (start)  [process1,left=of dec1]{\textbf{Start from $\boldsymbol{(\mu_i,i)}$ for some  $\boldsymbol{i}$.}};
	\node (mode2)  [mode, above = 3cm of start, fill = gray!5] {\textbf{Burn-in algorithm \\ on multiple cores\\ (in advance).}};
	\node (mode)  [mode,  right= 0.9cm of mode2, fill = gray!5] {\textbf{Burn-in algorithm \\ on multiple cores \\ (while the main algorithm runs).}};
	
	\node (pro6b) [process2, above left= 4cm of dec1]   {};
	\node (pro7b) [process2, above right= 4cm of dec1]   {};
	
	\node (local2)  [decision,right=of local1, fill = blue!20]   {\textbf{Accept/reject Metropolis} \\ \textbf{step for $\boldsymbol{(y,i)}$.}};
	\node (local3)  [process1,right=of local2, fill = blue!20]   {$\boldsymbol{(x,i) := (y,i)}$};
	\node (jump1)  [process,below right=of dec1, fill = red!20]   {\textbf{Propose mode $\boldsymbol{k}$  and subsequently propose $\boldsymbol{y}$.}};
	\node (jump2)  [decision, right=of jump1, fill = red!20]   {\textbf{Accept/reject M-H\\ step for} $\boldsymbol{(y,k)}$.};
	
	\node (jump3)  [process1,right=of jump2, fill = red!20]   {$\boldsymbol{(x,i):= (y,k)}$};
	
	\node (pro2b) [process2, right=of dec1]   {};
	\node (pro3b) [process2, right=of pro2b]  {};
	\node (pro4b) [process2, right=of pro3b]  {};
	
	\node (reestimate)  [process, right = of pro4b]   {\textbf{Re-estimate the parameters of the algorithm.}};

	\draw [arrow] (dec1) -+ (jump1.north west) node[midway,above right] {\small{\textbf{jump (probability $\boldsymbol{\epsilon}$)}}};
	\draw [arrow] (dec1) -+ (local1.south west) node[midway,below right] {\small{\textbf{local (probability $\boldsymbol{1-\epsilon}$)}}};
	\draw [arrow] (local1) -+ (local2);
	\draw [arrow] (local2) -+ (local3)node[midway,above]{\textbf{accept}};
	\draw [arrow] (jump1) -+ (jump2);
	\draw [arrow] (jump2) -+ (jump3)node[midway,above]{\textbf{accept}};
	\draw [arrow] (jump3) -+ (reestimate);
	\draw [arrow] (local3) -+ (reestimate);
	\draw [arrow] (reestimate.south) --+ (0cm,-3.5cm) -|(dec1.south);
	\draw [arrow] (local2.south) --+ (0cm,-1cm) |-(reestimate)node[near end,above right]{\textbf{reject}};
	\draw [arrow] (jump2.north) --+ (0cm,0cm) |-(reestimate);

	\draw [arrownew] (mode2) --+ (start)  node[pos = 0.3, align = right, right, text width = 2cm]{\textbf{modes + covariance matrices}};
	\draw [arrownew] (mode) --+ (dec1)node[pos = 0.4, align = right, right, text width = 2cm]{\textbf{modes + covariance matrices}};
	\draw [arrow] (start) --+ (dec1);
	
	\end{tikzpicture}
	\caption{\footnotesize Flowchart illustrating the communication between the main algorithm and the burn-in algorithm (running before the main algorithm is initialised and in parallel). }\label{fig:flowchart}
\end{figure}
\begin{algorithm}[!ht]
	\caption{JAMS: burn-in algorithm}\label{alg:burn_in_algorithm}
	\begin{algorithmic}[1]	
		\State {\textbf{Input:}  end points of the intervals $L_1, U_1, \ldots,L_d, U_d$ (or the prior $\pi_0$), number of starting points $n$, small positive value $q$, threshold $b_{\text{acc}}$.}
		\State {Generate $s_1, \ldots, s_n$ uniformly from $[L_1, U_1] \times \ldots \times [L_d, U_d]$ (or $s_i \sim \pi_0$ for $i = 1, \ldots, n.$)}
		\State{Run BFGS from $s_1,  \ldots, s_n$ to minimize $- \log(\pi(x))$.} \Comment{ \textbf(in parallel)}
		\State{Denote the optimum points by $m_1, \ldots, m_n$ and their corresponding Hessian matrices by $H_1, \ldots, H_n$.}
		\State{Set $\mu_1:= m_1$, $H_{\mu_{1}}:= H_1$, $N:=1$.}
		\For{$i = 2, \ldots, n$}
		\If {$\min_{j \in 1, \ldots, N} \frac{1}{2} \left((\mu_j-m_i)^T H_{\mu_j} (\mu_j-m_i) + (\mu_j-m_i)^T H_i (\mu_j-m_i)\right)<q$}
		\State{$k := \arg\min_{j \in 1, \ldots, N} \frac{1}{2} \left((\mu_j-m_i)^T H_{\mu_j} (\mu_j-m_i) + (\mu_j-m_i)^T H_i (\mu_j-m_i)\right)$}
		\If{$\pi(\mu_k)< \pi(m_i)$}
		\State{Set $\mu_k: =m_i$ and $H_{\mu_k}: =H_i$.}
		\EndIf
		\Else \State{Set $\mu_{N+1}:= m_i$ and $H_{\mu_{N+1}}:= H_i$.}
		\State{Set $N:=N+1$.}
		\EndIf
		\EndFor
		\State{Define $\Sigma_{0,i} := H_{\mu_i}^{-1}$ for $i =1, \ldots, N$. }
		\State{Set $k=0$ and define $\tpi(x,i) := \pi(x) \frac{\frac{1}{N} Q_i(\mu_i, \Sigma_{0,i})(x)}{\sum_{j=1}^N \frac{1}{N} Q_j(\mu_j, \Sigma_{0,j})(x)}$.}
		
		\While{inhomogeneity factors $b_{k,1}, \ldots, b_{k,N}$ satisfy: $\max_{i \in \{1, \ldots, N \}} b_{k,i} > b_{\text{acc}}$}
		\State{k:= k+1}
		\State{Run Algorithm \ref{alg:mode_jumping} targeting $\tpi$ with $\epsilon =0$  from  $\mu_1, \ldots \mu_N$.}\Comment{ \textbf(in parallel)}
		\State{Define $\Sigma_{k,i}$ for $i =1, \ldots N$ as the matrix of mode $\mu_i$ updated in round $k$.}
		\State{Update the target distribution $\tpi$ by setting
			$\tpi(x,i) := \pi(x) \frac{\frac{1}{N} Q_i(\mu_i, \Sigma_{k,i})(x)}{\sum_{j=1}^N \frac{1}{N} Q_j(\mu_j, \Sigma_{k,j})(x)}$.}
		\EndWhile
		\State{Set $K:=k$ and $\Sigma_{\gamma_0,i} := \Sigma_{K,i}$, $\tilde{\Sigma}_{i}: = \Sigma_{K,i}$, $S_i:= \Sigma_{K,i}$ for $i =1, \ldots, N$.}
		\State \Return {List of $\{\mu_i, \Sigma_{\gamma_0,i}, \tilde{\Sigma}_{i}, S_i\}$ for $i=1, \ldots, N$.\newline Target distribution $\tpi_{\gamma_{0}}(x,i) := \pi(x) \frac{\frac{1}{N}Q_i(\mu_i, \Sigma_{\gamma_{0},i})(x)}{\sum_{j=1}^N \frac{1}{N} Q_j(\mu_j, \Sigma_{\gamma_{0},j})(x)}$ that will be used as input for Algorithm~\ref{alg:mode_jumping}.}
	\end{algorithmic}
\end{algorithm}
\subsubsection{Starting points for the optimisation procedure}\label{subsubsection:starting_points}
We  sample the starting points for optimisation searches uniformly on a compact set  which is a product of intervals provided by the user
$$
[L_1, U_1] \times \ldots \times [L_d, U_d],
$$ 
where $d$ is the dimension of the state space $\mathcal{X}$. Note that if the domain of attraction of each mode overlaps with $[L_1, U_1] \times \ldots \times [L_d, U_d]$, then asymptotically all modes will be found, as we will have at least one starting point in each domain.

When dealing with Bayesian models, one can alternatively sample the starting points from the prior distribution.

\subsubsection{Mode finding via an optimisation procedure}\label{subsubsection:BFGS}
The BFGS optimisation algorithm \cite{nocedal2006numerical} is initiated from every starting point. The BFGS method method provides the optimum point and the Hessian matrix at this point which is particularly useful in the next step of mode merging.

For numerical reasons, instead of working directly with $\pi$, we typically use the BFGS algorithm to find the local minima of $-\log(\pi)$.

\subsubsection{Mode merging}\label{subsubsection:mode_merging} 
Starting the optimisation procedure from different points belonging to the same basin of attraction will take us to points which are close to the true local maxima, but numerically different, an issue that seems to be ignored in optimisation-based MCMC literature. 

We deal with this in a heuristic way (lines 5-16 of Algorithm \ref{alg:burn_in_algorithm}) by classifying two vectors $m_i$ and $m_j$ as corresponding to the same mode if the squared Mahalanobis distance between them is smaller than some pre-specified value $q$. If we let $H_i$ and $H_j$ denote the Hessian matrices of $-\log(\pi)$ at  $m_i$ and $m_j$, respectively, the above Mahalanobis distance is calculated for $H_i^{-1}$ and $H_j^{-1}$ (for symmetry, we average over these two values). This method is scale invariant as the Hessian captures the local shape and scale.

\subsubsection{Initial covariance matrix estimation}\label{subsubsection:initial_matrix_estimation}
In order to find initial covariance matrix estimates $\Sigma_{\gamma_0,1}, ..., \Sigma_{\gamma_0, N}$ that accurately reflect the geometry of different modes, we employ the augmented target machinery of Algorithm~\ref{alg:mode_jumping} in the following way. We run Algorithm~\ref{alg:mode_jumping} without jumps, i.e. with $\epsilon =0$, in parallel, starting from each of the modes $\mu_1, \ldots, \mu_N$.  This implies that we run $N$ chains and each of them adapts only the matrix $\Sigma_i$ corresponding to the mode $\mu_i$ which was its starting point. We make a number of rounds (denoted by $K$) of this procedure and after each round we update the target distribution $\tilde{\pi}$ by exchanging the knowledge about the adapted covariance matrices between cores. The final covariance matrices passed to the main MCMC sampler are calculated based on the samples collected in all rounds.

The reason why we exchange information between rounds, despite the additional cost of communication between cores, is that we want the sampler adapting $\Sigma_{k,i}$ to know where the regions associated with other modes are so that it is less likely to visit those regions and contaminate the estimate. Essentially the initial covariance estimation revisits the problem of collecting samples only from the corresponding regions, discussed in the previous parts of this paper.

The initial value of the matrix corresponding to mode $i$ is the inverse of the Hessian evaluated at $\mu_i$ (see line 17 of Algorithm~\ref{alg:burn_in_algorithm}). The values of the other parameters of the algorithm, such as  $\alpha$, $\beta$ and $AC_2$, are set to be the same as in the main algorithm. The values of $w_{\gamma, i}$ and $a_{\gamma, ik}$ are not updated during those runs. Besides, $w_{\gamma, i}$ are set to $1/N$.

The intuition for the choice of the number of rounds $K$ of the above procedure is to stop the burn-in algorithm when running an additional round does not yield much improvement in the accuracy of the estimation of $\Sigma_1, \ldots, \Sigma_N$. We use the inhomogeneity factor (see \cite{roberts2009examples} and \cite{rosenthal2011optimal}), a well-established measure of covariance estimation accuracy in the MCMC context, to choose $K$ automatically. We quantify the dissimilarity  between $\Sigma_{k-1, i}$ and $\Sigma_{k,i}$ for $i \in \mathcal{I}$ by their inhomogeneity factor, denoted by $b_{k,i}$, and stop the covariance estimation when this factor drops below a pre-specified threshold $b_{\text{acc}}$ for all $i \in \mathcal{I}$. Details are given in Supplementary Material~B.

\subsection{Further comments}
From the point of view of the fundamental challenges (1-3) discussed in Section \ref{section:introduction}, JAMS deals with (1) through its mode finding stage. Challenges (2) and (3) are addressed via jumps and local moves, respectively. As explained in Section~\ref{subsection:main_algorithm}, the auxiliary variable approach  facilitates moving efficiently between modes as well as accounting for inhomogeneity between them by using different local proposal distributions in different regions.

It is important to point out that the auxiliary variable approach presented above should be thought of as a flexible framework rather than one specific method. The BFGS algorithm used for mode finding could be replaced with another optimisation procedure and similarly, local moves could be performed using a different MCMC sampler, e.g. HMC. One could also consider another scheme for updating the parameters, for example, combining adaptive scaling with covariance matrix estimation (see \cite{vihola2011stability}).

\section{Auxiliary Variable Adaptive MCMC}\label{section:auxiliary_variable}
We introduce a general class of Auxiliary Variable Adaptive MCMC
algorithms, as follows.

Recall that $\pi(\cdot)$ is a fixed target probability density on
$(\mathcal{X}, \mathcal{B}(\mathcal{X}))$. For an auxiliary pair
$(\Phi, \mathcal{B}(\Phi))$, define $\tilde{\mathcal{X}}:=
\mathcal{X} \times \Phi,$ and for an index set $\mathcal{Y}$, consider a family of probability
measures  $\{\tilde{\pi}_{\gamma}(\cdot)\}_{\gamma \in \mathcal{Y}}$ on
$(\tilde{\mathcal{X}},  \mathcal{B}(\tilde{\mathcal{X}})
),$ such that
\begin{equation} \label{marginal_ok}
\tilde{\pi}_{\gamma}(B \times \Phi) = \pi(B) \quad \textrm{for
	every } B \in \mathcal{B}(\mathcal{X})  \textrm{ and } 
\gamma\in \mathcal{Y}.
\end{equation} 
Let $\{ \tilde{P}_{\gamma} \}_{\gamma \in \mathcal{Y}}$ be a collection of
Markov chain transition kernels on $(\tilde{\mathcal{X}},  \mathcal{B}(\tilde{\mathcal{X}})
),$ such that each $\tilde{P}_{\gamma} $ has $\tilde{\pi}_{\gamma}$ as its
invariant distribution  and is Harris ergodic,
i.e. for all 
$\gamma \in \mathcal{Y},$
\begin{equation}\label{ker_erg}
(\tilde{\pi}_{\gamma} \tilde{P}_{\gamma})
(\cdot) = \tilde{\pi}_{\gamma}(\cdot) \textrm{ and }  \lim_{n \to \infty}
\|\tilde{P}^{n}_{\gamma}(\tilde{x}, \cdot) - \tilde{\pi}_{\gamma}(\cdot)\|_{TV}
= 0 \textrm{ for all }  \tilde{x}:= (x,\phi) \in
\tilde{\mathcal{X}}.
\end{equation}
Here
$\| \cdot - \cdot\|_{TV}$ is the usual total variation distance,
defined  
for two probability measures $\mu$ and $\nu$ on a $\sigma-$algebra of
sets $\mathcal{G}$ as $\|\mu(\cdot) -
\nu(\cdot)\|_{TV} = \sup_{B \in
	\mathcal{G}} |\mu(B) - \nu(B)|.$

To define the dynamics of the Auxiliary Variable Adaptive MCMC sequence
$\{(\tilde{X}_n, \Gamma_n)\}_{n=0}^{\infty},$ where $\Gamma$
represents a random variable taking values in $(\mathcal{Y}, \mathcal{B}(\mathcal{Y}))$, denote
its filtration as \[\mathcal{G}_n:=
\sigma \{\tilde{X}_0,\dots, \tilde{X}_n, \Gamma_0, \dots, \Gamma_n
\}.\] Now, the conditional distribution of $\Gamma_{n+1}$ given
$\mathcal{G}_n$ will be specified by the adaptive algorithm being
used, such as Algorithm \ref{alg:mode_jumping}, while the dynamics of the $\tilde{X}$
coordinate follows
\begin{equation}\label{dyn_tilde_x}
\mathbb{P}\big[\tilde{X}_{n+1} \in \tilde{B} | \tilde{X}_{n}=\tilde{x},
\Gamma_n = \gamma, \mathcal{G}_{n-1}\big] = \tilde{P}_{\gamma}(\tilde{x},
\tilde{B}) \textrm{ for } \tilde{x} \in \tilde{\mathcal{X}}, \gamma \in
\mathcal{Y}, \tilde{B} \in \mathcal{B}(\tilde{\mathcal{X}}).
\end{equation} 
Note that depending on the adaptive update rule for $\Gamma_n$, the sequence $\{(\tilde{X}_n, \Gamma_n)\}_{n=0}^{\infty},$
defined above is not necessarily a Markov chain. By
$\tilde{A}_{n}^{\mathcal{G}_t}(\cdot )$ denote the distribution of the $\tilde{\mathcal{X}}$-marginal
of $\{(\tilde{X}_n, \Gamma_n)\}_{n=0}^{\infty}$  at time
$n$, conditionally on the history up to time $t,$ i.e.
\begin{equation*}
\tilde{A}_{n}^{\mathcal{G}_t} (\tilde{B} ):=
\mathbb{P}\big[\tilde{X}_n \in \tilde{B} | \tilde{X}_0 = \tilde{x}_0,
\dots, \tilde{X}_t = \tilde{x}_t,
\Gamma_0 = \gamma_0, \dots, \Gamma_t = \gamma_t \big]
\end{equation*}
for $\tilde{B} \in\mathcal{B}(\tilde{\mathcal{X}})$, and in particular for $t=0$, we shall write
\begin{equation*}
\tilde{A}_{n}^{(\tilde{x}, \gamma)}(\tilde{B}
):=\tilde{A}_{n}^{\mathcal{G}_0} (\tilde{B} ) = 
\mathbb{P}\big[\tilde{X}_n \in \tilde{B} | \tilde{X}_0 = \tilde{x},
\Gamma_0 = \gamma \big] \quad  \textrm{for }\tilde{B} \in\mathcal{B}(\tilde{\mathcal{X}}).
\end{equation*}
By $A_{n}^{\mathcal{G}_t}(\cdot )$ and $A_{n}^{(\tilde{x}, \gamma)}(
\cdot )$ denote the further
marginalisation of $\tilde{A}_{n}^{\mathcal{G}_t}(\cdot )$ and $\tilde{A}_{n}^{(\tilde{x}, \gamma)}(
\cdot )$, respectively, onto the space of interest $\mathcal{X},$ where the
target measure $\pi(\cdot)$ lives, namely
\begin{equation*}
A_{n}^{\mathcal{G}_t}( B ):= \tilde{A}_{n}^{\mathcal{G}_t}( B
\times \Phi) \textrm{ and } A_{n}^{(\tilde{x}, \gamma)} ( B ):= \tilde{A}_{n}^{(\tilde{x}, \gamma)} ( B
\times \Phi), \quad \textrm{for } B \in\mathcal{B}(\mathcal{X}).
\end{equation*}
Finally, in order to define ergodicity of the Auxiliary Variable
Adaptive MCMC, let
\begin{equation*}
T_{n}(\tilde{x}, \gamma) := \|A_{n}^{(\tilde{x}, \gamma)}( \cdot ) -
\pi(\cdot)\|_{TV} = \sup_{B \in \mathcal{B}(\mathcal{X})}
|A_{n}^{(\tilde{x}, \gamma)}( B ) - \pi(B)|.
\end{equation*}

\begin{defi}\label{aux_erg} We say that the Auxiliary Variable
	Adaptive MCMC algorithm generating $\{(\tilde{X}_n,
	\Gamma_n)\}_{n=0}^{\infty},$ is \emph{ergodic}, if
	\begin{equation*}
	\lim_{n \to \infty} T_{n}(\tilde{x}, \gamma) =0 \quad \textrm{for
		all} \quad \tilde{x} \in\tilde{\mathcal{X}}, \gamma \in \mathcal{Y}.
	\end{equation*}
\end{defi}
As we shall see in Section \ref{section:ergodicty_amcmc}, JAMS belongs to the class defined above. There exist other algorithms falling into this category, therefore the results presented in this paper, in particular Theorems \ref{thm:uniform}, \ref{thm:non_uniform} and \ref{thm:lln}, may be useful for analysing their ergodicity. Examples of other algorithms in this class include adaptive parallel tempering \cite{miasojedow2013adaptive} and adaptive versions of pseudo-marginal algorithms \cite{andrieu2009pseudo, andrieu2015convergence}. A more detailed discussion on this may be found in Supplementary Material A. 
\subsection{Theoretical results for the class}
The two main approaches to verifying ergodicity of Adaptive MCMC are based on martingale approximations \cite{MR2260070,  MR3012408, MR2648752} or coupling \cite{roberts2007coupling}. Here we extend the latter to the Auxiliary Variable Adaptive MCMC class by constructing explicit couplings. In particular, ergodicity of this class of algorithms will be verified for the uniform and the non-uniform case, providing results analogous to Theorems 1 and 2 of
\cite{roberts2007coupling}.

For the uniform case analogues
of the usual conditions of Simultaneous Uniform Ergodicity and
Diminishing Adaptation will be required. 

\begin{thm}[Ergodicity -- uniform case] \label{thm:uniform} Consider an Auxiliary Variable
	Adaptive MCMC algorithm on a state space \mbox{$\tilde{\mathcal{X}} =
		\mathcal{X} \times \Phi$,} following dynamics \eqref{dyn_tilde_x} with 
	a family
	of transition kernels $\{ \tilde{P}_{\gamma} \}_{\gamma \in
		\mathcal{Y}}$ satisfying  \eqref{marginal_ok} and \eqref{ker_erg}. If~conditions (a) and (b) below are
	satisfied, then the algorithm is ergodic in the sense of Definition~\ref{aux_erg}.
	\begin{enumerate}
		\item[(a)] (Simultaneous Uniform Ergodicity). For all $\varepsilon >
		0,$ there exists $N= N(\varepsilon) \in \mathbb{N}$ such that
		\[
		\|\tilde{P}_{\gamma}^N(\tilde{x}, \cdot) -
		\tilde{\pi}_{\gamma}(\cdot)\|_{TV} \leq \varepsilon, \qquad
		\textrm{for all} \quad \tilde{x} \in \tilde{\mathcal{X}} \; \textrm{
			and } \; \gamma \in \mathcal{Y}.
		\]  
		\item[(b)] (Diminishing Adaptation). The random variable
		\[
		D_n:= \sup_{\tilde{x} \in \tilde{\mathcal{X}}} \|
		\tilde{P}_{\Gamma_{n+1}}(\tilde{x}, \cdot) - \tilde{P}_{\Gamma_{n}}(\tilde{x}, \cdot) \|_{TV}
		\] converges to $0$ in probability.
	\end{enumerate}
\end{thm}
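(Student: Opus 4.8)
The plan is to extend the coupling argument of Roberts and Rosenthal (Theorem 1 of \cite{roberts2007coupling}) to the present setting, where the genuine obstacle is that the invariant law $\tpig$ on $\tilde{\mathcal{X}}$ now depends on $\gamma$, so a frozen-parameter chain equilibrates to a moving target. The decisive observation is that condition \eqref{marginal_ok} forces every $\tpig$ to share the single $\mathcal{X}$-marginal $\pi$; hence, although a chain run with a frozen parameter $\gamma^\ast$ equilibrates to the $\gamma^\ast$-dependent augmented target $\tilde{\pi}_{\gamma^\ast}$, its projection onto $\mathcal{X}$ always equilibrates to $\pi$. This is exactly what lets the fixed-target coupling machinery go through unchanged at the level of the $\mathcal{X}$-marginal, which is all that Definition~\ref{aux_erg} requires.

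Fix $\tilde{x} \in \tilde{\mathcal{X}}$, $\gamma \in \mathcal{Y}$ and $\varepsilon > 0$. By Simultaneous Uniform Ergodicity (a), choose $N = N(\varepsilon/2)$ so that $\|\tilde{P}_{\gamma'}^N(\tilde{y}, \cdot) - \tilde{\pi}_{\gamma'}(\cdot)\|_{TV} \le \varepsilon/2$ uniformly in $\tilde{y} \in \tilde{\mathcal{X}}$ and $\gamma' \in \mathcal{Y}$. For $n \ge N$ I would compare the adaptive chain with a \emph{frozen} chain $\{\tilde{Z}_m\}_{m=n-N}^{n}$, built on the same probability space, which coincides with the adaptive chain at time $n-N$ (so $\tilde{Z}_{n-N} = \tilde{X}_{n-N}$) but then evolves for the remaining $N$ steps using the single kernel $\tilde{P}_{\Gamma_{n-N}}$ in place of the changing kernels $\tilde{P}_{\Gamma_{n-N}}, \dots, \tilde{P}_{\Gamma_{n-1}}$. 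Writing $A_n^{(\tilde{x},\gamma)}$ and $B_n$ for the $\mathcal{X}$-marginals of $\tilde{X}_n$ and $\tilde{Z}_n$, the triangle inequality gives
\[
T_n(\tilde{x},\gamma) = \|A_n^{(\tilde{x},\gamma)} - \pi\|_{TV} \le \|A_n^{(\tilde{x},\gamma)} - B_n\|_{TV} + \|B_n - \pi\|_{TV},
\]
and it suffices to drive each term below $\varepsilon/2$.

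For the second term I would condition on $\tilde{X}_{n-N} = \tilde{y}$ and $\Gamma_{n-N} = \gamma^\ast$: the conditional law of $\tilde{Z}_n$ is $\tilde{P}_{\gamma^\ast}^N(\tilde{y}, \cdot)$, which by the choice of $N$ lies within $\varepsilon/2$ of $\tilde{\pi}_{\gamma^\ast}$ in total variation; since marginalisation does not increase total variation and $\tilde{\pi}_{\gamma^\ast}$ projects to $\pi$ by \eqref{marginal_ok}, the $\mathcal{X}$-marginal of $\tilde{Z}_n$ is within $\varepsilon/2$ of $\pi$ conditionally, hence also after integrating out $(\tilde{X}_{n-N}, \Gamma_{n-N})$, so $\|B_n - \pi\|_{TV} \le \varepsilon/2$. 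For the first term I would couple the two chains by maximal coupling at each of the $N$ steps, so that they can only separate when the driving kernels differ; the probability that $\tilde{X}_n \ne \tilde{Z}_n$ is then bounded by $\sum_{j=0}^{N-1} \E\big[\sup_{\tilde{y}}\|\tilde{P}_{\Gamma_{n-N+j}}(\tilde{y},\cdot) - \tilde{P}_{\Gamma_{n-N}}(\tilde{y},\cdot)\|_{TV}\big]$, and telescoping each summand through consecutive parameters, together with the definition of $D_m$, bounds this by $N\sum_{m=n-N}^{n-1}\E[D_m]$. Since $D_m \le 1$ and $D_m \to 0$ in probability by Diminishing Adaptation (b), bounded convergence yields $\E[D_m] \to 0$, so this finite sum tends to $0$ as $n \to \infty$; thus for $n$ large $\|A_n^{(\tilde{x},\gamma)} - B_n\|_{TV} \le \Prob[\tilde{X}_n \ne \tilde{Z}_n] < \varepsilon/2$. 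Combining the two bounds gives $T_n(\tilde{x},\gamma) < \varepsilon$ for all large $n$, which is the claimed ergodicity.

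I expect the main difficulty to be conceptual rather than computational: recognising that the $\gamma$-dependence of the augmented target is harmless precisely because \eqref{marginal_ok} collapses it on $\mathcal{X}$, and handling the fact that the frozen parameter $\gamma^\ast = \Gamma_{n-N}$ is itself random, so Simultaneous Uniform Ergodicity must hold \emph{uniformly} in $\gamma'$, exactly as postulated in (a). The only mildly technical point is the passage from convergence in probability of $D_m$ to convergence of $\E[D_m]$, which is immediate from the uniform bound $D_m \le 1$ by bounded convergence.
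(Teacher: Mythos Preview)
Your proof is correct and follows the same high-level strategy as the paper---freeze the parameter $N$ steps before the target time, use Simultaneous Uniform Ergodicity to push the frozen chain's $\mathcal{X}$-marginal to $\pi$ via \eqref{marginal_ok}, and use Diminishing Adaptation to couple the adaptive and frozen chains---but it streamlines the execution. The paper introduces an auxiliary \emph{intermediate} process $\{(\tilde{X}^{i(t^*,\kappa)}_n,\Gamma^{i(t^*,\kappa)}_n)\}$ whose parameter updates are capped at size $\kappa$, and uses a three-term triangle inequality (Lemmas~\ref{lemma:markovian_conv_unif}, \ref{lemma:adap_inter_tv}, \ref{lemma:markov_inter_tv}): the intermediate agrees with the adaptive process on an event of probability $\ge 1-\varepsilon$ (controlled via $\mathbb{P}(D_m\ge\kappa)$), and is deterministically within $\kappa N^2$ of the Markovian one. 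You collapse this to two terms by coupling the adaptive and frozen chains directly and bounding the decoupling probability by $N\sum_{m=n-N}^{n-1}\E[D_m]$, invoking bounded convergence to pass from $D_m\to 0$ in probability to $\E[D_m]\to 0$. Your route is shorter and avoids the intermediate-process machinery; the paper's route keeps everything in terms of probabilities of events rather than expectations of $D_m$, which makes the coupling construction more explicit and carries over verbatim to the Containment case (Theorem~\ref{thm:non_uniform}) and the WLLN (Theorem~\ref{thm:lln}) without reworking.
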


In fact assumption (a) of Theorem \ref{thm:uniform} can be relaxed. To this end, define the $\varepsilon-$convergence time as
\begin{equation}\label{egn:epsconvtime} M_{\varepsilon}(\tilde{x}, \gamma) := \inf \{k\geq 1: \| \tilde{P}_{\gamma}^k(\tilde{x}, \cdot)-\tilde{\pi}_{\gamma}(\cdot) \|_{TV} \leq \varepsilon \}. \end{equation} It is enough  that the random variable $
M_{\varepsilon}(\tilde{X}_{n}, \Gamma_n)$ is bounded in probability. Precisely, the following ergodicity result holds for the non-uniform case.

\begin{thm}[Ergodicity -- non-uniform case] \label{thm:non_uniform} Consider an Auxiliary Variable
	Adaptive MCMC algorithm, under the assumptions of Theorem \ref{thm:uniform} and replace condition (a) with the following:
	\begin{enumerate}
		\item[(a)](Containment). For all $\varepsilon>0$ and all $\tilde{\delta}>0$, there exists $N=N(\varepsilon, \tilde{ \delta})$ such that 
		\begin{equation}\label{eq:containment_condition_1}
		\mathbb{P}\left(M_{\varepsilon}(\tilde{X}_{n}, \Gamma_n) > N | \tilde{X}_0 =\tilde{x}, \Gamma_0 = \gamma \right) \leq \tilde{ \delta}
		\end{equation}
		for all $n \in \mathbb{N}$.
	\end{enumerate}
	Then the algorithm is ergodic in the sense of Definition \ref{aux_erg}.
\end{thm}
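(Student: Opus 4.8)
The plan is to adapt the coupling argument of \cite{roberts2007coupling} (their Theorem 2) to the auxiliary-variable setting, the essential new ingredient being that although the invariant measures $\tilde{\pi}_{\gamma}$ genuinely depend on $\gamma$, they all share the common $\mathcal{X}$-marginal $\pi$ by \eqref{marginal_ok}. This is precisely what keeps the notion of ergodicity in Definition \ref{aux_erg} — convergence of the $\mathcal{X}$-marginal to the fixed $\pi$ — meaningful despite the moving target on $\tilde{\mathcal{X}}$. The overall structure mirrors the coupling used for the uniform case (Theorem \ref{thm:uniform}), but the uniform convergence-time bound is replaced by the bounded-in-probability bound supplied by Containment.

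First I would fix $\varepsilon>0$ and $\tilde{\delta}>0$ and obtain $N=N(\varepsilon,\tilde{\delta})$ from \eqref{eq:containment_condition_1}. Then, for a large ``freezing time'' $n$, I would compare two coupled processes started from the common state $\tilde{X}_n$: the genuine adaptive chain $\{\tilde{X}_m\}_{m\ge n}$ evolving under the time-varying kernels $\tilde{P}_{\Gamma_m}$, and a frozen chain $\{\tilde{X}_m'\}_{m\ge n}$ that iterates the fixed kernel $\tilde{P}_{\Gamma_n}$ over the next $N$ steps. At each step $m\in\{n,\dots,n+N-1\}$ for which the two chains currently occupy the same point, I would use a maximal coupling of $\tilde{P}_{\Gamma_m}(\tilde{X}_m,\cdot)$ and $\tilde{P}_{\Gamma_n}(\tilde{X}_m,\cdot)$, so that they fail to remain together with probability at most $\|\tilde{P}_{\Gamma_m}(\tilde{X}_m,\cdot)-\tilde{P}_{\Gamma_n}(\tilde{X}_m,\cdot)\|_{TV}\le\sum_{j=n}^{m-1}D_j$.

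Next I would bound the expected coupling-failure probability $\mathbb{P}(\tilde{X}_{n+N}\ne\tilde{X}_{n+N}')$ by $\sum_{m=n}^{n+N-1}\mathbb{E}\sum_{j=n}^{m-1}D_j$, which tends to $0$ as $n\to\infty$ since $N$ is fixed and $\mathbb{E}[D_j]\to0$ by Diminishing Adaptation (using $D_j\le1$ and dominated convergence). On the complementary event $\{M_{\varepsilon}(\tilde{X}_n,\Gamma_n)\le N\}$ — which has probability at least $1-\tilde{\delta}$ by Containment — the monotonicity of the total-variation distance to the invariant law under iteration of $\tilde{P}_{\Gamma_n}$ gives $\|\tilde{P}_{\Gamma_n}^{N}(\tilde{X}_n,\cdot)-\tilde{\pi}_{\Gamma_n}(\cdot)\|_{TV}\le\varepsilon$, so the frozen chain's law at time $n+N$ lies within $\varepsilon$ of $\tilde{\pi}_{\Gamma_n}$. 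Projecting onto $\mathcal{X}$ and invoking \eqref{marginal_ok}, its $\mathcal{X}$-marginal is then within $\varepsilon$ of $\pi$.

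Finally I would assemble the pieces by the triangle inequality: the $\mathcal{X}$-marginal of $\tilde{X}_{n+N}$ differs from that of the frozen chain by at most the coupling-failure probability, the frozen marginal differs from $\pi$ by at most $\varepsilon$, and we lose at most $\tilde{\delta}$ on the Containment bad event. This yields $T_{n+N}(\tilde{x},\gamma)\le\varepsilon+2\tilde{\delta}$ for all $n$ large, whence $\limsup_n T_n(\tilde{x},\gamma)\le\varepsilon+2\tilde{\delta}$, and letting $\varepsilon,\tilde{\delta}\to0$ gives ergodicity. The main obstacle I anticipate is the bookkeeping in the coupling: because $\Gamma_m$, and hence the comparison measure $\tilde{\pi}_{\Gamma_n}$, are random, the Containment event, the coupling-failure event and the conditioning in Definition \ref{aux_erg} must be reconciled simultaneously, and one must check that the step-by-step maximal coupling is realisable measurably while the adaptation keeps evolving. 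The feature that prevents the argument from collapsing — in contrast to a fixed-kernel setting where the invariants would have to agree — is that the final projection to $\mathcal{X}$ replaces the random invariant $\tilde{\pi}_{\Gamma_n}$ by the deterministic $\pi$.
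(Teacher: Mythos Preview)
Your proposal is correct and follows essentially the same coupling strategy as the paper: freeze the adaptation at a late time, couple the adaptive chain to the resulting Markov chain over a window of length $N$ chosen via Containment, and then project to $\mathcal{X}$ using \eqref{marginal_ok} to replace the random $\tilde{\pi}_{\Gamma_n}$ by the fixed $\pi$. The only structural difference is that the paper inserts an explicit \emph{intermediate} process $\{\tilde{X}^{i(t^*,\kappa)}_n\}$ between the adaptive and the frozen (``Markovian'') chains, yielding a three-term triangle-inequality decomposition (Lemmas \ref{lemma:markovian_conv_cont}, \ref{lemma:adap_inter_tv}, \ref{lemma:markov_inter_tv}) rather than your direct two-term coupling; this intermediate process absorbs the measurability bookkeeping you flag as an obstacle and makes the sequential conditional coupling construction explicit, but the underlying argument and the resulting bounds are the same.
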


We establish the Weak Law of Large Numbers for the class of Auxiliary Variable Adaptive MCMC algorithms for both the uniform and the non-uniform case. By letting $\Phi$ be a singleton, our result applies to the standard Adaptive MCMC setting and extends the result of \cite{roberts2007coupling} where the WLLN was provided for the uniform case only.

\begin{thm}[WLLN] \label{thm:lln}    Consider an Auxiliary Variable
	Adaptive MCMC algorithm, as in Theorem \ref{thm:non_uniform}, together with assumptions a) and b) of this theorem.  Let $g: \mathcal{X} \to \mathbb{R}$ be a bounded measurable function. Then
	$$
	\frac{\sum_{i=1}^n g(X_i)}{n} \to \pi(g)
	$$
	in probability as $n \to \infty$. 
\end{thm}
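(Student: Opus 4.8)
The plan is to establish convergence in $L^2$, which for bounded $g$ immediately yields convergence in probability. Writing $Z_i := g(X_i) - \pi(g)$, I would first dispatch the means using the ergodicity already proved: by Theorem \ref{thm:non_uniform} we have $A_n^{(\tilde x,\gamma)}(\cdot) \to \pi(\cdot)$ in total variation, so since $g$ is bounded, $\mathbb{E}[g(X_i)] = A_i^{(\tilde x,\gamma)}(g) \to \pi(g)$ and hence $\frac1n\sum_{i=1}^n \mathbb{E}[Z_i] \to 0$ by Ces\`aro averaging. It therefore suffices to show that the variance-type quantity $\mathbb{E}\big[(\frac1n\sum_{i=1}^n Z_i)^2\big]$ tends to $0$.

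Expanding the square, the diagonal contributes $\frac1{n^2}\sum_i \mathbb{E}[Z_i^2] \le 4\|g\|_\infty^2/n \to 0$, so the crux is the off-diagonal sum $\frac2{n^2}\sum_{i<j}\mathbb{E}[Z_iZ_j]$. For $i<j$ I would condition on the history $\mathcal{G}_i$ and use $\mathbb{E}[Z_j\mid\mathcal{G}_i] = A_j^{\mathcal{G}_i}(g) - \pi(g)$, so that $|\mathbb{E}[Z_iZ_j]| \le 2\|g\|_\infty\,\mathbb{E}[|A_j^{\mathcal{G}_i}(g)-\pi(g)|]$. The key reduction is a tower-property step: for any fixed $N \le j-i$ one has $A_j^{\mathcal{G}_i}(g) = \mathbb{E}[A_j^{\mathcal{G}_{j-N}}(g)\mid \mathcal{G}_i]$, so conditional Jensen and boundedness give $\mathbb{E}[|A_j^{\mathcal{G}_i}(g)-\pi(g)|] \le 2\|g\|_\infty\,\mathbb{E}[\|A_j^{\mathcal{G}_{j-N}}(\cdot)-\pi(\cdot)\|_{TV}]$. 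This replaces a window of arbitrary length $j-i$ by one of fixed length $N$ ending at time $j$, which is exactly what the coupling machinery can handle.

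The main obstacle is then the restarted ergodicity estimate: for every $\eta>0$ there is a fixed $N=N(\eta)$ such that $\mathbb{E}[\|A_j^{\mathcal{G}_{j-N}}(\cdot)-\pi(\cdot)\|_{TV}] \to 0$ as $j\to\infty$. I would obtain this by re-running the coupling argument from the proof of Theorem \ref{thm:non_uniform}, but over the window $[j-N,j]$ started from the random state $(\tilde X_{j-N},\Gamma_{j-N})$. Comparing the adaptive chain on this window to the chain that freezes the kernel at $\tilde{P}_{\Gamma_{j-N}}$, Diminishing Adaptation bounds the coupling failure over the fixed-length window by a quantity tending to $0$ as the window recedes to infinity (only finitely many $D_k$ enter, since $N$ is fixed); Containment, applied with $N=N(\eta)$, ensures the frozen chain is within $\eta$ of $\tilde{\pi}_{\Gamma_{j-N}}$ with probability at least $1-\eta$, and $\tilde{\pi}_{\Gamma_{j-N}}$ has $\mathcal{X}$-marginal $\pi$ by \eqref{marginal_ok}. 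The delicate point, and the reason Containment is stated uniformly in $n$, is that the starting state and parameter are random and depend on the whole past, so the estimate must hold in expectation over $\mathcal{G}_{j-N}$ uniformly in the end-time $j$.

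Finally I would assemble the bound. Setting $\rho_j := 4\|g\|_\infty^2\,\mathbb{E}[\|A_j^{\mathcal{G}_{j-N}}-\pi\|_{TV}]$, the previous steps give $|\mathbb{E}[Z_iZ_j]| \le \rho_j$ whenever $j-i\ge N$, and the trivial bound $|\mathbb{E}[Z_iZ_j]| \le 4\|g\|_\infty^2$ otherwise. The pairs with gap $j-i<N$ number at most $Nn$ and contribute $O(N/n)\to 0$; the remaining pairs contribute at most $\frac1{n}\sum_{j=1}^n \rho_j$, which tends to $0$ by Ces\`aro averaging since $\rho_j\to0$. Hence the off-diagonal sum vanishes, the variance tends to $0$, and combined with the convergence of the means this yields $\frac1n\sum_{i=1}^n g(X_i)\to\pi(g)$ in $L^2$ and therefore in probability.
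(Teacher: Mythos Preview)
Your approach is correct and in fact streamlines the paper's argument. Both proofs reduce to controlling mixed terms $\mathbb{E}[Z_iZ_j]$ via the coupling lemmas (Containment for the frozen chain, Diminishing Adaptation for the comparison with the adaptive chain). The paper, however, bounds $|\mathbb{E}g(X_i)g(X_j)|$ directly only for pairs with $N_1\le j-i\le N_1^2$ and $i>t_0$, and then needs a block decomposition of $\{1,\dots,T\}$ into chunks of length $N_1^2$ (plus Cauchy--Schwarz on each block) to assemble the estimate. Your tower-property reduction $A_j^{\mathcal G_i}(g)=\mathbb{E}[A_j^{\mathcal G_{j-N}}(g)\mid\mathcal G_i]$ collapses an arbitrary gap $j-i$ to a fixed window of length $N$ ending at $j$, so a single Ces\`aro step replaces the whole block machinery. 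This is a genuine simplification.

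One imprecision to fix: you state the restarted estimate as $\mathbb{E}[\|A_j^{\mathcal G_{j-N}}-\pi\|_{TV}]\to 0$, but your own description of the coupling gives only $\limsup_{j\to\infty}\mathbb{E}[\|A_j^{\mathcal G_{j-N}}-\pi\|_{TV}]\le C\eta$ for the chosen $N=N(\eta)$: the Diminishing Adaptation contribution over the fixed window vanishes as $j\to\infty$, but the Containment contribution leaves a residual of order $\eta$ (from the event $\{M_\varepsilon(\tilde X_{j-N},\Gamma_{j-N})>N\}$), uniform in $j$. So $\rho_j$ does not tend to $0$; rather, $\limsup_j\rho_j\le C\eta$. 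The assembly still works: with $N=N(\eta)$ fixed, the near-diagonal pairs contribute $O(N/n)$, and the rest contribute at most $\limsup_n \tfrac1n\sum_j\rho_j^{(N)}\le C\eta$; since $\eta>0$ is arbitrary, the variance vanishes. Just rephrase the restarted estimate accordingly.
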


While Containment is a weaker condition than Simultaneous Uniform Ergodicity, it is less tractable and in the standard Adaptive MCMC setting drift conditions are typically used to verify it \cite{roberts2007coupling,MR2849670}. Lemma \ref{thm:drift_condition} helps verifying Containment via geometric drift conditions in the Auxiliary Variable framework. The lemma additionally assumes that the adaptation happens on a compact set only (cf. condition e) below). Adapting on a compact set has been theoretically investigated in \cite{MR3404645} and used in certain adaptive Gibbs sampler contexts in \cite{chimisov2018adapting}. We shall use Lemma \ref{thm:drift_condition} as the main tool for establishing ergodic theorems for JAMS.
\begin{lemma}\label{thm:drift_condition}
	Assume that the following conditions are satisfied.
	\begin{enumerate}[a)]
		\item For each $\gamma \in \mathcal{Y}$ $\|\tPg^k(\tilde{x},\cdot) -\tpig(\cdot)\|_{TV} \to 0$ as $k \to \infty$.
		\item There exists $\lambda <1$, $b < \infty$ and a collection of functions  
		$V_{\tpig}: \tilde{\mathcal{X}} \to [1, \infty)$ for $\gamma \in \mathcal{Y}$, such that 
		the following simultaneous drift condition is satisfied:
		
		\begin{equation}\label{eq:main_drift_condition}
		\tPg V_{\tpig}(\tilde{x}) \leq \lambda V_{\tpig}(\tilde{x}) + b   \quad \text{for all }\tilde{x}\in \tilde{\mathcal{X}} \text{ and }\gamma \in \mathcal{Y},
		\end{equation}
		where for $\tilde{x} \in \tilde{\mathcal{X}}$
		$$
		\tPg V_{\tpig}(\tilde{x}) := \E \left(V_{\tpig}(\tilde{X}_{n+1}) \big| \tilde{X}_n = \tilde{x}, \Gamma_n = \gamma\right).
		$$
		Moreover, $V_{\tpig}(\tilde{x})$ is bounded on compact sets as a function of $(\tilde{x}, \gamma)$.	
		
		\item There exist $\delta >0$, $v >2n_0b/(1-\lambda^{n_0})$ and a positive integer $n_0$, 
		such that the following minorisation condition holds: for each $\giY$ we can find a probability measure $\nu_{\gamma}$ on $\tilde{\mathcal{X}}$ satisfying
		\begin{equation}\label{eq:minorisation_condititon}
		\tPg^{n_0}(\tilde{x}, \cdot) \geq \delta \nu_{\gamma}(\cdot) \quad \text{for all } \tilde{x} \text{ with } V_{\tpig}(\tilde{x}) \leq v.
		\end{equation}
		\item $\mathcal{Y}$ is compact in some topology.
		\item There exists a compact set $A$ such that if $X_n\notin A$, then $\Gamma_{n+1} = \Gamma_n$.
		
		\item $\E V_{\tpi_{\Gamma_0}}(\tilde{X}_0) < \infty$.
	\end{enumerate}
	Then the Containment condition \eqref{eq:containment_condition_1} holds. 
\end{lemma}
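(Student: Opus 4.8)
The plan is to run the classical route from a simultaneous geometric drift and minorisation to Containment: conditions (b) and (c) yield a \emph{quantitative} convergence bound for each $\tPg$ whose constants depend only on $\lambda,b,\delta,n_0,v$, and are therefore uniform over $\giY$. This converts Containment into a statement about boundedness in probability of the drift function evaluated along the trajectory, which I then control through the drift inequality \eqref{eq:main_drift_condition} together with the compact-adaptation hypotheses (d)--(f).

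First I would invoke a quantitative ergodicity theorem for geometrically ergodic chains (of Rosenthal/Baxendale type). Because the condition $v > 2 n_0 b /(1-\lambda^{n_0})$ in (c) is precisely the hypothesis of such theorems, and (a) guarantees that $\tpig$ is the limit of $\tPg^k$, the drift \eqref{eq:main_drift_condition} and minorisation \eqref{eq:minorisation_condititon} produce constants $C<\infty$ and $\rho\in(0,1)$, \emph{independent of} $\gamma$ since $\lambda,b,\delta,n_0,v$ are, such that
\begin{equation*}
\|\tPg^k(\tilde{x},\cdot)-\tpig(\cdot)\|_{TV}\le C\,\rho^{k}\,V_{\tpig}(\tilde{x})\qquad\text{for all }k\ge1,\ \tilde{x}\in\tilde{\mathcal{X}},\ \giY.
\end{equation*}
Recalling \eqref{egn:epsconvtime}, this gives a deterministic bound $M_{\varepsilon}(\tilde{x},\gamma)\le c_1(\varepsilon)+c_2(\varepsilon)\log V_{\tpig}(\tilde{x})$, uniformly in $\gamma$. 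Consequently, writing $W_n:=V_{\tilde{\pi}_{\Gamma_n}}(\tilde{X}_n)$, the event $\{M_{\varepsilon}(\tilde{X}_n,\Gamma_n)>N\}$ is contained in $\{W_n>\exp((N-c_1)/c_2)\}$, so by Markov's inequality Containment \eqref{eq:containment_condition_1} follows once I show $\sup_n\E W_n<\infty$: then $\Prob(M_{\varepsilon}(\tilde{X}_n,\Gamma_n)>N)\le e^{-(N-c_1)/c_2}\sup_n\E W_n$, which is below any prescribed $\tilde\delta$ for $N$ large, uniformly in $n$.

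It remains to bound $\sup_n\E W_n$, and here (d)--(f) enter. Since the $\tilde{X}$-coordinate always moves under $\tPg$ with $\gamma=\Gamma_n$, the drift \eqref{eq:main_drift_condition} gives $\E[V_{\tilde{\pi}_{\Gamma_n}}(\tilde{X}_{n+1})\mid\mathcal{G}_n]\le\lambda W_n+b$. When $X_n\notin A$, condition (e) forces $\Gamma_{n+1}=\Gamma_n$, so on this event $W_{n+1}=V_{\tilde{\pi}_{\Gamma_n}}(\tilde{X}_{n+1})$ and the drift applies directly. When $X_n\in A$ the parameter may change, but $\tilde{X}_n$ then lies in the compact set $A\times\Phi$ (with $\Phi$ the finite mode index set in JAMS), so by boundedness of $V$ on compact sets in (b), combined with compactness of $\mathcal{Y}$ from (d), the contribution from such times is uniformly bounded by a constant $C_A$. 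I would thus establish the supermartingale-type inequality
\begin{equation*}
\E[W_{n+1}\mid\mathcal{G}_n]\le\lambda W_n+b+C_A\,\mathbf{1}_{\{X_n\in A\}}\le\lambda W_n+(b+C_A),
\end{equation*}
and iterate it, using (f) to start the recursion with $\E W_0<\infty$, obtaining $\E W_n\le\lambda^n\E W_0+(b+C_A)/(1-\lambda)$, hence $\sup_n\E W_n<\infty$.

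The main obstacle is exactly the $X_n\in A$ case of the last display: after an adaptation step the drift function itself switches from $V_{\tilde{\pi}_{\Gamma_n}}$ to $V_{\tilde{\pi}_{\Gamma_{n+1}}}$, and the drift inequality controls only the former. The resolution is the interplay of (d), (e) and (b): adaptation is confined to the compact set $A$, where, uniformly over the compact parameter space $\mathcal{Y}$, the relevant $V$-values stay bounded, so these mismatched terms inject only a bounded increment into $\E W_n$ at each step and cannot cause divergence. Making $C_A$ fully explicit, i.e.\ verifying $\sup_{\tilde{x}\in A\times\Phi,\ \gamma,\gamma'\in\mathcal{Y}}\int V_{\tilde{\pi}_{\gamma'}}(\tilde{y})\,\tPg(\tilde{x},d\tilde{y})<\infty$ from the stated hypotheses, is the one point requiring care; I expect it to follow from boundedness of $V$ on compacts together with the uniform-in-$\gamma$ tail control of the kernels over the compact set $\mathcal{Y}$.
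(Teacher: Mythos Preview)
Your proposal follows essentially the same route as the paper: iterate the one-step drift to obtain $\tPg^{n_0} V_{\tpig} \le \lambda^{n_0} V_{\tpig} + n_0 b$, apply Rosenthal's quantitative bound with constants depending only on $(\lambda,b,\delta,n_0,v)$ (hence uniform in $\gamma$), extend from multiples of $n_0$ to all $m$ by monotonicity of total variation, and then reduce Containment to $\sup_n \E W_n<\infty$ via the split on $\{X_n\in A\}$ together with assumptions (d)--(f). The one point you flag as the main obstacle, the constant $C_A$, is handled in the paper not through the mixed-$(\gamma,\gamma')$ kernel integral you suggest, but by the simpler observation that the drift inequality already gives $\tPg V_{\tpig}(\tilde{x})/V_{\tpig}(\tilde{x}) \le \lambda + b$ pointwise (since $V\ge 1$), so $M:=\sup_{\gamma\in\mathcal{Y},\ \tilde{x}\in A}\tPg V_{\tpig}(\tilde{x}) \le (\lambda+b)\sup_{\gamma\in\mathcal{Y},\ \tilde{x}\in A}V_{\tpig}(\tilde{x})<\infty$ by boundedness of $V$ on the compact set; no appeal to uniform tail control of the kernels is needed.
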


\subsection{Adaptive Increasingly Rarely version of the class}\label{subsection:airmcmc}
Adaptive Increasingly Rarely (AIR) MCMC algorithms were introduced in \cite{chimisov2018air} as an alternative to classical Adaptive MCMC methods. While they share the same self-tuning properties, their ergodic properties are mathematically easier to analyse and their computational cost of adaptation is smaller. 

The key idea behind the AIR algorithms is to allow the updates of parameters only at pre-specified times $N_j$ with and increasing sequence of lags $n_k$ between them. $N_j$ is therefore defined as
$$
N_j = \sum_{k=1}^j n_k \quad \text{with } N_0 = 0 \text{ and } n_0=0.
$$
For the sequence $\{n_k\}_{k>1}$ \cite{chimisov2018air} proposed using any scheme  that satisfies
$c_2k^{\kappa}  \geq n_k \geq c_1 k^{\kappa}
$
for some positive $c_1$, $c_2$ and $\kappa$. In order to ensure that the random variable
$$
D_n= \sup_{\tilde{x} \in \tilde{\mathcal{X}}} \|
\tilde{P}_{\Gamma_{n+1}}(\tilde{x}, \cdot) - \tilde{P}_{\Gamma_{n}}(\tilde{x}, \cdot) \|_{TV}
$$ converges to $0$ in probability (which is equivalent to Diminishing Adaptation), the following modification is introduced. The updates happen at times $N^{*}_j$, where
$$
N^{*}_j = \sum_{k=1}^j n^{*}_k \quad \text{with } N^{*}_0 = 0 \text{ and } n^{*}_0=0.
$$
and 
$$
n^{*}_k = n_k + \text{Uniform}[0, \lfloor{k^{\kappa^{*}}}\rfloor] \quad \text{for some } \kappa^{*} \in (0, \kappa).
$$
Observe that $D_n$ is only positive if $n+1 \in \{N^{*}_j\}_{j \geq 1}$. Besides, if 
\mbox{$n+1 > N_k$} then
$
\mathbb{P}(D_n >0) \leq \frac{1}{ \lfloor{k^{\kappa^{*}}}\rfloor},
$
so in particular $D_n$ goes to 0 as $n$ tends to infinity. 

We apply the same idea to Auxiliary Variable Adaptive MCMC algorithms, by adapting the parameters of the transition kernels and the target distributions only at times $N_j^*$, as described above, so that Diminishing Adaptation is automatically satisfied for these algorithms. In Section \ref{section:ergodicty_amcmc} we study in detail an AIR version of JAMS (see Algorithm \ref{alg:mode_jumping_2}). 
\section{Ergodicity of the Jumping Adaptive Multimodal Sampler}\label{section:ergodicty_amcmc}
We will use our results from Section \ref{section:auxiliary_variable} to prove ergodicity of JAMS. Firstly observe that this algorithm indeed belongs to the Auxiliary Variable Adaptive MCMC class. To see this, recall that the method utilises  a collection of distributions $\{\tpig(\cdot)\}_{\giY}$ on $\mathcal{\tilde{X}} := \mathcal{X} \times \mathcal{I}$, which corresponds to the notation introduced for the Auxiliary Variable Adaptive MCMC class, with $\Phi=\mathcal{I}$. Indeed, for each $B \in \mathcal{B}(\mathcal{X})$ and $\giY$  we have
$\tpig(B \times \mathcal{I})= \pi(B)$ (see \eqref{eq:jams_main_equation}).

Let $\tpigLi$ denote the kernel associated with the local move around \mbox{mode $i$} and analogously let  $\tpigJi$ be the kernel of the jump to \mbox{mode $i$}. The full transition kernel $\tPg$ is thus defined as
$$
\tPg\left((x,i), (dy,k)\right) := (1-\epsilon)\tpigLi\left((x,i), (dy,k)\right) \delta_{i=k} + \epsilon \agik \tpigJk \left((x,i), (dy,k)\right).
$$
It is easily checked that the acceptance probabilities \eqref{eq:local_move_acceptance} and \eqref{eq:independent_jump_move_acceptance} or \eqref{eq:acc_prob_deterministic} ensure that  detailed balance holds for the above kernels $\tPg$, admitting $\tpig$ as their invariant distributions. They also satisfy the Harris ergodicity condition. The above discussion shows that the algorithm indeed falls into the category of the Auxiliary Variable Adaptive MCMC, so Theorems \ref{thm:uniform} and \ref{thm:non_uniform} can be used to establish its ergodicity.

The main results of this section are stated in Theorems \ref{thm:heavy_tailed_proposal} and \ref{thm:light_tailed_proposal}, which establish convergence of our algorithm to the correct limiting distribution under the uniform and the non-uniform scenario, respectively.
\subsection{Overview of the assumptions}\label{section:overview_assumptions}
In order to prove ergodic results for JAMS, we consider Algorithm \ref{alg:mode_jumping_2}, which is a slightly modified version of Algorithm \ref{alg:mode_jumping}. While being easier to analyse mathematically, it inherits the main properties of Algorithm \ref{alg:mode_jumping}. The modifications are two-fold: firstly, we update the parameters only if the most recent sample $(x_n, i_n)$ is such that $x_n$ belongs to some fixed compact set $A_{i_n}$ and secondly, we adapt them "increasingly rarely" (see Section \ref{subsection:airmcmc}).  If jumps are proposed deterministically, we additionally assume that they are allowed only on "jumping regions" $JR_{\gamma, i}$ defined as
\begin{equation}\label{eq:jumping_region_def}
JR_{\gamma,i} = \{x \in \mathcal{X}: (x-\mu_i)^T\Sigma_{\gamma,i}^{-1}(x-\mu_i) \leq R\} 
\end{equation}
for $i\in \mathcal{I}$  and some  $R>0$. 
Note that equation \eqref{eq:deterministic_proposal} ensures that if $x$ belongs to $JR_{\gamma,i}$ and we propose a deterministic jump from $(x,i)$ to  $(y,k)$, then  $y$ must be in $JR_{\gamma,k}$. Thus the detailed balance condition is satisfied. The reasons for these modifications will become clearer when we present the proofs of the ergodic theorems. 

Even though the theory presented below works for any choice of the compact sets $A_1, \ldots, A_N$, we propose to define these sets in the following way. Recall that the burn-in routine (Algorithm \ref{alg:burn_in_algorithm}) provides the list of mode locations $\{\mu_1, ..., \mu_N\}$ and initial estimates of covariance matrices $\{\Sigma_{\gamma_0,1}, ..., \Sigma_{\gamma_0,N}\}$. By $\lambda_i$ denote the maximum eigenvalue of $\Sigma_{\gamma_0,i}$ and let $\lambda_M = \max\{\lambda_1, ...,\lambda_N\}$. Let $C$ be the convex hull of $\{\mu_1, ..., \mu_N\}$ and $D_C$ its diameter. Define
$$
A_i := \big\{ x \in \mathcal{X}: \|x-\mu_i\| \leq 2D_C + 100 (d\lambda_M)^{1/2} \big\}, 
$$
where $d$ is the dimension of $\mathcal{X}.$  

Observe that Algorithm \ref{alg:mode_jumping_2} is constructed in such a way that all the covariance matrices $\Sigma_{\gamma,i}$ are based on samples belonging to a compact set $A_{i}$. This implies that these matrices are bounded from above. Since we keep adding $\beta I_d$ to the covariance matrix at each step, they are also bounded from  below. Recall also that the covariance matrices for the local proposal distributions are scaled by a fixed factor $2.38^2/d$. Consequently, there exist positive constants $m$ and $M$ for which 

\begin{equation}\label{eq:matrices_estimation}
mI_d \preceq \Sigma_{\gamma,i}\preceq MI_d \text{ and } mI_d \preceq 2.38^2/d \Sigma_{\gamma,i}  \preceq MI_d \text{ for all } \giY \text{ and } i \in \mathcal{I}.
\end{equation}
As for the adaptive scheme for $\wgi$ and $\agik$, we only require that these values be bounded away from 0, i.e. there exist $\epsilon_a$ and $\epsilon_w$ such that 
\begin{equation}\label{eq:weights_epsilon}
\wgi > \epsilon_w \quad \text{and} \quad \agik > \epsilon_a \quad \text{for all } \giY \text{ and } i,k \in \mathcal{I}.
\end{equation}
Therefore, the parameter space $\mathcal{Y}$ may be considered as compact.

\begin{algorithm}[!ht]
	\caption{JAMS: main algorithm, iteration $n+1$}\label{alg:mode_jumping_2}
	\begin{algorithmic}[1]	
		\State {\textbf{Input:} current point $(x_n, i_n)$, list of modes $\{\mu_1, \ldots \mu_N \}$, constant $\epsilon \in (0,1)$, parameter $\gamma_{n} = \lbrace\Sigma_{\gamma_{n},i}, w_{\gamma_{n},i}, a_{\gamma_{n}, ik}\rbrace_{i,k \in \{1, \ldots, N\}}$, empirical means $m_1, \ldots, m_N$ and covariance matrices $S_1, \ldots, S_N$, integer $N_j^* \geq n+1$, next element of the lag sequence $n_{j+1}$.} 
		\State {Generate $u \sim U[0,1]$.}
		\If {$u > \epsilon$}
		\State  {\textbf{Local move}:}\label{local_m}
		\State{Propose a new value $y \sim R_{\gamma_{n}, L, i_n}(x_n, \cdot)$.}
		\State{Accept $y$ with probability $\alpha_{\gamma_{n}, L}\left((x_n, i_n) \to (y, i_n) \right)$}.
		\If {$y$ accepted}
		\State{$(x_{n}, i_{n}) := (y, i_n)$.}
		\Else
		\State{$(x_{n}, i_{n}) := (x_n, i_n)$.}
		\EndIf
		\Else
		\If{$x_n \notin JR_{\gamma_n,i_n}$}
		\Goto{local_m} \Comment{\textbf{(only for deterministic jumps)}}
		\EndIf
		\State{\textbf{Jump move:}}
		\State {Propose a new mode $k \sim (a_{\gamma_n, i1}, \ldots, a_{\gamma_n, iN})$. }
		\State{Propose a new value $y \sim R_{\gamma_{n},J,ik}(x_n, \cdot)$.}
		\State{Accept $(y, k)$ with probability $\alpha_{\gamma_{n}, J}\left((x_n, i_n) \to (y, k) \right)$.}
		\If {$(y, k)$ accepted}
		\State{$(x_{n+1}, i_{n+1}) := (y, k)$.}
		\Else
		\State{$(x_{n+1}, i_{n+1}) := (x_n, i_n)$.}
		\EndIf
		\EndIf
		\If{$x_{n+1} \in A_{i_{n+1}}$} 
		\State{Update the empirical mean $m_{i_{n+1}}$ and covariance matrix $S_{i_{n+1}}$ by \mbox{including $x_{n+1}$.}}
		\If{$n+1$ is equal to $N_j^{*}$}
		\State{Update the parameter $\gamma_{n}$ to $\gamma_{n+1}$ according to Algorithm \ref{alg:params_updates}.}
		\State{Sample $N_{j+1}^* = N_j^{*} + n_{j+1} + \text{Uniform} \left[0, \lfloor(j+1)^{\kappa^*}\rfloor \right]$.  } 
		\EndIf
		\EndIf
		\State \Return {New sample $(x_{n+1}, i_{n+1})$, parameter $\gamma_{n+1}$, $m_{i_{n+1}}$ and $S_{i_{n+1}}$.}
	\end{algorithmic}
\end{algorithm}

\subsection{Theoretical results for JAMS}
We begin with the case when the jump moves are proposed independently from distributions $\RJi$ with heavier tails than the tails of the target distribution $\pi$ for all $i \in \mathcal{I}$ and $\gamma \in \mathcal{Y}$, i.e.
\begin{equation}\label{eq:heavy_tailed_proposal}
\sup_{x \in \mathcal{X}} \sup_{\gamma \in \mathcal{Y}} \frac{\pi(x)}{\RJi(x)} < \infty \quad \text{for  each } i \in \mathcal{I}.
\end{equation}
We prove that under this assumption Simultaneous Uniform Ergodicity is satisfied for Algorithm \ref{alg:mode_jumping_2} and consequently, by Theorem \ref{thm:uniform}, the algorithm is ergodic.
\begin{thm}\label{thm:heavy_tailed_proposal}
	Consider Algorithm  \ref{alg:mode_jumping_2} and assume  that the relationship between the target distribution $\pi$ and the proposal distributions $\RJi$ satisfies \eqref{eq:heavy_tailed_proposal}. Then Algorithm \ref{alg:mode_jumping_2} is ergodic. 
\end{thm}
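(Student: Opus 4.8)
The plan is to verify the two hypotheses of Theorem~\ref{thm:uniform}, Simultaneous Uniform Ergodicity and Diminishing Adaptation, for the kernels $\tPg$ of Algorithm~\ref{alg:mode_jumping_2}. Diminishing Adaptation is essentially free here: because Algorithm~\ref{alg:mode_jumping_2} only updates $\Gamma$ at the random times $N_j^*$ of the Adaptive Increasingly Rarely scheme of Section~\ref{subsection:airmcmc}, the probability that $\Gamma_{n+1}\neq\Gamma_n$ decays to $0$, so $D_n\to 0$ in probability as recorded there. The real content is condition (a).

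I would obtain Simultaneous Uniform Ergodicity from a single uniform Doeblin minorization of the form $\tPg(\tilde x,\cdot)\ge\delta\,\tpig(\cdot)$ with $\delta>0$ independent of $\tilde x$ and $\giY$, since this immediately yields $\|\tPg^N(\tilde x,\cdot)-\tpig(\cdot)\|_{TV}\le(1-\delta)^N$ uniformly, whence (a) follows by taking $N$ large. To produce it I would discard the local part of $\tPg$ and retain only the jump contribution, which fires with probability $\epsilon$, proposes mode $k$ with probability $\agik$, and then draws $y$ from the independent proposal $\RJk$ (for independent jumps there is no jumping-region restriction, so the proposal genuinely comes from $\RJk$ at every state). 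The key structural observation is that on the augmented space this jump is an independence sampler: writing $\Qgi(x)$ for $Q_i(\mu_i,\Sigma_{\gamma,i})(x)$ and setting $g_\gamma(x,i):=\tpig(x,i)/\RJi(x)=\pi(x)\wgi/\sum_{j}\wgj Q_j(\mu_j,\Sigma_{\gamma,j})(x)$, the acceptance ratio in \eqref{eq:independent_jump_move_acceptance} is exactly $\tfrac{g_\gamma(y,k)}{g_\gamma(x,i)}\tfrac{\agki}{\agik}$.

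The step where the hypothesis enters is the uniform upper bound on $g_\gamma$. The heavy-tail assumption \eqref{eq:heavy_tailed_proposal} gives $\pi(x)/\Qgi(x)\le C_i$ for all $x$ and $\giY$, and together with $\wgi>\epsilon_w$ from \eqref{eq:weights_epsilon} this forces $\sum_j\wgj Q_j(\cdot)\ge\epsilon_w\,\pi(\cdot)/C_i$, so that $g_\gamma(x,i)\le C_i/\epsilon_w\le M_0:=\max_{i\in\mathcal{I}}C_i/\epsilon_w$ everywhere. Using this bound, together with $\agik,\agki\in(\epsilon_a,1]$ from \eqref{eq:weights_epsilon}, I would lower-bound the accepted jump density to $(y,k)$ by $\epsilon\,\epsilon_a\,\RJk(y)\,\min[1,\tfrac{g_\gamma(y,k)}{M_0}\epsilon_a]\ge\tfrac{\epsilon\epsilon_a^2}{M_0}\RJk(y)\,g_\gamma(y,k)=\tfrac{\epsilon\epsilon_a^2}{M_0}\tpig(y,k)$, where the last equality is the identity $\RJk(y)g_\gamma(y,k)=\tpig(y,k)$ read off from \eqref{eq:pi_gamma_defined}. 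Integrating over $(y,k)$ delivers the minorization with $\delta=\epsilon\epsilon_a^2/M_0$, uniformly in $\tilde x$ and $\gamma$.

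The main obstacle, and the only place requiring care, is the uniform bound on the importance weight $g_\gamma$ together with the bookkeeping for the asymmetric mode-selection probabilities $\agik$; once the heavy-tail condition is translated into the statement that $g_\gamma$ is globally bounded, the minorization and the ensuing geometric rate are routine, and the remaining invocation of Theorem~\ref{thm:uniform} is immediate.
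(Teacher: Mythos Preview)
Your approach is essentially the paper's: both verify Diminishing Adaptation via the AIR construction and then establish Simultaneous Uniform Ergodicity through a one-step Doeblin minorization $\tPg(\tilde x,\cdot)\ge\delta\,\tpig(\cdot)$ obtained from the jump component alone, exploiting that the independent jump is an independence sampler on the augmented space with uniformly bounded importance weight.

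There is one slip to fix. Your displayed identity $g_\gamma(x,i)=\pi(x)\wgi\big/\sum_j\wgj Q_j(\mu_j,\Sigma_{\gamma,j})(x)$ and your subsequent claim that \eqref{eq:heavy_tailed_proposal} yields $\pi/\Qgi\le C_i$ both tacitly identify $\RJi$ with $Q_i(\mu_i,\Sigma_{\gamma,i})$, which the theorem does not assume (compare condition e2) of Theorem~\ref{thm:light_tailed_proposal}, where the two are treated separately). The hypothesis \eqref{eq:heavy_tailed_proposal} bounds $\pi/\RJi$, not $\pi/Q_i$. The repair is simpler than your detour: since $\tpig(x,i)\le\pi(x)$ trivially from \eqref{eq:pi_gamma_defined}, one has directly
\[
g_\gamma(x,i)=\frac{\tpig(x,i)}{\RJi(x)}\le\frac{\pi(x)}{\RJi(x)}\le C_i,
\]
which is exactly the bound the paper records as \eqref{eq:heavy_tailed_assumption_transformed}. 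With this correction your minorization goes through; the paper obtains the slightly sharper constant $\epsilon\epsilon_a c_1$ (with $c_1=1/M_0$) by bounding both arguments of the $\min$ symmetrically, but your $\epsilon\epsilon_a^2/M_0$ is equally valid for the conclusion.
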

When the tails of the distribution $\pi$ are heavier then the tails of the proposal distributions $\RJi$, or when the jumps follow the deterministic scheme, Simultaneous Uniform Ergodicity  does not hold. However, it turns out that under some additional regularity conditions Algorithm \ref{alg:mode_jumping_2} is still ergodic, as it satisfies the assumptions of Lemma \ref{thm:drift_condition}. 
\newpage
\begin{thm}\label{thm:light_tailed_proposal} 
	Consider Algorithm  \ref{alg:mode_jumping_2} 
	and assume that the following conditions are satisfied.	
	\begin{enumerate}[a)]
		\item For each $i \in \mathcal{I}, \giY$ the proposal distribution for local moves $\RLi$ follows an elliptical distribution  parametrised by $\Sigma_{\gamma,i}$.  Furthermore, the family of distributions $\RLi(\textbf{0}, \cdot)$, $\giY$, has uniformly bounded probability density functions, and for any compact set $C \subset \mathcal{X}$ we have
		\begin{equation}\label{eq:local_proposals_compact_set_assumption}
		\inf_{x,y\in C} \inf_{\giY}\RLi(x,y) >0 \quad \text{for each } i \in \mathcal{I}.
		\end{equation}
		\item Let $r_{\gamma,i}(x)$ be the rejection set for local moves, i.e. $r_{\gamma,i}(x) := \{y \in \mathcal{X}: \tpi_{\gamma}(y,i)< \tpi_{\gamma}(x,i)\}$. We assume that 
		\begin{equation}\label{eq:assumption_limsup}
		\limsup_{|x| \to \infty} \sup_{\gamma \in \mathcal{Y}} \int_{r_{\gamma,i}(x)}\RLi(x,y) dy < 1 \quad \text{for each } i \in \mathcal{I}.
		\end{equation}
		\item The target distribution $\pi$ is super-exponential, i.e. it is positive with continuous first derivatives and satisfies 
		\begin{equation}\label{eq:super_exponential}
		\lim_{|x| \to \infty} \frac{x}{|x|} \cdot \nabla \log \pi(x) = - \infty.
		\end{equation}
		\item Every $Q_i,$ $i \in \mathcal{I},$ is an elliptical distribution parametrised by $\Sigma_{\gamma,i}$ positive on $\mathcal{X}$ and additionally, the following condition is satisfied:
		\begin{equation}\label{eq:Q_i_assumption}
		\sup_{x\in \mathcal{X}} \frac{Q_i(\mu_i, \Sigma_{\gamma_1,i})(x)}{Q_k(\mu_k, \Sigma_{\gamma_2,k})(x)} < \infty \quad \textrm{for all } i,k \in \mathcal{I} \textrm{ and } \gamma_1, \gamma_2 \in \mathcal{Y}.
		\end{equation}
	\end{enumerate} 	
	Additionally, one of the following two conditions for jump moves holds. 
	\begin{enumerate}[e1)]
		\item Jump moves follow the procedure for deterministic jumps, as described in Section \ref{subsection:main_algorithm}.
		\item Jump moves follow the independent proposal procedure, as described in Section \ref{subsection:main_algorithm}. The proposal distributions for jumps have uniformly bounded probability density functions and satisfy
		\begin{equation}\label{eq:jump_proposals_compact_set_assumption}
		\inf_{x\in B\left(\mu_i,r\right)} \inf_{\giY}\RJi(x) >0 \quad \text{for each } i \in \mathcal{I} \text{ and  some } r>0,
		\end{equation} 
		where $B\left(\mu_i,r\right)$ is a ball of radius $r$ and centre $\mu_i$.
		Moreover, the relationship between the target distribution  $\RJi$ is given by
		\begin{equation}\label{eq:light_tailed_proposal}
		\sup_{x \in \mathcal{X}} \sup_{\gamma \in \mathcal{Y}} \frac{\RJi(x)}{\pi(x)^{s_J}} < \infty \quad \text{for  each } i \in \mathcal{I} \text{ and  some } s_J\in(0,1].
		\end{equation}
	\end{enumerate}
	Then Algorithm \ref{alg:mode_jumping_2} 
	is ergodic.
\end{thm}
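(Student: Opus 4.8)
The plan is to deduce ergodicity from Theorem~\ref{thm:non_uniform} by verifying its two hypotheses for Algorithm~\ref{alg:mode_jumping_2}. Diminishing Adaptation holds automatically because the parameters are updated according to the Adaptive Increasingly Rarely schedule of Section~\ref{subsection:airmcmc}, so the real work is to establish Containment, for which I would invoke Lemma~\ref{thm:drift_condition} and check its conditions a)--f). Conditions a) and d) are already in hand (Harris ergodicity of each $\tPg$ was noted at the start of Section~\ref{section:ergodicty_amcmc}, and compactness of $\mathcal{Y}$ follows from \eqref{eq:matrices_estimation} and \eqref{eq:weights_epsilon}); condition e) holds by construction, since Algorithm~\ref{alg:mode_jumping_2} only adapts when the current sample lies in the compact set $A := \bigcup_{i} A_i$; and f) is immediate once the chain is started from a point (e.g.\ a mode $\mu_i$) at which the drift function is finite. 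The substance of the proof is therefore conditions b) (simultaneous geometric drift) and c) (simultaneous minorisation).

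For the drift I would take $V_{\tpig}(x,i) := \pi(x)^{-s}$ for a suitable $s \in (0,1)$, normalised so that $V_{\tpig} \geq 1$. The starting observation is that \eqref{eq:Q_i_assumption} together with the lower bound $\wgi > \epsilon_w$ from \eqref{eq:weights_epsilon} forces the mixture weight $\wgi Q_i(\mu_i,\Sigma_{\gamma,i})(x)/\sum_j \wgj Q_j(\mu_j,\Sigma_{\gamma,j})(x)$ to be bounded above and below by positive constants uniformly in $\giY$, so that $\tpig(x,i) \asymp \pi(x)$ and $V_{\tpig} \asymp \tpig(\cdot,i)^{-s}$ uniformly. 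Moreover, because each $Q_i$ is elliptical and \eqref{eq:Q_i_assumption} holds, the log-gradient of this weight vanishes at infinity, so the super-exponentiality \eqref{eq:super_exponential} of $\pi$ is inherited by $\tpig(\cdot,i)$. Since $\tpigLi$ is a Random Walk Metropolis step targeting $\tpig(\cdot,i)$, the standard drift analysis for super-exponential targets, using \eqref{eq:super_exponential} and the contour-regularity condition \eqref{eq:assumption_limsup}, yields $\tpigLi V_{\tpig}(x,i) \leq \lambda_L V_{\tpig}(x,i) + b_L$ with $\lambda_L < 1$; uniformity in $i$ and $\gamma$ comes from finiteness of $\mathcal{I}$, compactness of $\mathcal{Y}$, the bounds \eqref{eq:matrices_estimation}, and the uniform bound on the local proposal densities in hypothesis a).

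It then remains to control the jump part of $\tPg$. In case e1) the deterministic jump only fires when $x \in JR_{\gamma,i}$ and sends it into $JR_{\gamma,k}$, both compact, so by \eqref{eq:matrices_estimation} the accepted state has $V_{\tpig}$ bounded by a constant independent of $\giY$, while outside $JR_{\gamma,i}$ the move reverts to an already-handled local step. In case e2) I would bound the acceptance probability by $1$ and use \eqref{eq:light_tailed_proposal} to estimate the accepted contribution by $\int \pi(y)^{-s}\RJk(y)\,dy \lesssim \int \pi(y)^{s_J-s}\,dy$, which is finite provided $s < s_J$ (super-exponentiality of $\pi$ guarantees $\int \pi^{\alpha} < \infty$ for every $\alpha > 0$); choosing $s \in (0,s_J)$ therefore makes the jump contribution a bounded constant as well. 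Combining the local and jump estimates and using $\sum_k \agik = 1$ gives $\tPg V_{\tpig} \leq [(1-\epsilon)\lambda_L + \epsilon] V_{\tpig} + b$, and since $(1-\epsilon)\lambda_L + \epsilon < 1$ whenever $\lambda_L < 1$ and $\epsilon < 1$, this is the required drift \eqref{eq:main_drift_condition}.

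For the minorisation c), the sublevel sets $\{V_{\tpig} \leq v\}$ are compact by super-exponentiality, and on such a set the local proposal density is bounded below by \eqref{eq:local_proposals_compact_set_assumption} and the local acceptance probability is bounded below, giving minorisation within each mode; the between-mode transitions needed to realise a single dominating measure $\nu_\gamma$ come from the jump step, using $\agik > \epsilon_a$ from \eqref{eq:weights_epsilon} together with \eqref{eq:jump_proposals_compact_set_assumption} in case e2) (or the matching of jumping regions in case e1). Taking these bounds uniformly over the compact $\mathcal{Y}$ yields \eqref{eq:minorisation_condititon} for some $n_0$, $\delta$ and $v$, which one may enlarge to satisfy $v > 2 n_0 b /(1 - \lambda^{n_0})$. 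With a)--f) verified, Lemma~\ref{thm:drift_condition} gives Containment and Theorem~\ref{thm:non_uniform} then yields ergodicity. I expect the main obstacle to be the drift step: showing that the augmented target $\tpig(\cdot,i)$ remains super-exponential and that the local Random Walk Metropolis drift holds uniformly over the adapting family, and, in case e2), pinning down the exponent $s < s_J$ so that independently proposed jumps do not inflate $V_{\tpig}$.
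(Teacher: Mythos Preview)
Your proposal is correct and follows essentially the same route as the paper: reduce to Theorem~\ref{thm:non_uniform}, get Diminishing Adaptation for free from the AIR schedule, and verify Containment via Lemma~\ref{thm:drift_condition} by establishing a simultaneous Jarner--Hansen-type drift for the local kernel, absorbing the jump kernel into the constant (case e1)) or into a $V + \text{const}$ bound via \eqref{eq:light_tailed_proposal} with $s<s_J$ (case e2)), and then proving a uniform small-set minorisation on sublevel sets of $V$.

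The one noteworthy difference is the choice of drift function: the paper works with $V_{\tpig}(x,i)=c\,\tpig(x,i)^{-s}$, whereas you take the $\gamma$-independent $V(x,i)=\pi(x)^{-s}$ and rely on condition d) (plus the weight bounds \eqref{eq:weights_epsilon}) to get $\tpig(x,i)\asymp\pi(x)$ uniformly; the two are equivalent up to uniform multiplicative constants, and your choice slightly streamlines the verification that $V_{\tpig}$ is bounded on compact sets in $(\tilde{x},\gamma)$. One small wording slip: in case e2) you call the jump contribution ``a bounded constant'', but the rejection part of the jump kernel returns $V_{\tpig}(x,i)$, which is exactly where the $\epsilon\,V$ term in your final coefficient $(1-\epsilon)\lambda_L+\epsilon$ comes from; your formula is correct, only the sentence preceding it undersells what is actually being bounded.
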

When proving the above result, we will refer to the proof of Theorem 4.1 of~\cite{jarner2000geometric}. Assumptions b) and c) are analogues of the regularity conditions considered in \cite{jarner2000geometric}. Condition a) holds automatically for our algorithm if we assume that the proposal distributions for local moves follow either the normal or the $t$ distribution  (see Section \ref{subsection:main_algorithm}) and when \eqref{eq:matrices_estimation} holds. Condition \eqref{eq:jump_proposals_compact_set_assumption} is satisfied if the proposal distributions for jumps follow, for example, the normal distribution. Condition d) can be easily verified if every $Q_i$, $i \in \mathcal{I}$ follows the $t$ distribution with the same number of degrees of freedom.

The result stated below establishes the Weak Law of Large Numbers for our algorithm.
\begin{thm}\label{thm:lln_jams}
	Consider Algorithm \ref{alg:mode_jumping_2} and assume that conditions of either Theorem \ref{thm:heavy_tailed_proposal} or Theorem \ref{thm:light_tailed_proposal} are satisfied. Then the Weak Law of Large Numbers holds for all bounded and measurable functions. 
\end{thm}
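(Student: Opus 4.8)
The plan is to derive the result directly from the general Weak Law of Large Numbers for the Auxiliary Variable Adaptive MCMC class, namely Theorem \ref{thm:lln}. Recall that Theorem \ref{thm:lln} applies to any algorithm in the class that satisfies Containment and Diminishing Adaptation, and concludes the WLLN for every bounded measurable $g$, which is exactly the hypothesis imposed here. Since it was already checked at the start of Section \ref{section:ergodicty_amcmc} that Algorithm \ref{alg:mode_jumping_2} belongs to the Auxiliary Variable Adaptive MCMC class — the family $\{\tpig\}$ satisfies the marginal identity \eqref{eq:jams_main_equation} and each kernel $\tPg$ is $\tpig$-invariant and Harris ergodic — it remains only to confirm that the two conditions feeding into Theorem \ref{thm:lln} hold under either set of hypotheses.

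First I would address Diminishing Adaptation, which is insensitive to the choice between the heavy-tailed and the light-tailed/deterministic regime. Algorithm \ref{alg:mode_jumping_2} updates its parameters only at the increasingly rare times $N_j^*$ of the AIR construction of Section \ref{subsection:airmcmc}, so that $D_n$ is positive only when $n+1 \in \{N_j^*\}_{j \geq 1}$, and the uniform randomisation of the lags forces $\mathbb{P}(D_n > 0) \to 0$. Since $D_n \leq 1$ as a total-variation distance, for every $\eta > 0$ we have $\mathbb{P}(D_n > \eta) \leq \mathbb{P}(D_n > 0) \to 0$, so $D_n \to 0$ in probability. This is precisely the argument already invoked in the proofs of Theorems \ref{thm:heavy_tailed_proposal} and \ref{thm:light_tailed_proposal}.

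Next I would verify Containment in each of the two cases. Under the hypotheses of Theorem \ref{thm:heavy_tailed_proposal}, the proof of that theorem establishes Simultaneous Uniform Ergodicity, i.e. condition (a) of Theorem \ref{thm:uniform}: for each $\varepsilon>0$ there is $N(\varepsilon)$ with $\|\tPg^{N(\varepsilon)}(\tilde{x}, \cdot) - \tpig(\cdot)\|_{TV} \leq \varepsilon$ for all $\tilde{x}$ and $\gamma$. This bound forces $M_{\varepsilon}(\tilde{x}, \gamma) \leq N(\varepsilon)$ uniformly, so $\mathbb{P}(M_{\varepsilon}(\tilde{X}_n, \Gamma_n) > N(\varepsilon) \mid \tilde{X}_0 = \tilde{x}, \Gamma_0 = \gamma) = 0$ and \eqref{eq:containment_condition_1} holds for any $\tilde{\delta}$; that is, Simultaneous Uniform Ergodicity implies Containment. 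Under the hypotheses of Theorem \ref{thm:light_tailed_proposal}, the proof of that theorem verifies conditions a)--f) of Lemma \ref{thm:drift_condition}, whose conclusion is exactly the Containment condition \eqref{eq:containment_condition_1}. Hence Containment holds in both regimes.

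With Containment and Diminishing Adaptation established, all hypotheses of Theorem \ref{thm:lln} are met, and the conclusion $\frac{1}{n}\sum_{i=1}^n g(X_i) \to \pi(g)$ in probability follows for every bounded measurable $g$. I do not expect a substantive obstacle, since the argument is essentially a bookkeeping exercise that repackages ingredients already assembled for Theorems \ref{thm:heavy_tailed_proposal} and \ref{thm:light_tailed_proposal}; the only point that warrants a moment's care is the (standard) observation that Simultaneous Uniform Ergodicity implies Containment, so that the uniform case of Theorem \ref{thm:heavy_tailed_proposal} also feeds correctly into the non-uniform WLLN of Theorem \ref{thm:lln}.
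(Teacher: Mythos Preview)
Your proposal is correct and matches the paper's own proof, which is simply the one-line observation that Theorem \ref{thm:lln_jams} is a direct corollary of Theorem \ref{thm:lln}, with its hypotheses already verified in the proofs of Theorems \ref{thm:heavy_tailed_proposal} and \ref{thm:light_tailed_proposal}. Your additional remark that Simultaneous Uniform Ergodicity implies Containment is the right way to make explicit why the heavy-tailed case also feeds into Theorem \ref{thm:lln} as stated.
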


\begin{remark}\label{rem:parallel_theory}
	Note that Theorems \ref{thm:heavy_tailed_proposal}, \ref{thm:light_tailed_proposal} and \ref{thm:lln_jams} are based on an assumption that the list of modes is fixed. Let us now consider Algorithm~\ref{alg:mode_jumping_2} in the version with mode finding running in parallel to the main MCMC sampler, as shown in Figure \ref{fig:flowchart}. Assume additionally that 
	\begin{equation}\label{eqn_last_mode} 
	\mathbb{P}(\tau < t) \to 1 \quad \textrm{as} \quad t\to \infty,
	\end{equation} where $\tau$ is the time of adding the last mode.  In this case Theorems \ref{thm:heavy_tailed_proposal},   \ref{thm:light_tailed_proposal} and \ref{thm:lln_jams} still hold. Indeed, as the parallel burn-in algorithm runs independently of JAMS, we can rephrase all the probabilistic limiting statements in the proofs on the set $C_t:= \{\tau < t\}$ and then let $t \to \infty.$
\end{remark}

The following lemmas are useful in verifying assumption b) of Theorem~\ref{thm:light_tailed_proposal}.
\begin{lemma}\label{thm:useful_lemma_1} Let $r(x): = \{y \in \mathcal{X}: \pi(y)< \pi(x)\}$ and $a(x): = \{y \in \mathcal{X}: \pi(y)\geq \pi(x)\}$.
	Consider Algorithm \ref{alg:mode_jumping_2} together with  conditions a), c) and d) of Theorem \ref{thm:light_tailed_proposal}. Assume additionally that for some $\gamma^{*}\in \mathcal{Y}$
	\begin{equation}\label{eq:assumption_single_gamma} 
	\limsup_{|x| \to \infty}  \int_{r(x)}R_{\gamma^{*}, L, i}(x,y) dy < 1 \quad \text{for each } i \in \mathcal{I}.
	\end{equation}
	Then condition \eqref{eq:assumption_limsup} holds.
\end{lemma}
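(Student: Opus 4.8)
The plan is to prove the equivalent lower bound on the acceptance mass, namely
$\liminf_{|x|\to\infty}\inf_{\giY}\int_{a_{\gamma,i}(x)}\RLi(x,y)\,dy>0$, where $a_{\gamma,i}(x):=\mathcal{X}\setminus r_{\gamma,i}(x)=\{y:\tpig(y,i)\ge\tpig(x,i)\}$; then \eqref{eq:assumption_limsup} follows since $\sup_\gamma\int_{r_{\gamma,i}(x)}\RLi = 1-\inf_\gamma\int_{a_{\gamma,i}(x)}\RLi$. First I would strip the $\gamma$- and $i$-dependence out of the rejection set using condition d). Writing $g_{\gamma,i}(y):=\tpig(y,i)/\pi(y)=\wgi Q_i(\mu_i,\Sigma_{\gamma,i})(y)\big/\sum_{j}\wgj Q_j(\mu_j,\Sigma_{\gamma,j})(y)$, the ratio bound \eqref{eq:Q_i_assumption} together with $\sum_j\wgj=1$ and $\wgi>\epsilon_w$ from \eqref{eq:weights_epsilon} produces a constant $c_1>0$, independent of $x,\gamma,i$, with $c_1\le g_{\gamma,i}(y)\le 1$. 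Hence, setting $C:=1/c_1\ge 1$, any $y$ with $\pi(y)\ge C\pi(x)$ obeys $\pi(y)g_{\gamma,i}(y)\ge\pi(x)\ge\pi(x)g_{\gamma,i}(x)$, so that $a_C(x):=\{y:\pi(y)\ge C\pi(x)\}\subseteq a_{\gamma,i}(x)$ for every $\giY$. It therefore suffices to bound $\inf_\gamma\int_{a_C(x)}\RLi(x,y)\,dy$ away from $0$.

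Next I would use the single-$\gamma^{*}$ hypothesis to extract a \emph{Lebesgue} lower bound near $x$. Since the local proposals are translation-invariant elliptical densities $\RLi(x,y)=q_{\gamma,i}(y-x)$ whose scale matrices lie in the compact range $[mI_d,MI_d]$ by \eqref{eq:matrices_estimation}, condition a) gives a common bound $K:=\sup_{\gamma,z}q_{\gamma,i}(z)<\infty$ and a common integrable envelope, so the family $\{q_{\gamma,i}\}_{\giY}$ is uniformly tight. Let $2\eta:=1-\limsup_{|x|\to\infty}\int_{r(x)}R_{\gamma^{*},L,i}(x,y)\,dy>0$ by \eqref{eq:assumption_single_gamma}, and fix $\rho$ with $\sup_{\giY}\int_{|z|>\rho}q_{\gamma,i}(z)\,dz\le\eta/4$. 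For all large $|x|$ the hypothesis yields $\int_{a(x)}R_{\gamma^{*},L,i}(x,y)\,dy\ge\eta$, whence $\eta\le K\,\mathrm{Leb}\big(a(x)\cap B(x,\rho)\big)+\eta/4$ and thus $\mathrm{Leb}\big(a(x)\cap B(x,\rho)\big)\ge\theta:=3\eta/(4K)>0$.

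Finally I would pass from $a(x)$ to $a_C(x)$ via super-exponentiality and then transfer the bound to all $\gamma$. The condition \eqref{eq:super_exponential} gives $\inf_{|y|\ge r}|\nabla\log\pi(y)|\to\infty$ as $r\to\infty$, and since $B(x,\rho)\subseteq\{|y|\ge|x|-\rho\}$, the co-area formula forces the level-set slab $a(x)\setminus a_C(x)=\{y:\pi(x)\le\pi(y)<C\pi(x)\}$ to satisfy $\mathrm{Leb}\big((a(x)\setminus a_C(x))\cap B(x,\rho)\big)\to0$ as $|x|\to\infty$; hence $\mathrm{Leb}\big(a_C(x)\cap B(x,\rho)\big)\ge\theta/2$ for $|x|$ large. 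Invoking the compact-set positivity \eqref{eq:local_proposals_compact_set_assumption}, which furnishes $\kappa_\rho:=\inf_{\giY}\inf_{|z|\le\rho}q_{\gamma,i}(z)>0$, I obtain
\[
\int_{a_{\gamma,i}(x)}\RLi(x,y)\,dy\ \ge\ \int_{a_C(x)\cap B(x,\rho)}q_{\gamma,i}(y-x)\,dy\ \ge\ \kappa_\rho\,\mathrm{Leb}\big(a_C(x)\cap B(x,\rho)\big)\ \ge\ \tfrac12\kappa_\rho\theta>0
\]
uniformly in $\giY$ for all large $|x|$, which is the desired bound.

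The main obstacle is the co-area estimate that the slab measure vanishes: it requires upgrading the purely radial statement \eqref{eq:super_exponential} into a genuinely uniform lower bound on $|\nabla\log\pi|$ over the whole ball $B(x,\rho)$, together with a uniform control of the $(d-1)$-dimensional areas of the level sets $\{\pi=t\}$ meeting $B(x,\rho)$ for $t\in[\pi(x),C\pi(x)]$. Everything else—the envelope/tightness argument and, crucially, the transfer from the assumed parameter $\gamma^{*}$ to arbitrary $\giY$—is soft and uses only that the scale matrices live in the compact range \eqref{eq:matrices_estimation}, so that the proposal densities are uniformly bounded above on $\mathcal{X}$ and uniformly bounded below on $B(x,\rho)$.
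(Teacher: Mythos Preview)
Your overall architecture matches the paper's proof almost exactly: extract a Lebesgue lower bound on $a(x)\cap B(x,\rho)$ from the single-$\gamma^{*}$ hypothesis, peel off a thin slab near the contour so that what remains lies inside $a_{\gamma,i}(x)$ for every $\gamma$, and then transfer to all $\gamma$ via the uniform lower bound on the proposal density on $B(x,\rho)$. Your reduction $a_C(x)\subseteq a_{\gamma,i}(x)$ via condition~d) is precisely the paper's reduction, with your constant $C$ playing the role of the paper's $\tilde K$ from \eqref{eq:t_dist_bound}.

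The one substantive difference is how the slab is dispatched, and here the paper avoids exactly the obstacle you flag. Instead of invoking the co-area formula (which, as you note, would require controlling the $(d{-}1)$-dimensional areas of the level sets inside $B(x,\rho)$), the paper appeals to the Jarner--Hansen radial $\delta$-zone estimate \eqref{eq:jarner_hansen_formula}, already established in \cite{jarner2000geometric}: for the radial zone $C_{\pi(x)}(\delta)$ one has $\mu^{\mathrm{Leb}}\big(B(x,K)\cap C_{\pi(x)}(\delta)\big)=O(\delta)$ uniformly for large $|x|$, with no surface-area input needed. Super-exponentiality is then used only to show that for any fixed $\delta$, once $|x|$ is large enough, every $y\in(a(x)\cap B(x,K))\setminus C_{\pi(x)}(\delta)$ satisfies $\pi(y)/\pi(x)\ge \tilde K$ and hence lies in $a_{\gamma,i}(x)$. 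In other words, your slab $\{y:\pi(x)\le\pi(y)<C\pi(x)\}\cap B(x,\rho)$ is eventually contained in $C_{\pi(x)}(\delta)\cap B(x,\rho)$, and the Jarner--Hansen bound handles the latter directly. This sidesteps the need to upgrade \eqref{eq:super_exponential} to a full gradient lower bound and to control level-set areas; your co-area route would ultimately reduce to the same geometric fact but with more work.
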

\begin{lemma}\label{thm:useful_lemma_2}
	Consider Algorithm \ref{alg:mode_jumping_2} together with  conditions a), c) and d) of Theorem \ref{thm:light_tailed_proposal}. Assume additionally that the target distribution $\pi$ satisfies
	\begin{equation}\label{eq:simplifying_assumption}
	\limsup_{|x| \to \infty}\frac{x}{|x|} \cdot \frac{\nabla \pi(x)}{|\nabla \pi(x)|} < 0.
	\end{equation}
	Then condition \eqref{eq:assumption_limsup} holds.
\end{lemma}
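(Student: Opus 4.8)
The plan is to deduce the lemma from Lemma \ref{thm:useful_lemma_1}. Since Lemma \ref{thm:useful_lemma_1} already shows that \eqref{eq:assumption_single_gamma} implies \eqref{eq:assumption_limsup} under conditions a), c) and d), and these are exactly the conditions assumed here, it suffices to verify \eqref{eq:assumption_single_gamma} for one fixed parameter $\gamma^{*} \in \mathcal{Y}$ (for instance $\gamma^{*} = \gamma_0$). In other words, writing $r(x) = \{y: \pi(y) < \pi(x)\}$ and fixing $i \in \mathcal{I}$, I would establish
\[
\limsup_{|x|\to\infty} \int_{r(x)} R_{\gamma^{*}, L, i}(x, y)\, dy < 1,
\]
where, by condition a), $R_{\gamma^{*}, L, i}(x, \cdot)$ is a fixed elliptical (normal or $t$) density centred at $x$; in particular it is symmetric about $x$, has a continuous positive density, and is bounded below on compact neighbourhoods of $x$.

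First I would observe that \eqref{eq:simplifying_assumption} is precisely the regularity condition (A2) of \cite{jarner2000geometric}: since $\pi > 0$, the unit vectors $\nabla\pi(x)/|\nabla\pi(x)|$ and $\nabla\log\pi(x)/|\nabla\log\pi(x)|$ coincide, and condition c) is their super-exponential assumption (A1). The core of the argument is geometric. From \eqref{eq:simplifying_assumption} there exist $\delta > 0$ and $R_0 > 0$ such that $\frac{x}{|x|}\cdot\frac{\nabla\pi(x)}{|\nabla\pi(x)|} \le -\delta$ for all $|x| \ge R_0$, so the direction of steepest ascent of $\pi$ has a strictly negative radial component bounded away from zero (it points \emph{inwards}). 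I would use this together with super-exponentiality to construct, for each large $|x|$, a cone of directions around $\nabla\pi(x)/|\nabla\pi(x)|$ with a fixed half-angle $\theta_0$, such that a move from $x$ along any such direction (over a suitable range of step lengths) strictly increases $\pi$ and therefore lands in the acceptance region $a(x) = \{y: \pi(y) \ge \pi(x)\}$. Because the proposal is symmetric about $x$ with a density bounded below near the origin, the proposal mass carried by this cone is bounded below by a constant $c > 0$ uniformly in $|x| \ge R_0$. Consequently $\int_{r(x)} R_{\gamma^{*},L,i}(x,y)\,dy \le 1 - c$ for all large $|x|$, which is \eqref{eq:assumption_single_gamma}.

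I expect the main obstacle to be making the cone construction rigorous: the condition \eqref{eq:simplifying_assumption} controls only the infinitesimal behaviour of $\pi$ through its gradient, whereas the proposal makes finite-length moves, so one must rule out that $\pi(x+z)$ dips below $\pi(x)$ for moderate increments $z$ in the cone, and do so uniformly in large $|x|$. This is precisely where super-exponentiality \eqref{eq:super_exponential} is needed — it forces $|\nabla\pi|$ to grow and keeps the gradient direction controlled along the relevant segments — and it is exactly the mechanism established in the proof of Theorem 4.1 (and the associated contour/cone estimates) of \cite{jarner2000geometric}, to which I would appeal. Since everything in this step concerns only $\pi$ and the single fixed proposal $R_{\gamma^{*},L,i}$, no uniformity over $\gamma \in \mathcal{Y}$ is required at this stage; the passage to the full family, and from $r(x)$ to the $\tpig$-rejection sets $r_{\gamma,i}(x)$, is already handled by Lemma \ref{thm:useful_lemma_1}.
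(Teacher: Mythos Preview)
Your proposal is correct and follows essentially the same route as the paper: reduce to Lemma~\ref{thm:useful_lemma_1} and verify \eqref{eq:assumption_single_gamma} for a single fixed $\gamma^{*}$ via the cone argument from \cite{jarner2000geometric}. Two minor points of alignment: the relevant result in \cite{jarner2000geometric} is their Theorem~4.3 (not~4.1), and the cone there is taken around the inward radial direction $-x/|x|$ rather than around $\nabla\pi(x)/|\nabla\pi(x)|$, which makes the finite-step monotonicity cleaner; the paper also remarks that Jarner--Hansen's spherical-proposal hypothesis can be relaxed to mere symmetry $\RLi(x,y)=\RLi(y,x)$, which is what covers the elliptical case here.
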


The following corollary shows Algorithm \ref{alg:mode_jumping_2} in a standard setting is successful at targeting mixtures of normal distributions. 
\begin{cor}\label{thm:corrollary}
	Let the target distribution $\pi$ be given by 
	$$
	\pi(x) \propto w_1 \exp\left(-p_1(x) \right) + \ldots + w_n \exp\left(-p_n(x) \right),
	$$
	where $w_i>0$ and $p_i$ is a polynomial of order $\geq 2$  for each $i = 1, \ldots, n$. If additionally $Q_i$ for $i \in \mathcal{I}$ follows the multivariate $t$ distribution with the same number of degrees of freedom, and $\RLi(\textbf{0}, \cdot)$ follows the normal distribution, the assumptions of Lemma \ref{thm:useful_lemma_2} are satisfied.
\end{cor}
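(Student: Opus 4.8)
The plan is to verify the four hypotheses of Lemma~\ref{thm:useful_lemma_2}, namely conditions a), c) and d) of Theorem~\ref{thm:light_tailed_proposal} together with the gradient condition~\eqref{eq:simplifying_assumption}. Conditions a) and d) are immediate from the stated choices and the remarks following Theorem~\ref{thm:light_tailed_proposal}. Since each local proposal $\RLi(\mathbf{0},\cdot)$ is normal and \eqref{eq:matrices_estimation} bounds the covariances uniformly, the family has uniformly bounded densities and satisfies \eqref{eq:local_proposals_compact_set_assumption}, giving a). Because every $Q_i$ is a multivariate $t$ distribution with a common number of degrees of freedom $\nu$, the ratio in \eqref{eq:Q_i_assumption} reduces to bounded determinant prefactors times a ratio of $(1+\text{quadratic form})^{-(\nu+d)/2}$ factors with \emph{identical} exponents; this ratio has a finite limit as $|x|\to\infty$ and is continuous and positive elsewhere, so the supremum is finite and d) holds. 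The substance is therefore c) and \eqref{eq:simplifying_assumption}.

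Next I would set up the gradient algebra once and reuse it. Writing $f_i(x):=w_i\exp(-p_i(x))$ and $\lambda_i(x):=f_i(x)/\sum_j f_j(x)$, one has $\lambda_i\geq 0$, $\sum_i\lambda_i=1$, and
\begin{equation*}
\nabla\log\pi(x)=\frac{\sum_i\nabla f_i(x)}{\sum_i f_i(x)}=-\sum_i\lambda_i(x)\,\nabla p_i(x).
\end{equation*}
The key sub-step is that, for each $i$, $\frac{x}{|x|}\cdot\nabla p_i(x)\to+\infty$ as $|x|\to\infty$. For a genuine normal component $p_i(x)=\tfrac12(x-\mu_i)^T\Sigma_i^{-1}(x-\mu_i)$ we have $\nabla p_i(x)=\Sigma_i^{-1}(x-\mu_i)$, so $\frac{x}{|x|}\cdot\nabla p_i(x)\geq\lambda_{\min}(\Sigma_i^{-1})\,|x|-C_i\to+\infty$; for a polynomial of degree at least two with positive definite leading homogeneous form, Euler's identity applied to that form yields the same $|x|^{d_i-1}$ growth, dominating the bounded cross-terms and lower-order monomials. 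Given this, c) follows since a convex combination dominates its minimum entry:
\begin{equation*}
\frac{x}{|x|}\cdot\nabla\log\pi(x)=-\sum_i\lambda_i(x)\,\frac{x}{|x|}\cdot\nabla p_i(x)\;\leq\;-\min_i\frac{x}{|x|}\cdot\nabla p_i(x)\;\longrightarrow\;-\infty,
\end{equation*}
which is \eqref{eq:super_exponential}; positivity of $\pi$ and continuity of its first derivatives are automatic for this analytic form.

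For \eqref{eq:simplifying_assumption} I would use that $\pi>0$ gives $\nabla\pi/|\nabla\pi|=\nabla\log\pi/|\nabla\log\pi|$, and write (in the normal case) $\nabla\log\pi(x)=-M(x)x+b(x)$, where $M(x):=\sum_i\lambda_i(x)\Sigma_i^{-1}$ and $b(x):=\sum_i\lambda_i(x)\Sigma_i^{-1}\mu_i$ satisfies $|b(x)|\leq B$. The matrix $M(x)$ is a convex combination of fixed positive definite matrices, so its eigenvalues lie in a fixed interval $[c_1,c_2]\subset(0,\infty)$ independent of $x$. Bounding the radial numerator $\frac{x}{|x|}\cdot\nabla\log\pi(x)$ by $-c_1|x|+B$ and the norm $|\nabla\log\pi(x)|$ by $c_2|x|+B$ then gives $\frac{x}{|x|}\cdot\frac{\nabla\pi(x)}{|\nabla\pi(x)|}\leq\frac{-c_1|x|+B}{c_2|x|+B}\to -c_1/c_2<0$, so the $\limsup$ is strictly negative and \eqref{eq:simplifying_assumption} holds; with a)--d) already in hand, Lemma~\ref{thm:useful_lemma_2} applies.

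I expect the main obstacle to be the uniform control of the gradient \emph{direction} in \eqref{eq:simplifying_assumption}: one must guarantee that forming the state-dependent convex combination $M(x)$ cannot let its condition number blow up nor its radial component degenerate as the mixing weights $\lambda_i(x)$ vary arbitrarily, and, in the general polynomial case, that positivity of the leading forms survives mixing while the $|x|^{d_i-1}$ radial growth uniformly dominates every bounded and lower-order contribution across all directions simultaneously. The normal case isolates this difficulty cleanly because $\nabla p_i$ is affine, which is why the corollary is stated for normal mixtures.
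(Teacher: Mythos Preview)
Your verification of conditions a) and d), and your direct computation for c) via the convex-combination identity $\nabla\log\pi=-\sum_i\lambda_i\nabla p_i$, are sound. However, there is a genuine gap in your treatment of \eqref{eq:simplifying_assumption}: you prove it only for quadratic $p_i$ (i.e.\ Gaussian components), and then assert that ``the corollary is stated for normal mixtures''. This is a misreading. The corollary is stated for arbitrary polynomials $p_i$ of order $\geq 2$, so your argument as written does not cover the claim. Your $M(x)x+b(x)$ decomposition relies essentially on $\nabla p_i$ being affine; once the $p_i$ have different degrees $d_i\geq 2$, the gradient is a state-dependent convex combination of vectors growing at different polynomial rates, and controlling the \emph{direction} uniformly over all mixing weights $\lambda_i(x)$ is exactly the obstacle you flag but do not resolve.

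The paper closes this gap by citing two results from \cite{jarner2000geometric} rather than computing directly: their Theorem~4.6 shows that any single density $\exp(-p(x))$ with $p$ a polynomial of order $\geq 2$ is super-exponential and satisfies \eqref{eq:simplifying_assumption}, and their Theorem~4.4 shows that both properties are preserved under positive linear combinations. This modular approach avoids the mixture-direction difficulty entirely, because the closure lemma handles it once and for all. Your route is more self-contained for the quadratic case but would need either a substantial extension (controlling $\nabla\log\pi/|\nabla\log\pi|$ when the $d_i$ differ) or the same citation to cover the corollary as stated.
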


\section{Examples}\label{section:examples} 
In this section we present  empirical results for our method (Algorithm~\ref{alg:mode_jumping} preceded by the Algorithm \ref{alg:burn_in_algorithm}).  We test its performance on three examples -- the first one is a mixture of two Gaussians motivated by \cite{woodard2009sufficient}; the second one is a mixture of fifteen multivariate $t$ distributions and five banana-shaped ones; the third one is a Bayesian model for sensor network localisation. Our implementation admits three versions, varying in the way the jumps between modes are performed. In particular, we consider here the deterministic jump and two independent proposal jumps, with Gaussian and $t$-distributed proposals.

Additionally, we compare the performance of our algorithm against adaptive parallel tempering   \cite{miasojedow2013adaptive}, which was chosen here as it is the refined version of the most commonly used MCMC method for multimodal distributions (parallel tempering). What is more, this algorithm
has a generic implementation, where the user only needs to provide the target density function. In order to make a comparison between the efficiency of these algorithms, among other things, we analyse the Root Mean Square Error  (RMSE) divided by the square root of the dimension of the state space, given a computational budget. We measure the computational cost by the number of evaluations of the target distribution (and its gradient, if applicable), as this is typically the dominating factor in real data examples. Herein we define RMSE as the Euclidian distance between the true $d$-dimensional expected value (if known) and its empirical estimate based on MCMC samples.

In order to depict the variability in the results delivered by both methods, each simulation was repeated 20 times. For exact settings of the experiments, as well as some additional results, we refer the reader to Supplementary Material B.

\subsection{Mixture of Gaussians}\label{section:mixture_of_gaussians}
The following target density was studied by~\cite{woodard2009sufficient}:
\begin{equation}\label{eq:woodard_example_target}
\pi(x) = \frac{1}{2} N\left(-\underbrace{(1, \ldots, 1)}_{d} ,\sigma_1^2 I_d \right) + \frac{1}{2} N\left(\underbrace{(1, \ldots, 1)}_{d} ,\sigma_2^2 I_d \right)
\end{equation}
for $\sigma_1  \neq \sigma_2$. In particular, they showed that the parallel tempering algorithm will tend to stay in the wider mode and, if started in the wider mode, may take a long time before getting to the more narrow one. We looked at the results  for the target distribution \eqref{eq:woodard_example_target} in several different dimensions $d$ ranging between 10 and 200, for $\sigma_1^2 =0.5\sqrt{d/100}$ and   $\sigma_2^2 =\sqrt{d/100}$. The results for our method shown below are based on 500,000 iterations of the main algorithm, preceded by the burn-in algorithm including 1500 BFGS runs. The length of the covariance matrix estimation was chosen automatically using the rule described in Supplementary Material B and varied between 3000 iterations (for $d=10$) to 1,023,000 iterations (for $d=200$) per mode. For dimensions $d=10$ and $d=20$ we ran also the adaptive parallel tempering (APT) algorithm, with 700,000 iterations and 5 temperatures.  Overall this requires 3,500,000 evaluations of the target density that cannot be performed in parallel, despite the name of the method, as the communication between chains running at different temperatures is needed after every iteration. In the light of the tendency of the parallel tempering algorithm to stay in wider modes, each time the APT algorithm was started in $-\underbrace{(1, \ldots, 1)}_{d} \in \mathbb{R}^d$. In order to base our analysis on the same sample size of 500,000 for the two methods, in case of adaptive parallel tempering we applied an initial burn-in period of 200,000 steps.

\begin{figure}[t]
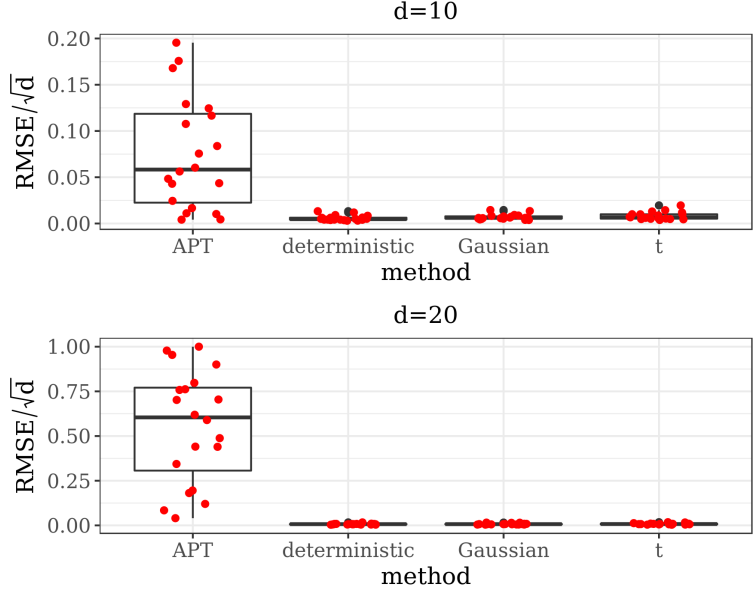

	\begin{center}
		\begin{subfigure}{0.8\textwidth}
			\centering
			\includegraphics[width=10cm]{/Mixture_Gaussians/mixture_gaussians_boxplots_with_apt_10.png} 
		\end{subfigure}
		\begin{subfigure}{0.8\textwidth}
			\centering
			\includegraphics[width=10cm]{/Mixture_Gaussians/mixture_gaussians_boxplots_with_apt_20.png} 
			
		\end{subfigure} 
		\caption{\footnotesize Boxplots  of the values of $\text{RMSE}/\sqrt{d}$ for the mixture of Gaussians across 20 runs of the experiment, dimensions 10 and 20. We compare the results of APT with the three JAMS versions: deterministic, Gaussian and $t$-distributed jumps. Note different scales on the $y$-axis.} 
		\label{fig:gaussian_mixture_boxplots_dim_10_20}
	\end{center}
\end{figure}
\FloatBarrier

\begin{table}[ht]
	\centering
	\begin{tabular}{|l|rr|rr|rr|}
		\hline
		& \multicolumn{2}{c|}{deterministic} & \multicolumn{2}{c|}{Gaussian}& \multicolumn{2}{c|}{$t$-distributed} \\ 
		\hline
		& Lowest & Highest & Lowest & Highest & Lowest & Highest \\
		d=10 & 0.98 & 0.99 & 0.85 & 0.87 & 0.71 & 0.73 \\ 
		d=20 & 0.98 & 0.99 & 0.79 & 0.83 & 0.66 & 0.68 \\ 
		d=80 & 0.91 & 0.98 & 0.23 & 0.41 & 0.24 & 0.39 \\ 
		d=130 & 0.72 & 0.98 & 0.04 & 0.13 & 0.06 & 0.15 \\ 
		d=160 & 0.79 & 0.97 & 0.01 & 0.07 & 0.03 & 0.07 \\ 
		d=200 & 0.64 & 0.97 & 0.01 & 0.05 & 0.02 & 0.06 \\ 
		\hline
	\end{tabular}
	\caption{The lowest and the highest value (across 20 runs of the experiment) of the  acceptance rates of jump moves between the two modes for the mixture of Gaussians for different jump methods and dimensions.}\label{table:gaussian_mixture}
\end{table}

\begin{figure}[t]
	\begin{center}
		\begin{subfigure}{0.49\textwidth}
			\centering
			\includegraphics[width=7cm]{/Mixture_Gaussians/mixture_gaussians_boxplots_80.png} 
		\end{subfigure}
		\begin{subfigure}{0.49\textwidth}
			\centering
			\includegraphics[width=7cm]{/Mixture_Gaussians/mixture_gaussians_boxplots_200.png} 
		\end{subfigure} 
		\caption{\footnotesize Boxplots  of the values of $\text{RMSE}/\sqrt{d}$ for the mixture of Gaussians across 20 runs of the experiment, dimensions 80 and 200.  Note different scales on the $y$-axis.} 
		\label{fig:gaussian_mixture_boxplots_dim_80_200}
	\end{center}
\end{figure}

\begin{figure}[t]
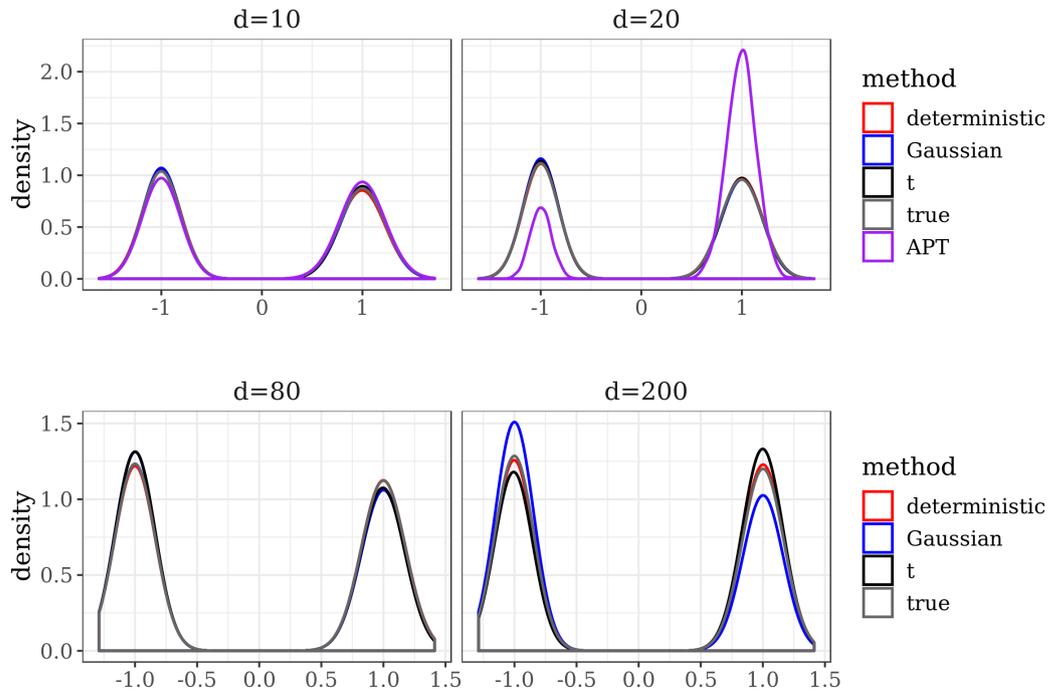

	\begin{center}
		\begin{subfigure}{1\textwidth}
			\includegraphics[width=14cm]{/Mixture_Gaussians/mixture_gaussians_density_plot_median_with_apt_10_20.png} 
		\end{subfigure}
		\begin{subfigure}{1\textwidth}
			\includegraphics[width=14cm]{/Mixture_Gaussians/mixture_gaussians_density_plot_median_with_apt_80_200.png} 
		\end{subfigure}
		\caption{\footnotesize Density plots for dimensions 10, 20, 80 and 200. The upper panel shows a comparison between APT and the three JAMS versions. The simulations chosen for the analysis correspond to the median value of RMSE across 20 experiments (the tenth largest value of RMSE).}\label{fig:gaussian_mixture_density_plot}
	\end{center}
\end{figure}
\FloatBarrier
The results presented in the boxplots of Figure \ref{fig:gaussian_mixture_boxplots_dim_10_20}, as well as the 
upper panel of
density plots (Figure \ref{fig:gaussian_mixture_density_plot}) show that our method outperforms adaptive parallel tempering on this example, even when the latter method is given a much larger computational budget. The summary of the acceptance rates of the jump moves presented in Table \ref{table:gaussian_mixture} demonstrates that the algorithm preserves good mixing between the modes in all its jump versions up to dimension 80. It is remarkable that the deterministic jump ensures excellent mixing even in much higher dimensions, outperforming the remaining two methods (see Figure \ref{fig:gaussian_mixture_boxplots_dim_80_200} and the lower panel of Figure \ref{fig:gaussian_mixture_density_plot}), with the acceptance rate between 0.64 and 0.97 in dimension 200.

\subsection{Mixture of $t$ and banana-shaped distributions}\label{section:mixture_bananas}
A classic example of a multimodal distribution is a mixture of 20 bivariate Gaussian distributions introduced in \cite{kou2006discussion} (in two versions, with equal and unequal weights and covariance matrices). It was later studied also by \cite{miasojedow2013adaptive} and \cite{tak2017repelling}.
Our algorithm works well on both versions, however, since the example is relatively simple and the performance of the existing methods on it is already satisfying, we do not expect our method to yield much improvement. Therefore, we decided to modify this example in the way described below in order to make it more challenging. Instead of the Gaussian distribution, the first five modes follow the banana-shaped distribution with $t$ tails and the remaining ones -- multivariate $t$ with 7 degrees of freedom and the covariance matrices $0.01 \sqrt{d}  I_d$, where $d$ is the dimension (the covariance matrices in the original example were given by $0.01 I_2$). The weights are assumed to be equal to 0.05. We consider dimensions $d=10$ and $d=20$ by repeating the original coordinates of the centres of the modes five and ten times, respectively.

Recall the definition of the $d$-dimensional banana-shaped distribution introduced by \cite{haario1999adaptive}\footnote{Originally in the paper by Haario et. al. \cite{haario1999adaptive} the function $f$ was the density of the Gaussian distribution $N(\textbf{0},C)$.}. Let $f$ be the density of the centred $t$ distribution with 7 degrees of freedom and shape matrix $C$, for $C = \text{diag}(100,\underbrace{1, \ldots,1}_{d-1})$. Then the density of the banana-shaped distribution (with $t$-tails) is given by
$$
f_b = f\circ \phi_b,
$$
where 
\begin{equation}\label{eq:banana_distribution_definition}
\phi_b(x_1, \ldots, x_d) = (x_1, x_2+bx_1^2-100b, x_3, \ldots, x_d).
\end{equation}
In order to decrease the variance of the banana-shaped elements of the mixture, we used the following transformation of $f_b$ (setting $b=0.03$)
$$
\tilde{f}_{0.03}(x_1, \ldots, x_d) = \left(20/\sqrt[4]d\right)^d  f_{0.03} \left(20/\sqrt[4]d \cdot(x_1, \ldots,  x_d) \right).
$$
Furthermore,  the formula on the second coordinate of \eqref{eq:banana_distribution_definition} was assigned to coordinate 2, 4, 6, 8, 10 for mode 1, 2, 3, 4 and 5, respectively.

The results below are based on 500,000 iterations, preceded by 40,000 BFGS runs. The number of iterations of the covariance matrix estimation varied between 7,000 and 15,000 steps per mode for dimension $d=10$ and between 15,000 and 63,000 steps per mode for dimension $d=20$. For adaptive parallel tempering we used 2,100,000 iterations and 5 temperatures. We applied an initial burn-in period of 600,000 steps and we thinned the chain keeping every third sample. 

In Supplementary Material B we present results for the same example obtained using JAMS in dimensions $d=50$ and $d=80$ assuming that the modes  of the target distribution are known, since mode finding (in particular, getting to each basin of attraction) is the main bottleneck for this example.

\begin{figure}[t]
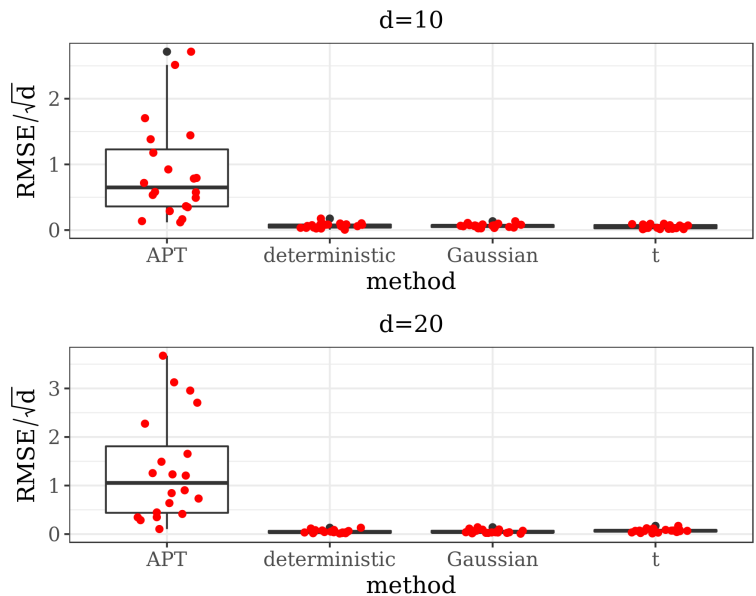

	\begin{center}
		\begin{subfigure}{0.8\textwidth}
			\centering
			\includegraphics[width=10cm]{/Mixture_bananas/mixture_bananas_boxplots_with_apt_10.png} 
		\end{subfigure}
		\begin{subfigure}{0.8\textwidth}
			\centering
			\includegraphics[width=10cm]{/Mixture_bananas/mixture_bananas_boxplots_with_apt_20.png} 
			
		\end{subfigure} 
		\caption{\footnotesize Boxplots  of the values of $\text{RMSE}/\sqrt{d}$ for the mixture of banana-shaped and $t$-distributions across 20 runs of the experiment, dimensions 10 and 20. We compare the results of APT with the three JAMS versions: deterministic, Gaussian and $t$-distributed jumps. Note different scales on the $y$-axis.} 
		\label{fig:banana_mixture_boxplots_dim_10_20}
	\end{center}
\end{figure}
\FloatBarrier

\begin{table}[ht]
	\begin{center}
		\begin{tabular}{|l|rr|rr|rr|}
			\hline
			& \multicolumn{2}{c|}{deterministic} & \multicolumn{2}{c|}{Gaussian}& \multicolumn{2}{c|}{$t$-distributed} \\ 
			\hline
			& Lowest & Highest & Lowest & Highest & Lowest & Highest \\
			d=10 & 0.27 & 0.77 & 0.12 & 0.52 & 0.20 & 0.65 \\ 
			d=20 & 0.20 & 0.75 & 0.09 & 0.35 & 0.11 & 0.48 \\ 
			\hline
		\end{tabular}
	\end{center}
	\caption{The lowest and the highest value (across 20 runs of the experiment) of the acceptance rate of jumps from a given mode, dimensions 10 and 20.}\label{table:banana_mixture}
\end{table}

For dimensions $d=10$ and $d=20$ all modes were found by the BFGS runs in each of the 20 simulations. Even though the banana-shaped modes are highly skewed, our method exhibits good between-mode mixing properties, as shown in Table \ref{table:banana_mixture}. Figure  \ref{fig:banana_mixture_boxplots_dim_10_20} illustrates that the empirical means based on JAMS samples approximate well the true expected value of the target distribution, consistently across all experiments, and that our method significantly outperforms APT with a smaller computational cost.

\subsection{Sensor network localisation}\label{section:sensor_network}
We consider here an example from \cite{ihler2005nonparametric}, analysed later by \cite{ahn2013distributed}, \cite{lan2014wormhole} and, in a modified version by \cite{tak2017repelling}. There are 11 sensors with locations $x_1, \ldots, x_{11}$ scattered on a space $[0,1]^2$. The locations of sensors $x_1, \ldots, x_8$ are unknown, the remaining three locations are known. For any two sensors $i$ and $j$ we observe the distance $y_{ij}$ between them with probability $\exp\left(- \frac{\| x_i - x_j \|^2}{2\times 0.3^2}\right)$. Once observed, the distance $y_{ij}$ follows the normal distribution given by $y_{ij} \sim N\left(\| x_i - x_j \|, 0.02^2\right)$. Let $w_{ij}$ be equal to 1 when $y_{ij}$ is observed and $0$ otherwise, and denote $y:= \{y_{ij}\}$ and $w:=\{w_{ij}\}$. The goal of the study is to make inference about the unknown locations $x_i = (z_{i1}, z_{i2})$ for $i = 1, \ldots 8$ given $y$ and $w$. Following \cite{ahn2013distributed} and \cite{lan2014wormhole} we put an improper uniform prior on each of the coordinates $z_{i1}$ and $z_{i2}$ for $i = 1, \ldots 8$. 
The resulting posterior distribution is given by
$$
\pi(x_1, \ldots x_8 \big| y, w) \propto \prod_{\substack{j = 2, \ldots, 11\\
		i = 1, \ldots, 8\\ i <j}} f_{ij}\left(x_i, x_j|y_{ij}, w_{ij}\right),
$$ 
where
$$
f_{ij}\left(x_i, x_j|y_{ij}, w_{ij}\right) = \begin{cases} \exp\left(- \frac{\| x_i - x_j \|^2}{2\times 0.3^2}\right) \exp\left(- \frac{\left(y_{ij} -\| x_i - x_j \|\right)^2}{2\times 0.02^2}\right)
& \text{if } w_{ij}= 1,\\
1-\exp\left(- \frac{\| x_i - x_j \|^2}{2\times 0.3^2}\right)              & \text{otherwise.}
\end{cases}
$$
Since there are few observed distances with known locations (see: top left panel of Figure \ref{fig:sensor_network_scatterplot}), the model is non-identifiable which results in multimodality of the posterior distribution.

We ran JAMS on this example for 500,000 iterations of the main algorithm. This was preceded by 10,000 BFGS runs and covariance matrix estimation (between 7000 and 15,0000 iterations per mode).  For parallel tempering we used 700,000 iterations (with a burn-in period of 200,000) and 4 temperatures. If JAMS is implemented on 8 cores, this means that running an APT simulation is about twice as costly as running a JAMS one (see Supplementary Material B for details).

Despite the fact that for all 20 APT experiments the acceptance rates at all temperature levels, as well as for between-temperature swaps, converged to the optimal acceptance rate 0.234 (see \cite{atchade2011towards}), the behaviour of this algorithm was unstable. As shown in Figure~\ref{fig:sensor_network_scatterplot}, in case of APT the estimation of the location of sensor 1 depends on the starting point. In case of JAMS, both modes for $x_{1}$ (in red) are represented.
\begin{figure}[t]
	\begin{center}
		\includegraphics[width=12cm]{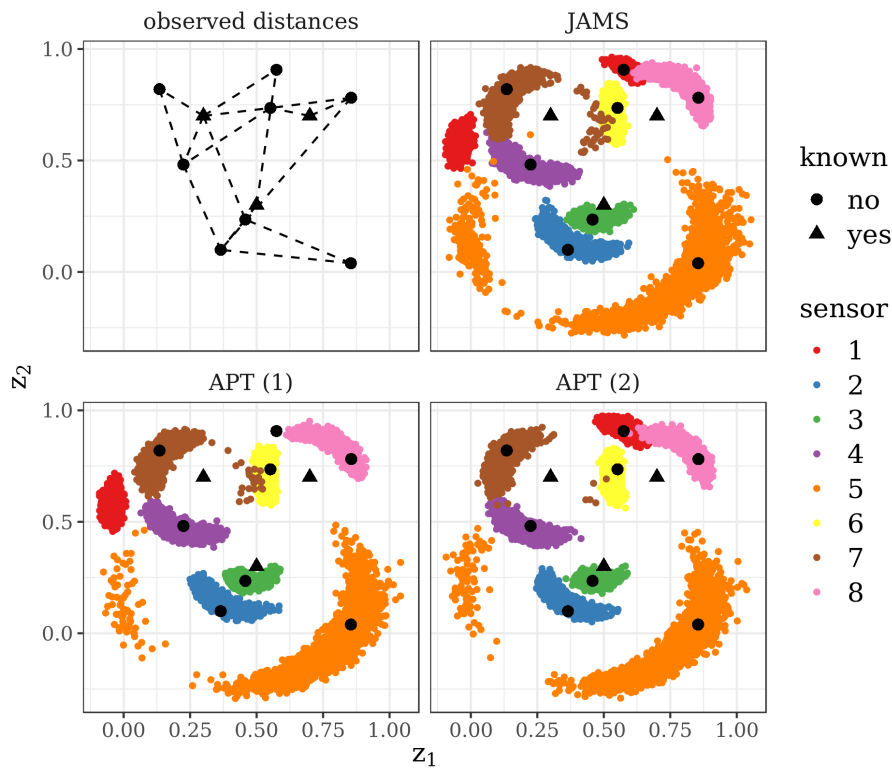} 
	\end{center}
	\caption{\footnotesize Black triangles and dots denote true locations of the sensors with known and unknown locations, respectively. Top left panel: dashed lines represent observed distances between sensors. Top right panel: posterior samples obtained using JAMS (with Gaussian jumps) for locations $x_1, \ldots x_8$. Bottom panels: posterior samples using APT for two different starting points.}
	\label{fig:sensor_network_scatterplot}
\end{figure}
\FloatBarrier
Figure \ref{fig:sensor_network_boxplot} illustrates stability of JAMS across all experiments and jump methods. In Supplementary Material B we assign an even higher computational budget to adaptive parallel tempering allowing for 5 temperatures and observe a substantial improvement in mixing and stability, but the results are still worse than those of JAMS.

\begin{figure}[ht]
	\begin{center}
		\centering
		\includegraphics[width=10cm]{/Sensor_network/sensor_network_boxplot_mean_1st_coordinate_4_temp.png} 
		\caption{\footnotesize A boxplot of the mean value of the first coordinate of sensor 1 for 20 runs of the experiment for APT (with 4 temperature levels) and three versions of JAMS.} 
		\label{fig:sensor_network_boxplot}
	\end{center}
\end{figure}

\section{Summary and discussion}\label{section:summary}
The approach we proposed here is based on three fundamental ideas. Firstly, we split the task into mode finding and sampling from the target distribution. Secondly, we base our algorithm on local moves responsible for mixing within the same mode and jumps that facilitate crossing the low probability barriers between the modes. Finally, we account for inhomogeneity between the modes by using different proposal distributions for local moves at each mode and adapting their parameters separately. Similarly, the jump moves account for the difference in geometry of the two involved modes. This is possible thanks to the auxiliary variable approach which enables assigning each MCMC sample to one of the modes and ensuring that it is unlikely to escape to another mode via local moves. This improves over the popular tempering-based approaches, which do not have the mechanism of controlling the mode at each step, and therefore their adaptive versions \cite{miasojedow2013adaptive} only learn the global covariance matrix rather than the local ones. This is highly inefficient if the shapes of the modes are very distinct and results in exponential efficiency decay. 

The optimisation-based approaches are naturally well-suited for the task of collecting the MCMC samples separately for each mode and learning the covariance matrices on this basis. However, the approaches known in the literature do not have a suitable framework for adaptation and tend to be either very costly (e.g. \cite{zhou2011multi}) or to ignore the issue of the possibility of moving between the modes via local steps (e.g.\cite{ahn2013distributed}). Moreover, some of the other fundamental issues of optimisation-based methods have not been systematically addressed by the researchers so far. These include an efficient design of the mode finding phase, distinguishing between newly discovered modes and replicated ones, as well as adapting beyond the infrequent regeneration times, which does not require case-specific calculations. We hope that the method we proposed will fill this gap.

Furthermore, an important advantage of our approach from the point of view of the modern compute resources is that a large part of the algorithm can be implemented on multiple cores.

To develop a methodological approach and prove ergodic results for our algorithm, we  introduced the Auxiliary Variable Adaptive MCMC class. As discussed briefly in Section \ref{section:auxiliary_variable}, there are other adaptive algorithms falling in this category, so our theoretical results may potentially be useful beyond the scope of the Jumping Adaptive Multimodal Sampler. We have also shown that the Auxiliary Variable Adaptive MCMC methods enjoy robust ergodicity properties analogous to Adaptive MCMC under essentially the same well-studied regularity conditions.

Currently the main bottleneck of the method is mode finding, and in particular, sampling  starting points for optimisation runs in such a way that there is at least one point in the basin of attraction of each mode. Therefore in our future work we will focus on designing more efficient algorithms for identifying high probability regions.

\section*{Acknowledgements}
We thank Louis Aslett, Ewan Cameron, Arnaud Doucet, Geoff Nicholls and Jeffrey Rosenthal for helpful comments, and Shiwei Lan for pointing us to the data for the sensor network example. We would also like to thank Radu Craiu for providing the data set for the LOH example considered in Supplementary Material B.

\begin{center}
\large \textbf{\textsc{Supplementary Material A}}
\end{center}
 In Section \ref{section:proofs_3} we present the proofs of our theoretical results stated in Section \ref{section:auxiliary_variable}. In Section \ref{section:proofs_4} we prove the results stated in Section~\ref{section:ergodicty_amcmc}. In Section \ref{section:other_algs} we give some comments about other algorithms in the Auxiliary Variable Adaptive MCMC class.

\section{Proofs for Section 3}\label{section:proofs_3}
To prove our results presented in Section \ref{section:auxiliary_variable} we will use the coupling construction analogous to \cite{roberts2007coupling} (see also
\cite{roberts2013note} for a more rigorous presentation). 

Our proofs will be rigorous and will rely on an explicit coupling construction. The
more complex setting of Auxiliary Variable Adaptive MCMC necessitates a few
preliminary steps: to interpolate between the adaptive process and the target distribution we shall construct two processes to be thought of as "Markovian" and "intermediate". These processes will facilitate application of the triangle inequality in the proofs. 

Recall $\{(\tilde{X}_n,
\Gamma_n)\}_{n=0}^{\infty},$ the adaptive process on $\tilde{X}:= \mathcal{X}\times\Phi$ defined in Section \ref{section:auxiliary_variable} with dynamics governed by equation \eqref{dyn_tilde_x}.
On the same probability space define two
additional sequences, namely $\{(\tilde{X}^{m(t^*)}_n,
\Gamma ^{m(t^*)}_n)\}_{n=0}^{\infty}$ and $\{(\tilde{X}^{i(t^*,\kappa)}_n,
\Gamma ^{i(t^*,\kappa)}_n)\}_{n=0}^{\infty}$  which are identical to 
$\{(\tilde{X}_n,
\Gamma_n)\}_{n=0}^{\infty},$ before
pre-specified time $t^*$, i.e.
\begin{eqnarray*}
	(\tilde{X}^{m(t^*)}_n,
	\Gamma ^{m(t^*)}_n) \; = \; (\tilde{X}^{i(t^*,\kappa)}_n,
	\Gamma ^{i(t^*,\kappa)}_n) & := &
	(\tilde{X}_n,
	\Gamma_n) \qquad  \textrm{ for } \; n \leq t^*. \end{eqnarray*}
After time  $t^*,$ the adaptive parameter $\Gamma^{m(t^*)}$ of  $\{(\tilde{X}^{m(t^*)}_n,
\Gamma ^{m(t^*)}_n)\}_{n=0}^{\infty}$ freezes and
$\tilde{X}^{m(t^*)}_n $ becomes a Markov chain with the marginal
dynamics defined for $n+1 > t^*$ as:
\begin{eqnarray} \label{Markov_seq_G}
\Gamma ^{m(t^*)}_{n+1} & := &  \Gamma ^{m(t^*)}_{n} \; \big(= \Gamma
^{m(t^*)}_{t^*} \big), \\ \label{Markov_seq_X}
\mathbb{P}\left[ \tilde{X}^{m(t^*)}_{n+1} \in \tilde{B} \; \big| \;
\tilde{X}^{m(t^*)}_{n}=\tilde{x}, \; \Gamma
^{m(t^*)}_{t^*} = \gamma \right] & = & \tilde{P}_{\gamma}(\tilde{x}, \tilde{B}),
\qquad \tilde{B} \in \mathcal{B}(\tilde{\mathcal{X}}).
\end{eqnarray}

The second sequence  $\{(\tilde{X}^{i(t^*,\kappa)}_n,
\Gamma ^{i(t^*,\kappa)}_n)\}_{n=0}^{\infty}$ interpolates between  $\{(\tilde{X}_n,
\Gamma_n)\}_{n=0}^{\infty}$ and $\{(\tilde{X}^{m(t^*)}_n,
\Gamma ^{m(t^*)}_n)\}_{n=0}^{\infty}$. We first define the
dynamics of $\Gamma^{i(t^*,\kappa)}$ for $n+1 > t^*,$ as:
\begin{eqnarray} \label{Inter_seq_G}
\Gamma ^{i(t^*,\kappa)}_{n+1} & := &  \left\{  \begin{array}{ll} 
\Gamma_{n+1} & \textrm{if } \; \sup_{\tilde{x} \in
	\tilde{\mathcal{X}}} \| \tilde{P}_{\Gamma_{n+1}}(\tilde{x},
\cdot) - \tilde{P}_{\Gamma ^{i(t^*,\kappa)}_{n}}(\tilde{x},
\cdot) \|_{TV} \leq \kappa, 
\\
\Gamma ^{i(t^*,\kappa)}_{n} & \textrm{otherwise;} \end{array} \right.
\end{eqnarray}
and define an auxiliary stopping time that records decoupling of
$\Gamma_{n}$ and $\Gamma ^{i(t^*,\kappa)}_{n}$ as 
\begin{eqnarray}
\tau_{i(t^*,\kappa)} & :=& \min\{n: \Gamma ^{i(t^*,\kappa)}_{n} \neq \Gamma_{n}\},
\end{eqnarray}
with the convention $\min \emptyset = \infty.$ Now, define the
dynamics of $\tilde{X}^{i(t^*, \kappa)}_{n}$ as:
\begin{eqnarray} \label{inter_seq_X_1}
\tilde{X}^{i(t^*, \kappa)}_{n} & := & \tilde{X}_{n}  \qquad 
\textrm{for } \; n \leq
\tau_{i(t^*,\kappa)}, \textrm{ and}
\\ \label{inter_seq_X_2}
\mathbb{P}\left[ \tilde{X}^{i(t^*, \kappa)}_{n+1} \in \tilde{B} \; \big| \; \tilde{X}^{i(t^*, \kappa)}_{n}=\tilde{x}, \Gamma
^{i(t^*, \kappa)}_n = \gamma \right] & = & \tilde{P}_{\gamma}(\tilde{x},
\tilde{B})  
\end{eqnarray}
for $n+1 >\tau_{i(t^*,\kappa)}$ and $\tilde{B}\in\mathcal{B}(\tilde{\mathcal{X}})$.

Define also the filtration $\{\mathcal{G}_n^{*}\}_{n=0}^{\infty}$ as
an extension of $\{\mathcal{G}_n \}_{n=0}^{\infty}$ by:
\begin{equation} \label{ext_filtr}
\mathcal{G}_n^{*} :=
\sigma \left\{
\{(\tilde{X}_k,
\Gamma_k)\}_{k=0}^{n},\; \{(\tilde{X}^{m(t^*)}_k,
\Gamma ^{m(t^*)}_k)\}_{k=0}^{n}, \; \{(\tilde{X}^{i(t^*,\kappa)}_k,
\Gamma ^{i(t^*,\kappa)}_k)\}_{k=0}^{n}
\right\}.\end{equation}

Let the distributions \[ \tilde{A}_n^{m(t^*), (\tilde{x},\gamma)}(\cdot), \quad
\tilde{A}_n^{m(t^*), \mathcal{G}_t^*}(\cdot),\quad
A_n^{m(t^*), (\tilde{x},\gamma)}(\cdot), \quad A_n^{m(t^*),\mathcal{G}_t^*}(\cdot),\]  and \[ \tilde{A}_n^{i(t^*, \kappa), (\tilde{x},\gamma)}(\cdot),
\quad \tilde{A}_n^{i(t^*, \kappa), \mathcal{G}_t^*}(\cdot),
\quad A_n^{i(t^*, \kappa), (\tilde{x},\gamma)}(\cdot), \quad A_n^{i(t^*, \kappa), \mathcal{G}_t^*}(\cdot), \]  be
analogues of $\tilde{A}_n^{(\tilde{x},\gamma)}(\cdot)$,
$\tilde{A}_n^{\mathcal{G}_t}(\cdot)$,
$A_n^{(\tilde{x},\gamma)}(\cdot)$, $A_n^{\mathcal{G}_t}(\cdot)$,
where in the definitions of the above terms, instead of $\{(\tilde{X}_n,
\Gamma_n)\}_{n=0}^{\infty},$   we use the sequences $\{(\tilde{X}^{m(t^*)}_n,
\Gamma ^{m(t^*)}_n)\}_{n=0}^{\infty}$ and $\{(\tilde{X}^{i(t^*,\kappa)}_n,
\Gamma ^{i(t^*,\kappa)}_n)\}_{n=0}^{\infty}$, respectively, and
condition on the extended $\sigma$-algebras defined in \eqref{ext_filtr}. 
\begin{lemma}\label{lemma:extended_tv}
	Let $\tilde{\nu}_1$ and $\tilde{\nu}_2$ be probability measures on $\tilde{\mathcal{X}}=\mathcal{X}\times \Phi$ and let $\nu_1$ and $\nu_2$ be their marginals on $\mathcal{X}$. Then \begin{equation}
	\|\nu_1 - \nu_2\|_{TV} \leq \|\tilde{\nu}_1 - \tilde{\nu}_2\|_{TV}.
	\end{equation}
\end{lemma}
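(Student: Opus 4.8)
The plan is to use the definition of total variation distance as a supremum over measurable sets, together with the observation that marginalising onto $\mathcal{X}$ is nothing but evaluating the joint measure on cylinder sets of the form $B \times \Phi$. Since each such cylinder is itself a measurable subset of the product space $\tilde{\mathcal{X}} = \mathcal{X} \times \Phi$, the supremum defining $\|\tilde{\nu}_1 - \tilde{\nu}_2\|_{TV}$ ranges over a strictly larger collection of sets than the one defining $\|\nu_1 - \nu_2\|_{TV}$, which immediately yields the inequality. In other words, this is the standard fact that pushing forward under the coordinate projection $\mathcal{X} \times \Phi \to \mathcal{X}$ is a contraction in total variation.

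Concretely, first I would recall that by the definition of the marginals we have $\nu_j(B) = \tilde{\nu}_j(B \times \Phi)$ for every $B \in \mathcal{B}(\mathcal{X})$ and $j \in \{1,2\}$. Then, for a fixed $B \in \mathcal{B}(\mathcal{X})$, I would write
\begin{equation*}
|\nu_1(B) - \nu_2(B)| = |\tilde{\nu}_1(B \times \Phi) - \tilde{\nu}_2(B \times \Phi)|,
\end{equation*}
and note that $B \times \Phi \in \mathcal{B}(\tilde{\mathcal{X}})$, so that the right-hand side is bounded above by $\sup_{\tilde{B} \in \mathcal{B}(\tilde{\mathcal{X}})} |\tilde{\nu}_1(\tilde{B}) - \tilde{\nu}_2(\tilde{B})| = \|\tilde{\nu}_1 - \tilde{\nu}_2\|_{TV}$. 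Taking the supremum over all $B \in \mathcal{B}(\mathcal{X})$ on the left-hand side then gives $\|\nu_1 - \nu_2\|_{TV} \leq \|\tilde{\nu}_1 - \tilde{\nu}_2\|_{TV}$, completing the argument.

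The only point requiring a moment of care — and it is the closest thing to an obstacle here — is the measurability claim that $B \times \Phi$ belongs to $\mathcal{B}(\tilde{\mathcal{X}})$, which is needed so that this cylinder is an admissible competitor in the supremum defining the joint total variation distance. This holds because $\mathcal{B}(\tilde{\mathcal{X}})$ is taken to be (or contains) the product $\sigma$-algebra $\mathcal{B}(\mathcal{X}) \otimes \mathcal{B}(\Phi)$, of which the rectangles $B \times \Phi$ are generators. Everything else is a direct application of the definitions, so no additional regularity on $\tilde{\nu}_1$, $\tilde{\nu}_2$, or on the spaces is required, and the estimate is completely general.
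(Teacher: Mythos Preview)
Your proof is correct and follows exactly the same approach as the paper: both arguments observe that $|\nu_1(B)-\nu_2(B)| = |\tilde{\nu}_1(B\times\Phi)-\tilde{\nu}_2(B\times\Phi)|$ and that the supremum defining $\|\tilde{\nu}_1-\tilde{\nu}_2\|_{TV}$ ranges over a larger class of sets. Your write-up is simply a more detailed version of the paper's one-line justification.
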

\begin{proof}
	Total variation distances on $\tilde{\mathcal{X}}$ involve suprema over larger classes of sets than those on $\mathcal{X}$, in particular $|\nu_1(B)-\nu_2(B)| = |\tilde{\nu}_1(B\times\Phi)-\tilde{\nu}_2(B\times \Phi)|$.
\end{proof}
\begin{lemma}\label{lemma:markovian_conv_unif} Let Simultaneous Uniform Ergodicity, i.e. condition (a) of Theorem \ref{thm:uniform}, hold. Then for all $\varepsilon >0,$ there exists $N_0=N_0(\varepsilon)$ such that for all $N\geq N_0$
	\begin{eqnarray} \label{eqn:markovian_conv_unif}
	\| A_{t^*+N}^{m(t^*),
		(\tilde{x},\gamma)}(\cdot) -
	\pi(\cdot)\|_{TV} 
	& \leq & \varepsilon  \qquad
	\textrm{for all} \quad t^* \in \mathbb{N},\; \tilde{x} \in \tilde{\mathcal{X}} \; \textrm{
		and } \; \gamma \in \mathcal{Y}.
	\end{eqnarray}
\end{lemma}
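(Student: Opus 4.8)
The plan is to exploit that after the freezing time $t^*$ the Markovian process $\{(\tilde{X}^{m(t^*)}_n, \Gamma^{m(t^*)}_n)\}$ simply iterates a single fixed kernel $\tilde{P}_{\gamma'}$, where $\gamma' = \Gamma^{m(t^*)}_{t^*}$ is the frozen (random) parameter. Conditioning on the state and parameter at time $t^*$ reduces the claim to a statement about a genuine, non-adaptive Markov chain, to which Simultaneous Uniform Ergodicity applies verbatim.

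First I would condition on $(\tilde{X}^{m(t^*)}_{t^*}, \Gamma^{m(t^*)}_{t^*}) = (\tilde{x}', \gamma')$. By the dynamics \eqref{Markov_seq_G}--\eqref{Markov_seq_X} the conditional law of $\tilde{X}^{m(t^*)}_{t^*+N}$ is exactly $\tilde{P}_{\gamma'}^N(\tilde{x}', \cdot)$, so, writing $\mu_{t^*}$ for the law of $(\tilde{X}^{m(t^*)}_{t^*}, \Gamma^{m(t^*)}_{t^*})$ given $(\tilde{X}_0, \Gamma_0) = (\tilde{x}, \gamma)$, the tower property gives for every $B \in \mathcal{B}(\mathcal{X})$
$$
A_{t^*+N}^{m(t^*), (\tilde{x},\gamma)}(B) = \int \tilde{P}_{\gamma'}^N(\tilde{x}', B \times \Phi)\, \mu_{t^*}(d\tilde{x}', d\gamma').
$$
Using the marginal identity \eqref{marginal_ok}, namely $\pi(B) = \tilde{\pi}_{\gamma'}(B \times \Phi)$ for every $\gamma'$, I would rewrite the integrand minus $\pi(B)$ as $\tilde{P}_{\gamma'}^N(\tilde{x}', B \times \Phi) - \tilde{\pi}_{\gamma'}(B \times \Phi)$, whose absolute value is at most $\|\tilde{P}_{\gamma'}^N(\tilde{x}', \cdot) - \tilde{\pi}_{\gamma'}(\cdot)\|_{TV}$ (this is Lemma \ref{lemma:extended_tv}, or simply the observation that $B \times \Phi$ is one admissible set in the supremum defining the total variation norm).

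Now Simultaneous Uniform Ergodicity furnishes $N_0 = N(\varepsilon)$ with $\|\tilde{P}_{\gamma'}^N(\tilde{x}', \cdot) - \tilde{\pi}_{\gamma'}(\cdot)\|_{TV} \leq \varepsilon$ for all $N \geq N_0$, all $\tilde{x}'$, and all $\gamma'$. Integrating this uniform bound against $\mu_{t^*}$ and taking the supremum over $B$ yields $\|A_{t^*+N}^{m(t^*), (\tilde{x},\gamma)}(\cdot) - \pi(\cdot)\|_{TV} \leq \varepsilon$, as required. The one point deserving care—and the source of uniformity in $t^*$—is that the SUE bound is uniform over the starting state $\tilde{x}'$ and the frozen parameter $\gamma'$; consequently $N_0$ does not depend on the (possibly intricate) distribution $\mu_{t^*}$ of the process at the freezing time, nor on $t^*$ itself. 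The only technical bookkeeping is to justify the conditioning on $\mathcal{G}_{t^*}$ and confirm that the post-$t^*$ conditional dynamics is indeed the frozen kernel, which is immediate from \eqref{Markov_seq_X}.
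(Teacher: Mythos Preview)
Your proof is correct and follows essentially the same approach as the paper: condition on the state and frozen parameter at time $t^*$, recognise that the post-$t^*$ dynamics is the fixed Markov kernel $\tilde{P}_{\gamma'}$, pass from the $\mathcal{X}$-marginal to the full $\tilde{\mathcal{X}}$ total variation via Lemma~\ref{lemma:extended_tv}, and apply Simultaneous Uniform Ergodicity uniformly in $(\tilde{x}',\gamma')$. The only cosmetic difference is that the paper phrases the conditioning step via Jensen's inequality on the $\mathcal{G}_{t^*}^*$-measurable random variable and explicitly invokes monotonicity of the total variation distance to pass from the single $N(\varepsilon)$ in condition~(a) to all $N\geq N_0$, a step you absorb into your statement of SUE.
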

\begin{proof}
	First observe that for any $B \in \mathcal{B}(\mathcal{X})$ the object $A_{t^*+N}^{m(t^*), \mathcal{G}_{t^*}^*}(B)$ is a $\mathcal{G}_{t^*}^*$-measurable random variable and apply Jensen's inequality to obtain \eqref{Jensen1} below. Next, use Lemma \ref{lemma:extended_tv} in \eqref{tilde}. To get \eqref{tv_markov} recall that $\{\tilde{X}_{k}^{m(t^*)}\}_{k=t^*}^{\infty}$ is a Markov chain started from $\tilde{X}_{t^*}$ with dynamics $\tP_{\Gamma_{t^*}}$. Then use monotonicity of the total variation for Markov chains, and finally pick $N_0=N_0(\varepsilon)$ via assumption (a) of Theorem \ref{thm:uniform} to conclude \eqref{lemma:part_3}.
	\begin{eqnarray} \label{Jensen1}
	\| A_{t^*+N}^{m(t^*),
		(\tilde{x},\gamma)}(\cdot) -
	\pi(\cdot)\|_{TV} 
	& \leq & \mathbb{E} \left[ 
	\| A_{t^*+N}^{m(t^*), \mathcal{G}_{t^*}^*}(\cdot) -
	\pi(\cdot)\|_{TV} 
	\right]
	\\ \label{tilde}
	& \leq & \mathbb{E} \left[ 
	\| \tilde{A}_{t^*+N}^{m(t^*), \mathcal{G}_{t^*}^*}(\cdot) -
	\tilde{\pi}_{\Gamma_{t^*}}(\cdot)\|_{TV} \right]
	\\ \label{tv_markov}
	& = & 
	\mathbb{E} \left[ 
	\| \tilde{P}_{\Gamma_{t^*}}^{N}(\tilde{X}_{t^*}, \cdot) -
	\tilde{\pi}_{\Gamma_{t^*}}(\cdot)\|_{TV} \right]
	\\ \label{tv_monotone}
	& \leq & 
	\mathbb{E} \left[ 
	\| \tilde{P}_{\Gamma_{t^*}}^{N_0}(\tilde{X}_{t^*}, \cdot) -
	\tilde{\pi}_{\Gamma_{t^*}}(\cdot)\|_{TV} \right]
	\\ \label{lemma:part_3}
	& \leq & \mathbb{E}[\varepsilon] = \varepsilon.
	\end{eqnarray}
\end{proof}

\begin{lemma}\label{lemma:markovian_conv_cont} Let Containment, i.e. condition (a) of Theorem \ref{thm:non_uniform}, hold. Then for all  $\tilde{x} \in \tilde{\mathcal{X}}, \; \gamma \in \mathcal{Y}$ and $\varepsilon >0,$ there exists $N_0=N_0(\varepsilon, \tilde{x}, \gamma),$ such that for all $t^* \in \mathbb{N}$ and all $N\geq N_0,$
	\begin{eqnarray} \label{eqn:markovian_conv_cont}
	\| A_{t^*+N}^{m(t^*),
		(\tilde{x},\gamma)}(\cdot) -
	\pi(\cdot)\|_{TV} \; \leq \; \mathbb{E} \left[ 
	\| A_{t^*+N}^{m(t^*), \mathcal{G}_{t^*}^*}(\cdot) -
	\pi(\cdot)\|_{TV} \right]
	& \leq & \varepsilon.
	\end{eqnarray}
\end{lemma}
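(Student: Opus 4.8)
The plan is to reuse the skeleton of the proof of Lemma \ref{lemma:markovian_conv_unif}, replacing only the last step, where Simultaneous Uniform Ergodicity was used, by an argument driven by Containment. The first inequality in \eqref{eqn:markovian_conv_cont} is established exactly as in the uniform case: for each fixed $B \in \mathcal{B}(\mathcal{X})$, the quantity $A_{t^*+N}^{m(t^*), \mathcal{G}_{t^*}^*}(B)$ is a $\mathcal{G}_{t^*}^*$-measurable random variable with expectation $A_{t^*+N}^{m(t^*), (\tilde{x},\gamma)}(B)$, so Jensen's inequality applied to the convex functional $\nu \mapsto \|\nu - \pi\|_{TV}$ yields the bound by the expectation on the right.

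For the second inequality I first pass to the augmented space. Since $\tilde{\pi}_{\Gamma_{t^*}}$ has $\mathcal{X}$-marginal $\pi$ by \eqref{marginal_ok}, Lemma \ref{lemma:extended_tv} gives, pathwise,
$$\| A_{t^*+N}^{m(t^*), \mathcal{G}_{t^*}^*}(\cdot) - \pi(\cdot)\|_{TV} \leq \| \tilde{A}_{t^*+N}^{m(t^*), \mathcal{G}_{t^*}^*}(\cdot) - \tilde{\pi}_{\Gamma_{t^*}}(\cdot)\|_{TV}.$$
Conditionally on $\mathcal{G}_{t^*}^*$ the frozen sequence $\{\tilde{X}_k^{m(t^*)}\}_{k \geq t^*}$ is a Markov chain started at $\tilde{X}_{t^*}$ with kernel $\tilde{P}_{\Gamma_{t^*}}$, by \eqref{Markov_seq_G}--\eqref{Markov_seq_X}, so the right-hand side equals $\|\tilde{P}_{\Gamma_{t^*}}^{N}(\tilde{X}_{t^*}, \cdot) - \tilde{\pi}_{\Gamma_{t^*}}(\cdot)\|_{TV}$. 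Taking expectations, it remains to show that this last quantity is at most $\varepsilon$ for all $N \geq N_0$.

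The central new step is to control this expectation through the convergence time $M_{\varepsilon/2}$ of \eqref{egn:epsconvtime}. Given $\varepsilon>0$, I apply Containment \eqref{eq:containment_condition_1} at time $n = t^*$ with tolerance $\varepsilon/2$ and probability level $\tilde{\delta} = \varepsilon/2$, under the conditioning $\tilde{X}_0 = \tilde{x}$, $\Gamma_0 = \gamma$; this produces a threshold $N_0 = N_0(\varepsilon, \tilde{x}, \gamma)$, independent of $t^*$, with $\mathbb{P}\big(M_{\varepsilon/2}(\tilde{X}_{t^*}, \Gamma_{t^*}) > N_0\big) \leq \varepsilon/2$. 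For $N \geq N_0$ I decompose the expectation over the event $E := \{M_{\varepsilon/2}(\tilde{X}_{t^*}, \Gamma_{t^*}) \leq N_0\}$ and its complement. On $E$ we have $N \geq N_0 \geq M_{\varepsilon/2}(\tilde{X}_{t^*}, \Gamma_{t^*})$, so by the definition of $M_{\varepsilon/2}$ and the monotonicity of $k \mapsto \|\tilde{P}_{\Gamma_{t^*}}^{k}(\tilde{X}_{t^*}, \cdot) - \tilde{\pi}_{\Gamma_{t^*}}(\cdot)\|_{TV}$ along the Markov chain, the integrand is at most $\varepsilon/2$; on $E^c$ I bound the integrand trivially by $1$. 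This gives
$$\mathbb{E}\left[\|\tilde{P}_{\Gamma_{t^*}}^{N}(\tilde{X}_{t^*}, \cdot) - \tilde{\pi}_{\Gamma_{t^*}}(\cdot)\|_{TV}\right] \leq \frac{\varepsilon}{2} + \mathbb{P}(E^c) \leq \frac{\varepsilon}{2} + \frac{\varepsilon}{2} = \varepsilon,$$
closing the chain.

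The main obstacle, and the reason Containment is the right hypothesis, is the uniformity in $t^*$: the threshold $N_0$ must be chosen independently of $t^*$. This is exactly what \eqref{eq:containment_condition_1} delivers, since the same $N(\varepsilon, \tilde{\delta})$ works for all $n$, here specialised to $n = t^*$. The only points requiring care are that the monotonicity bound is applied pathwise inside the expectation for the realised value $\gamma = \Gamma_{t^*}$, and that the probability in Containment is taken under the initial conditioning inherited by the frozen construction, which is why $N_0$ is allowed to depend on $(\tilde{x}, \gamma)$.
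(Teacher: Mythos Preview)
Your proof is correct and follows essentially the same approach as the paper: both reuse the Jensen/marginalisation/Markov-chain identification steps from Lemma~\ref{lemma:markovian_conv_unif} to reduce to bounding $\mathbb{E}\big[\|\tilde{P}_{\Gamma_{t^*}}^{N}(\tilde{X}_{t^*},\cdot)-\tilde{\pi}_{\Gamma_{t^*}}(\cdot)\|_{TV}\big]$, then invoke Containment with $(\varepsilon/2,\varepsilon/2)$ to obtain $N_0$ independent of $t^*$, split over the event $\{M_{\varepsilon/2}(\tilde{X}_{t^*},\Gamma_{t^*})\leq N_0\}$ and its complement, and use TV-monotonicity on the good event. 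Your explicit remark on why $N_0$ is uniform in $t^*$ is a nice touch.
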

\begin{proof}
	Reiterate the argument in the proof of Lemma \ref{lemma:markovian_conv_unif} to get the first part of \eqref{eqn:markovian_conv_cont} and then to arrive on \eqref{tv_markov}. Next, use the Containment condition, namely  
	choose $N_0 = N_0(\varepsilon, \tilde{x}, \gamma)$ such that 
	\begin{equation}
	\mathbb{P}\left(M_{\varepsilon/2}(\tilde{X}_{n}, \Gamma_n) > N_0 | \tilde{X}_0 =\tilde{x}, \Gamma_0 = \gamma \right) \leq \varepsilon/2
	\end{equation}
	for all $n \in \mathbb{N}$.	
	Let $G := \{M_{\varepsilon/2}(\tilde{X}_{t^*}, \Gamma_{t^*}) \leq N_0\}$, then $\mathbb{P}(G') \leq \varepsilon/2$ and $\| \tilde{P}_{\Gamma_{t^*}}^{N_0}(\tilde{X}_{t^*}, \cdot) -
	\tilde{\pi}_{\Gamma_{t^*}}(\cdot)\|_{TV}  \leq \varepsilon/2$ on $G$. Therefore we get 
	\begin{eqnarray} \nonumber
	\| A_{t^*+N}^{m(t^*),
		(\tilde{x},\gamma)}(\cdot) -
	\pi(\cdot)\|_{TV} 
	& \leq & \mathbb{E} \left[ 
	\| \tilde{P}_{\Gamma_{t^*}}^{N}(\tilde{X}_{t^*}, \cdot) -
	\tilde{\pi}_{\Gamma_{t^*}}(\cdot)\|_{TV} \right] \\
	& \leq & \mathbb{E} \left[ 
	\| \tilde{P}_{\Gamma_{t^*}}^{N_0}(\tilde{X}_{t^*}, \cdot) -
	\tilde{\pi}_{\Gamma_{t^*}}(\cdot)\|_{TV} \right] \\
	& = & \mathbb{E} \left[ 
	\| \tilde{P}_{\Gamma_{t^*}}^{N_0}(\tilde{X}_{t^*}, \cdot) -
	\tilde{\pi}_{\Gamma_{t^*}}(\cdot)\|_{TV}  \mathbb{I}_{G}\right]   \\ 
	&& \qquad+ \mathbb{E} \left[ 
	\| \tilde{P}_{\Gamma_{t^*}}^{N_0}(\tilde{X}_{t^*}, \cdot) -
	\tilde{\pi}_{\Gamma_{t^*}}(\cdot)\|_{TV}  \mathbb{I}_{G'} \right] \nonumber 
	\\ 
	& \leq & (\varepsilon/2) \cdot \mathbb{P}(G)  + 1 \cdot \mathbb{P}(G') \; \leq \; \varepsilon, \label{lemma:part_3_containment}
	\end{eqnarray} as required.
\end{proof}

\begin{lemma}\label{lemma:adap_inter_tv}
	Let Diminishing Adaptation, i.e. condition (b) of Theorem \ref{thm:uniform}, hold. Then for all $\varepsilon>0,$ $\kappa>0$ and $N_0 \in \mathbb{N}$ there exists $t_0 = t_0(\varepsilon, \kappa, N_0)$ such that for every $t^* \geq t_0$ and every $N\leq N_0,$
	\begin{equation}\label{eq:adap_inter_tv}
	\|\tilde{A}_{t^*+N}^{(\tilde{x}, \gamma)}(\cdot ) -
	\tilde{A}_{t^*+N}^{i(t^*, \kappa), (\tilde{x}, \gamma)}(\cdot)\|_{TV} \leq \varepsilon.
	\end{equation}
\end{lemma}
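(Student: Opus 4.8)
The plan is to exploit the fact that, by construction, the adaptive chain $\{(\tilde{X}_n,\Gamma_n)\}$ and the intermediate chain $\{(\tilde{X}^{i(t^*,\kappa)}_n,\Gamma^{i(t^*,\kappa)}_n)\}$ are defined on the same probability space and coincide path-wise up to the decoupling time $\tau_{i(t^*,\kappa)}$. Indeed \eqref{inter_seq_X_1} gives $\tilde{X}^{i(t^*,\kappa)}_n=\tilde{X}_n$ for $n\le \tau_{i(t^*,\kappa)}$, so on the event $\{\tau_{i(t^*,\kappa)}>t^*+N\}$ the two $\tilde{\mathcal{X}}$-marginals at time $t^*+N$ agree. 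Conditioning on $\tilde{X}_0=\tilde{x},\ \Gamma_0=\gamma$ and applying the coupling inequality therefore yields
\begin{equation*}
\|\tilde{A}_{t^*+N}^{(\tilde{x}, \gamma)}(\cdot ) - \tilde{A}_{t^*+N}^{i(t^*, \kappa), (\tilde{x}, \gamma)}(\cdot)\|_{TV} \le \mathbb{P}\big(\tau_{i(t^*,\kappa)} \le t^*+N \mid \tilde{X}_0=\tilde{x},\Gamma_0=\gamma\big).
\end{equation*}
It then remains to show that the right-hand side can be made smaller than $\varepsilon$, uniformly over $N\le N_0$, once $t^*$ is large enough.

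Next I would identify the decoupling event with a Diminishing Adaptation event. Suppose $\tau_{i(t^*,\kappa)}=m$ with $t^*<m\le t^*+N$. Then $\Gamma^{i(t^*,\kappa)}_{m-1}=\Gamma_{m-1}$ while $\Gamma^{i(t^*,\kappa)}_m\neq\Gamma_m$, so by the update rule \eqref{Inter_seq_G} the first branch must fail at step $m$, i.e. $\sup_{\tilde{x}}\|\tilde{P}_{\Gamma_m}(\tilde{x},\cdot)-\tilde{P}_{\Gamma^{i(t^*,\kappa)}_{m-1}}(\tilde{x},\cdot)\|_{TV}>\kappa$. Substituting $\Gamma^{i(t^*,\kappa)}_{m-1}=\Gamma_{m-1}$, this is precisely $D_{m-1}>\kappa$, with $D_n$ the random variable from condition (b) of Theorem \ref{thm:uniform}. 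Hence
\begin{equation*}
\{\tau_{i(t^*,\kappa)}\le t^*+N\}\subseteq \bigcup_{n=t^*}^{t^*+N-1}\{D_n>\kappa\}.
\end{equation*}

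Finally I would close the argument with a union bound and Diminishing Adaptation. Since $D_n\to 0$ in probability, given $\varepsilon,\kappa,N_0$ I can choose $t_0=t_0(\varepsilon,\kappa,N_0)$ so that $\mathbb{P}(D_n>\kappa)\le \varepsilon/N_0$ for every $n\ge t_0$. Then for any $t^*\ge t_0$ and any $N\le N_0$ the index window $\{t^*,\dots,t^*+N-1\}$ consists of at most $N_0$ integers, all at least $t_0$, so
\begin{equation*}
\mathbb{P}\big(\tau_{i(t^*,\kappa)}\le t^*+N\big)\le\sum_{n=t^*}^{t^*+N-1}\mathbb{P}(D_n>\kappa)\le N\cdot\frac{\varepsilon}{N_0}\le\varepsilon,
\end{equation*}
which together with the coupling bound of the first paragraph establishes \eqref{eq:adap_inter_tv}.

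The only genuinely delicate point is the bookkeeping that makes the tail bound uniform over the sliding window $[t^*,t^*+N-1]$: convergence of $D_n$ in probability supplies a single threshold $t_0$ beyond which $\mathbb{P}(D_n>\kappa)\le\varepsilon/N_0$ holds for \emph{all} $n\ge t_0$ simultaneously, and this is exactly what allows one $t_0$ to serve every $t^*\ge t_0$ and every $N\le N_0$. Checking the coupling inequality (that the marginals differ only through the decoupling event) and the index arithmetic in the decoupling analysis is routine but must be carried out with care.
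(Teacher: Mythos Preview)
Your proof is correct and follows essentially the same approach as the paper's: both use Diminishing Adaptation to obtain $t_0$ with $\mathbb{P}(D_n>\kappa)\le\varepsilon/N_0$ for $n\ge t_0$, take a union bound over the window $\{t^*,\dots,t^*+N_0-1\}$, and conclude via the coupling inequality. The paper phrases the argument in terms of the good event $E=\bigcap_{n=t^*}^{t^*+N_0-1}\{D_n<\kappa\}$ and simply asserts that on $E$ the two trajectories agree, whereas you make the equivalence $\{\tau_{i(t^*,\kappa)}\le t^*+N\}\subseteq\bigcup_{n=t^*}^{t^*+N-1}\{D_n>\kappa\}$ explicit; this extra bookkeeping is accurate and the arguments are otherwise identical.
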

\begin{proof}
	Recall $D_n$ defined in condition (b) of Theorem \ref{thm:uniform} and let \begin{equation}\label{def:H} H_n := \{D_n \geq \kappa \}.\end{equation} 
	Note that by Diminishing Adaptation for every $n \geq
	t_0 = t_0(\varepsilon, \kappa, N_0)$ we have $\mathbb{P}(H_n) \leq \varepsilon / N_0.$  Now, for $t^* \geq t_0,$ define  
	\begin{equation}\label{def:E} E :=
	\bigcap_{n=t^*}^{t^*+N_0-1}H_n^c,
	\qquad \textrm{satisfying} \qquad 
	\mathbb{P}(E) \geq 1- \varepsilon.\end{equation} 
	
	Consider the process $\{(\tilde{X}^{i(t^*,\kappa)}_n,
	\Gamma ^{i(t^*,\kappa)}_n)\}_{n=0}^{\infty}$  with $t^* \geq t_0$. Note that on $E$ we have
	$\tilde{X}_n = \tilde{X}^{i(t^*,\kappa)}_n$, for $n=0,1,\dots, t^*+N_0,$ and therefore the coupling inequality (see e.g. Section 4.1 of
	\cite{roberts2004general}) for every $N\leq N_0$ yields as claimed
	\begin{eqnarray}\nonumber
	\|\tilde{A}_{t^*+N}^{(\tilde{x}, \gamma)}(\cdot ) -
	\tilde{A}_{t^*+N}^{i(t^*, \kappa), (\tilde{x}, \gamma)}(\cdot)\|_{TV} & \leq & \mathbb{P}(E^c) \; \leq  \; \varepsilon.
	\end{eqnarray}
\end{proof}
\begin{lemma}\label{lemma:markov_inter_tv}
	For every $\kappa> 0,$ $t^* \in \mathbb{N}$ and $N \in \mathbb{N}$, the distributions of $\{(\tilde{X}^{i(t^*,\kappa)}_n,
	\Gamma ^{i(t^*,\kappa)}_n)\}_{n=0}^{\infty}$ and $\{(\tilde{X}^{m(t^*)}_n,
	\Gamma ^{m(t^*)}_n)\}_{n=0}^{\infty}$ satisfy the following: 
	\begin{align}
	\label{eqn:inter_markov_Jensen} \|
	\tilde{A}_{t^*+N}^{i(t^*, \kappa), (\tilde{x}, \gamma)}(\cdot) - \tilde{A}_{t^*+N}^{m(t^*), (\tilde{x},\gamma)}(\cdot)\|_{TV}
	&  \leq   \mathbb{E}\Big[ 
	\|
	\tilde{A}_{t^*+N}^{i(t^*, \kappa), \mathcal{G}_{t^*}^*}(\cdot) -
	\tilde{P}_{\Gamma_{t^*}}^{N}(\tilde{X}_{t^*}, \cdot) \|_{TV}\Big] \\ \label{eqn:inter_markov_tv}
	& \leq  \kappa N^2.
	\end{align}
\end{lemma}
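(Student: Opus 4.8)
The plan is to establish the two inequalities separately: the first is a routine Jensen/tower-property step (mirroring \eqref{Jensen1} in the proof of Lemma \ref{lemma:markovian_conv_unif}), while the genuine content is in the second, which I would prove by a telescoping interpolation between the intermediate and the frozen dynamics.

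For \eqref{eqn:inter_markov_Jensen} I would condition on $\mathcal{G}_{t^*}^*$ and use the tower property, writing $\tilde{A}_{t^*+N}^{i(t^*,\kappa),(\tilde{x},\gamma)}(\cdot)=\mathbb{E}\big[\tilde{A}_{t^*+N}^{i(t^*,\kappa),\mathcal{G}_{t^*}^*}(\cdot)\big]$. Because the Markovian sequence freezes its parameter at $\Gamma_{t^*}$ and evolves by $\tilde{P}_{\Gamma_{t^*}}$ (see \eqref{Markov_seq_X}), its conditional marginal is $\tilde{A}_{t^*+N}^{m(t^*),\mathcal{G}_{t^*}^*}(\cdot)=\tilde{P}_{\Gamma_{t^*}}^{N}(\tilde{X}_{t^*},\cdot)$, so that $\tilde{A}_{t^*+N}^{m(t^*),(\tilde{x},\gamma)}(\cdot)=\mathbb{E}\big[\tilde{P}_{\Gamma_{t^*}}^{N}(\tilde{X}_{t^*},\cdot)\big]$. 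Convexity of the total variation norm, $\|\mathbb{E}[\mu]\|_{TV}\le\mathbb{E}\|\mu\|_{TV}$, then yields \eqref{eqn:inter_markov_Jensen} at once.

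For the bound $\le\kappa N^2$ I would work conditionally on $\mathcal{G}_{t^*}^*$, setting $\tilde{x}_0:=\tilde{X}_{t^*}$ and $\gamma_0:=\Gamma_{t^*}=\Gamma_{t^*}^{i(t^*,\kappa)}$, and interpolate via a telescoping sum of hybrid laws. For $k=0,\dots,N$ let $\lambda_k$ be the conditional law at time $t^*+N$ obtained by running the intermediate kernels $\tilde{P}_{\Gamma^{i(t^*,\kappa)}_{t^*+j}}$ for the first $k$ steps and the frozen kernel $\tilde{P}_{\gamma_0}$ for the remaining $N-k$ steps, so that $\lambda_0=\tilde{P}_{\gamma_0}^{N}(\tilde{x}_0,\cdot)$ and $\lambda_N=\tilde{A}_{t^*+N}^{i(t^*,\kappa),\mathcal{G}_{t^*}^*}(\cdot)$. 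The triangle inequality reduces matters to bounding $\sum_{k=1}^{N}\|\lambda_k-\lambda_{k-1}\|_{TV}$, where $\lambda_k$ and $\lambda_{k-1}$ differ only in whether step $t^*+k-1$ uses the intermediate or the frozen kernel. Writing $\rho_{k-1}$ for the conditional law of $(\tilde{X}^{i(t^*,\kappa)}_{t^*+k-1},\Gamma^{i(t^*,\kappa)}_{t^*+k-1})$, the difference $\lambda_k-\lambda_{k-1}$ is the $\rho_{k-1}$-average of the signed measures $\big(\tilde{P}_{\Gamma^{i(t^*,\kappa)}_{t^*+k-1}}(\tilde{x},\cdot)-\tilde{P}_{\gamma_0}(\tilde{x},\cdot)\big)\tilde{P}_{\gamma_0}^{N-k}$. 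I would pull $\|\cdot\|_{TV}$ inside the average, discard the trailing $\tilde{P}_{\gamma_0}^{N-k}$ via the contraction $\|(\mu-\nu)P\|_{TV}\le\|\mu-\nu\|_{TV}$, and then invoke the key pathwise estimate: the update rule \eqref{Inter_seq_G} forces $\sup_{\tilde{x}}\|\tilde{P}_{\Gamma^{i(t^*,\kappa)}_{n+1}}(\tilde{x},\cdot)-\tilde{P}_{\Gamma^{i(t^*,\kappa)}_{n}}(\tilde{x},\cdot)\|_{TV}\le\kappa$ for every $n$ (either the parameter is unchanged, or the thresholded update keeps the kernel within $\kappa$). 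Summing over the $k-1$ steps elapsed since $t^*$ gives $\sup_{\tilde{x}}\|\tilde{P}_{\Gamma^{i(t^*,\kappa)}_{t^*+k-1}}(\tilde{x},\cdot)-\tilde{P}_{\gamma_0}(\tilde{x},\cdot)\|_{TV}\le(k-1)\kappa$, hence $\|\lambda_k-\lambda_{k-1}\|_{TV}\le(k-1)\kappa$ and $\sum_{k=1}^{N}(k-1)\kappa=\kappa N(N-1)/2\le\kappa N^2$. As this holds almost surely conditionally on $\mathcal{G}_{t^*}^*$, it survives taking expectations, giving \eqref{eqn:inter_markov_tv}.

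The main obstacle is that the intermediate kernels $\tilde{P}_{\Gamma^{i(t^*,\kappa)}_{t^*+j}}$ are random and path-dependent — driven by the real adaptive chain rather than by the intermediate state — so the kernel sequence cannot simply be frozen and treated as a deterministic product, and the process is not Markov in the $\tilde{X}$-coordinate alone. The telescoping-plus-averaging device is exactly what circumvents this: each hybrid difference is expressed as an expectation of a signed measure whose total variation is controlled by the deterministic per-step bound $\kappa$ holding for \emph{every} realization by construction, so the awkward randomness integrates out without needing to be analysed directly.
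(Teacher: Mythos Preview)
Your proposal is correct, and for \eqref{eqn:inter_markov_Jensen} it matches the paper exactly (Jensen plus the identification $\tilde{A}_{t^*+N}^{m(t^*),\mathcal{G}_{t^*}^*}=\tilde{P}_{\Gamma_{t^*}}^{N}(\tilde{X}_{t^*},\cdot)$).

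For \eqref{eqn:inter_markov_tv} you take a genuinely different route. The paper argues by explicit coupling: it first records the cruder uniform bound $\sup_{\tilde{x}}\|\tilde{P}_{\Gamma_{t^*}}(\tilde{x},\cdot)-\tilde{P}_{\Gamma^{i(t^*,\kappa)}_n}(\tilde{x},\cdot)\|_{TV}\le N\kappa$ for all $n\in\{t^*,\dots,t^*+N-1\}$, then invokes Proposition~3(g) of \cite{roberts2004general} to couple $\tilde{X}^{m(t^*)}_{n+1}$ and $\tilde{X}^{i(t^*,\kappa)}_{n+1}$ with one-step failure probability at most $N\kappa$ whenever they agree at time $n$, iterates this $N$ times to get $\mathbb{P}[\tilde{X}^{m(t^*)}_{t^*+N}=\tilde{X}^{i(t^*,\kappa)}_{t^*+N}]\ge 1-N^2\kappa$, and concludes via the coupling inequality. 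Your telescoping with hybrid laws $\lambda_k$ is an analytic substitute for this probabilistic construction: the same pathwise fact about \eqref{Inter_seq_G} is used, but in the sharper form $(k-1)\kappa$ at step $k$, and the transfer to the final time goes through TV-contraction of $\tilde{P}_{\gamma_0}^{N-k}$ rather than through maintaining a coupling. This buys you the tighter constant $\kappa N(N-1)/2$ and avoids citing an external coupling lemma; the paper's version is more in keeping with its overall coupling-based methodology (compare the proof of Lemma~\ref{lemma:adap_inter_tv}) and is arguably cleaner about the path-dependence you flag, since the coupling is built stepwise on the enlarged filtration without needing to define the intermediate-then-frozen hybrids.
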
    

\begin{proof}
	First apply Jensen's inequality
	\begin{equation} \label{Jensen}
	\|
	\tilde{A}_{t^*+N}^{i(t^*, \kappa), (\tilde{x}, \gamma)}(\cdot) - \tilde{A}_{t^*+N}^{m(t^*), (\tilde{x},\gamma)}(\cdot)\|_{TV}
	\leq   \mathbb{E}\Big[ 
	\|
	\tilde{A}_{t^*+N}^{i(t^*, \kappa), \mathcal{G}_{t^*}^*}(\cdot) -
	\tilde{A}_{t^*+N}^{m(t^*), \mathcal{G}_{t^*}^*}(\cdot)  \|_{TV}\Big],
	\end{equation}
	and recall equations \eqref{Markov_seq_G} and \eqref{Markov_seq_X}
	to note that 
	\begin{equation} \label{m_distribuiton}
	\tilde{A}_{t^*+N}^{m(t^*), \mathcal{G}_{t^*}^*}(\cdot) =
	\tilde{P}_{\Gamma_{t^*}}^{N}(\tilde{X}_{t^*}^{m(t^*)}, \cdot) =
	\tilde{P}_{\Gamma_{t^*}}^{N}(\tilde{X}_{t^*}, \cdot),
	\end{equation} 
	that is $\{ \tilde{X}_n^{m(t^*)}\}_{n=t^*}^{t^*+N}$ is a Markov chain
	started from $ \tilde{X}_{t^*}$ with dynamics
	$\tilde{P}_{\Gamma_{t^*}}$. Combining \eqref{Jensen} with \eqref{m_distribuiton}  yields \eqref{eqn:inter_markov_Jensen}.
	
	Now recall \eqref{Inter_seq_G}, \eqref{inter_seq_X_1},
	\eqref{inter_seq_X_2}, i.e.  the dynamics of $\{(\tilde{X}^{i(t^*,\kappa)}_n,
	\Gamma ^{i(t^*,\kappa)}_n)\}_{n=K-N}^{K},$ and observe that 
	\eqref{Inter_seq_G} yields 
	\begin{equation} \label{kernels_similar}
	\sup_{\tilde{x} \in
		\tilde{\mathcal{X}}} \| \tilde{P}_{\Gamma_{t^*}}(\tilde{x},
	\cdot) - \tilde{P}_{\Gamma ^{i(t^*,\kappa)}_{n}}(\tilde{x},
	\cdot) \|_{TV} \leq N\kappa
	\quad \textrm{ for } \; n = t^*, \dots, t^*+N-1.
	\end{equation}
	Hence, for every $n = t^*, \dots, t^*+N-1,\;$ if $\tilde{X}^{m(t^*)}_n =
	\tilde{X}^{i(t^*,\kappa)}_n,$ then by \eqref{kernels_similar} and Proposition~3(g) of
	\cite{roberts2004general}, there exists a coupling of
	$\tilde{X}^{m(t^*)}_{n+1}$ and 
	$\tilde{X}^{i(t^*,\kappa)}_{n+1},$ such that 
	$$\mathbb{P}\big[ \tilde{X}^{m(t^*)}_{n+1} =
	\tilde{X}^{i(t^*,\kappa)}_{n+1}\big] \geq 1- N\kappa.
	$$ Reiterating this construction $N$ times from $n=t^*$ to $n=t^*+N-1$
	implies that there exists a coupling such that 
	\begin{equation} 
	\mathbb{P}\big[ \tilde{X}^{m(t^*)}_{t^*+N} =
	\tilde{X}^{i(t^*,\kappa)}_{t^*+N}\; \big| \; \tilde{X}^{m(t^*)}_{t^*} =
	\tilde{X}^{i(t^*,\kappa)}_{t^*} \big] \geq 1-
	N^2 \kappa . \end{equation}
	Hence by the coupling inequality 
	\begin{equation*}
	\|
	\tilde{A}_{t^*+N}^{i(t^*, \kappa), \mathcal{G}_{t^*}^*}(\cdot) -
	\tilde{P}_{\Gamma_{t^*}}^{N}(\tilde{X}_{t^*}, \cdot)  \|_{TV}
	\leq \kappa N^2,
	\end{equation*}
	as required.
\end{proof}
\noindent\subsubsection*{\bf Proof of Theorem \ref{thm:uniform}}
\begin{proof} 
	By the triangle inequality, for any $n, t^*,$ and $\kappa,$  we have
	\begin{eqnarray} \nonumber 
	T_n(\tilde{x}, \gamma) =\|A_{n}^{(\tilde{x}, \gamma)}(\cdot ) -
	\pi(\cdot)\|_{TV} & \leq & \|A_{n}^{(\tilde{x}, \gamma)}(\cdot ) -
	A_n^{i(t^*, \kappa), (\tilde{x}, \gamma)}(\cdot)\|_{TV} \\ \nonumber && \qquad +  \|
	A_n^{i(t^*, \kappa), (\tilde{x}, \gamma)}(\cdot) - A_n^{m(t^*),(\tilde{x},\gamma)}(\cdot)\|_{TV} \\
	\nonumber  &&
	\qquad
	\qquad
	+ 
	\| A_n^{m(t^*),
		(\tilde{x},\gamma)}(\cdot) -
	\pi(\cdot)\|_{TV} \\ \nonumber 
	& \leq & \|\tilde{A}_{n}^{(\tilde{x}, \gamma)}(\cdot ) -
	\tilde{A}_n^{i(t^*, \kappa), (\tilde{x}, \gamma)}(\cdot)\|_{TV} \\ \nonumber && \qquad +  \|
	\tilde{A}_n^{i(t^*, \kappa), (\tilde{x}, \gamma)}(\cdot) - \tilde{A}_n^{m(t^*),  (\tilde{x},\gamma)}(\cdot)\|_{TV} \\
	\nonumber  &&
	\qquad
	\qquad
	+ 
	\| A_n^{m(t^*),
		(\tilde{x},\gamma)}(\cdot) -
	\pi(\cdot)\|_{TV} \\ \label{decomposition}
	&  =: & \diamondsuit_n^{(1)} + \diamondsuit_n^{(2)} + \diamondsuit_n^{(3)},
	\end{eqnarray}
	where in the second inequality, for the first two terms,  we have used Lemma \ref{lemma:extended_tv}.
	
	Now fix $\delta >0$. To prove the claim, it is enough to construct a target time $K_0 =
	K_0(\delta, \tilde{x}, \gamma)$, s.t.
	\begin{equation}\label{T_to_0}
	T_K(\tilde{x}, \gamma) \leq \delta \qquad \textrm{for all} \quad 
	K > K_0.
	\end{equation}  
	We shall find such target time of the form $K_0=t_0+N_0$. To this end let $\varepsilon = \delta /3$. \\
	First, use Lemma \ref{lemma:markovian_conv_unif} to fix $N_0=N_0(\varepsilon)$ so that 
	\begin{eqnarray}\label{bund_diamond_3}
	\diamondsuit_{t^*+N_0}^{(3)} & \leq & \varepsilon \quad \textrm{for all} \quad t^*\in \mathbb{N}, \tilde{x} \in\tilde{\mathcal{X}} \; \textrm{ and } \; \gamma \in \mathcal{Y}.
	\end{eqnarray}
	Next, take $\kappa:= \varepsilon /N_0^2$ and use Lemma \ref{lemma:markov_inter_tv} to conclude that 
	\begin{eqnarray}\label{bund_diamond_2}
	\diamondsuit_{t^*+N_0}^{(2)} & \leq & \varepsilon \quad \textrm{for all} \quad t^*\in \mathbb{N}, \tilde{x} \in\tilde{\mathcal{X}} \; \textrm{ and } \; \gamma \in \mathcal{Y}.
	\end{eqnarray}
	Finally, use Lemma \ref{lemma:adap_inter_tv} to find $t_0 = t_0(\varepsilon, \kappa, N_0)$ such that     
	\begin{eqnarray}\label{bund_diamond_1}
	\diamondsuit_{t^*+N_0}^{(1)} & \leq & \varepsilon \quad \textrm{for all} \quad t^* \geq t_0, \tilde{x} \in\tilde{\mathcal{X}} \; \textrm{ and } \; \gamma \in \mathcal{Y}.
	\end{eqnarray}
	Letting $K_0:= t_0 + N_0$ allows to decompose every $K\geq K_0$ into $K=t^* + N_0$, so that \eqref{bund_diamond_3}, \eqref{bund_diamond_2}, \eqref{bund_diamond_1} are satisfied, which yields the claim.      
\end{proof}

\noindent\subsubsection*{\bf Proof of Theorem \ref{thm:non_uniform}}
\begin{proof} The proof is identical, except that we use Lemma \ref{lemma:markovian_conv_cont} instead of Lemma \ref{lemma:markovian_conv_unif} to find $N_0 = N_0(\varepsilon, \tilde{x}, \gamma)$ in \eqref{bund_diamond_3}.
\end{proof}

\noindent\subsubsection*{\bf Proof of Theorem \ref{thm:lln}}
\begin{proof} To prove that the Weak Law of Large Numbers holds for the Auxiliary Variable Adaptive MCMC class, recall again the sequences  $\{(\tilde{X}^{m(t^*)}_n,
	\Gamma ^{m(t^*)}_n)\}_{n=0}^{\infty}$ and $\{(\tilde{X}^{i(t^*,\kappa)}_n,
	\Gamma ^{i(t^*,\kappa)}_n)\}_{n=0}^{\infty}$ defined above. Without loss of generality we will assume that $\pi(g) = 0$ and that $|g(x)| < a$. 
	
	By Markov's inequality 
	\begin{eqnarray}\label{eq:lln_proof_Markov}
	\bb{P} \left(\frac{1}{T} \Big| \sum_{i=1}^{T} g\left(X_i\right) \Big| \geq \delta^{1/2}  \right) & \leq & \frac{1}{\delta^{1/2}} \E\left(\frac{1}{T} \Big| \sum_{i=1}^{T} g\left(X_i\right) \Big| \right),
	\end{eqnarray}
	hence to obtain the WLLN it is enough to show that for every $\delta >0$ there exists such $T_0 = T_0(\delta) = T_0(\delta, \tilde{x}, \gamma),$ where $(\tilde{x}, \gamma)$ are the starting points of $(\tilde{X}_n, \Gamma_n),$ that for all $T>T_0$ 
	\begin{eqnarray}\label{eq:lln_goal}
	\E\left(\frac{1}{T} \Big| \sum_{i=1}^{T} g\left(X_i\right) \Big| \right) & \leq & \delta.
	\end{eqnarray}
	
	We shall deal with \eqref{eq:lln_goal} by considering second moments and therefore will have to deal with mixed terms of the form $\E g(X_i)g(X_j)$. Let $\varepsilon >0$ be fixed and we shall pick a specific value later. Firstly, for $i < j,$ consider the following calculation.
	\begin{eqnarray} \nonumber
	\left|\E g(X_i)g(X_j^{m(i)}) \right|
	& = & 
	\left|\E \left(\E \left[ g(X_i)g(X_j^{m(i)}) \big| \mathcal{G}_i^* \right] \right) \right| 
	\\ \nonumber
	& \leq &
	\left|\E \left(g(X_i) \E \left[ g(X_j^{m(i)}) \big| \mathcal{G}_i^* \right] \right) \right| 
	\\ \nonumber
	& \leq &
	\E \left( |g(X_i)| \left| \E \left[ g(X_j^{m(i)}) -\pi(g) \big| \mathcal{G}_i^* \right] \right| \right) 
	\\ \nonumber
	& \leq &
	a^2 \E \left[ 
	\| A_{j}^{m(i), \mathcal{G}_{i}^*}(\cdot) -
	\pi(\cdot)\|_{TV} \right]
	\\ \label{eq:mixed_markov}
	& \leq &
	a^2 \varepsilon, \qquad \textrm{for all } \; i\in \mathbb{N} \; \textrm{ and } \; j-i \geq N_0=N_0(\varepsilon, \tilde{x}, \gamma),
	\end{eqnarray}
	where $N_0(\varepsilon, \tilde{x}, \gamma)$ has been obtained from Lemma \ref{lemma:markovian_conv_unif}, if assuming Simultaneous Uniform Ergodicity, or from Lemma \ref{lemma:markovian_conv_cont}, if assuming Containment. 
	
	Secondly, given $N_0$ in \eqref{eq:mixed_markov}, fix $N_1 \geq N_0$ such that $1/N_1 < \varepsilon$ and consider pairs $i,j$ satisfying $N_1 \leq j-i \leq N_1^2.$ Set $\kappa:= \varepsilon / N_1^4,$ and compute 
	\begin{eqnarray} \nonumber
	\left|\E g(X_i)\big(g(X_j^{i(i,\kappa)}) -g(X_j^{m(i)})\big) \right|
	& \leq & \E |g(X_i)|\big|g(X_j^{i(i,\kappa)})-g(X_j^{m(i)})\big| 
	\\ \nonumber
	& \leq &
	a^2 \| A_{j}^{i(i, \kappa), (\tilde{x},\gamma)}(\cdot) -
	A_{j}^{m(i), (\tilde{x},\gamma)}(\cdot)\|_{TV} 
	\\ \nonumber
	& \leq &
	a^2 \kappa (j-i)^2, \quad \textrm{for all } \; \kappa, i,j, \; \textrm{by Lemmas \ref{lemma:extended_tv} and  \ref{lemma:markov_inter_tv}, } 
	\\ 
	& \leq & \label{eq:mixed_inter_markov}
	a^2 \varepsilon
	\quad \textrm{since } \;N_1 \leq j-i \leq N_1^2 \; \textrm{ and } \kappa= \varepsilon / N_1^4.
	\end{eqnarray}
	
	Finally, use Lemma \ref{lemma:adap_inter_tv} to find $t_0=t_0(\varepsilon, \kappa, N_1^2)$ and conclude
	\begin{eqnarray} \nonumber
	\left|\E g(X_i)\big(g(X_j)-g(X_j^{i(i,\kappa)})\big) \right|
	& \leq & \E |g(X_i)|\big|g(X_j)-g(X_j^{i(i,\kappa)})\big| 
	\\ \nonumber
	& \leq &
	a^2 \| A_{j}^{(\tilde{x}, \gamma)}(\cdot) - A_{j}^{i(i, \kappa), (\tilde{x}, \gamma)}(\cdot)
	\|_{TV} 
	\\ \label{eq:mixed_inter_adap}
	& \leq & 
	a^2 \varepsilon, \quad \textrm{for all } \; i> t_0 = t_0(\varepsilon, \kappa, N_1^2) \textrm{ and } j-i \leq N_1^2. 
	\end{eqnarray}
	
	We are ready to address the mixed term. Since $|g|<a$, trivially for any $i,j$
	\begin{eqnarray} \label{eq:mixed_trivial}
	|\E g(X_i)g(X_j)| & \leq & a^2.
	\end{eqnarray}
	Moreover, for $\varepsilon >0,$ $N_1 \geq N_0(\varepsilon, \tilde{x}, \gamma),$ $\kappa = \varepsilon/N_1^4$, pairs $i,j$ such that $N_1 \leq j-i \leq N_1^2$, and $i> t_0(\varepsilon, \kappa, N_1^2)$ equations \eqref{eq:mixed_markov}, \eqref{eq:mixed_inter_markov} and \eqref{eq:mixed_inter_adap} yield
	\begin{eqnarray} \nonumber
	|\E g(X_i)g(X_j)| &\leq & \left|\E g(X_i)\big(g(X_j)-g(X_j^{i(i,\kappa)})\big) \right| \\ \nonumber
	&&\quad + \left|\E g(X_i)\big(g(X_j^{i(i,\kappa)}) -g(X_j^{m(i)})\big) \right| \\ \nonumber
	& &\quad \quad+ \left|\E g(X_i)g(X_j^{m(i)}) \right| 
	\\ \label{eq:mixed_combined}
	& \leq & 3a^2 \varepsilon.
	\end{eqnarray}
	Consequently, for any $N_1 > \max\{1/\varepsilon, N_0\}$ and  $t_1 > t_0$, chosen as above, we can compute
	\begin{eqnarray} \nonumber
	\E \left(\frac{1}{N_1^2} \sum_{k=t_1+1}^{t_1+N_1^2} g(X_k)\right)^2 
	& \leq & 
	\frac{1}{N_1^4}\Bigg(
	\sum_{i,j = t_1+1}^{t_1 + N_1^2}\mathbb{I}_{\{|j-i|\geq N_1\}}|\E g(X_i)g(X_j)| \\ \nonumber
	&& \quad+  \sum_{i,j = t_1+1}^{t_1 + N_1^2}\mathbb{I}_{\{|j-i| < N_1\}}|\E g(X_i)g(X_j)| 
	\Bigg) \\
	& \leq &
	\frac{2}{N_1^4}\left(
	N_1^43a^2\varepsilon + N_1^3a^2
	\right)
	\; \leq \; 8a^2 \varepsilon, 
	\label{eq:second_moment}
	\end{eqnarray}
	where we have used \eqref{eq:mixed_combined} and \eqref{eq:mixed_trivial} to bound the first and second summation, respectively.
	
	By the Cauchy-Schwartz inequality \eqref{eq:second_moment} implies
	\begin{eqnarray} 
	\E \left|\frac{1}{N_1^2} \sum_{k=t_1+1}^{t_1+N_1^2} g(X_k)\right| 
	& \leq & 
	2 \sqrt{2} a \varepsilon^{1/2} 
	\label{eq:one_block}
	\end{eqnarray}
	for $N_1 > \max\{1/\varepsilon, N_0(\varepsilon, \tilde{x}, \gamma)\}$ and   $t_1 > t_0(\varepsilon, \kappa, N_1^2)$.
	
	Following the proof of Theorem 5 of \cite{roberts2007coupling}, fix  $T$ so large that 
	\begin{equation}\label{eq:choice_of_T}
	\max \left[\frac{at_0}{T}, \frac{aN_1^2}{T}  \right] \; \leq \; \varepsilon^{1/2}.
	\end{equation}
	Use \eqref{eq:one_block} and \eqref{eq:choice_of_T} to observe that
	\begin{eqnarray} \nonumber 
	\E\left(\frac{1}{T} \Big| \sum_{i=1}^{T} g\left(X_i\right) \Big| \right)  & \leq & \E\left(\frac{1}{T} \Big| \sum_{i=1}^{t_0} g\left(X_i\right) \Big| \right)  \\
	&& \label{eq:lln_proof_part_4}
	\qquad
	+\E\left(\frac{N_1^2}{T} \sum_{j=1}^{\lfloor\frac{T-t_0}{N_1^2} \rfloor} \frac{1}{N_1^2} \Big|\sum_{k=1}^{N_1^2} g\left(X_{t_0+(j-1)N_1^2+k}\right) \Big| \right) 
	\\  \nonumber
	&& \qquad \qquad
	+  \E\left(\frac{1}{T} \Big| \sum_{i=t_0 + \lfloor\frac{T-t_0}{N_1^2} \rfloor N_1^2 + 1}^{T} g\left(X_i\right) \Big| \right) \\
	&\nonumber \leq& \frac{at_0}{T} + 2\sqrt{2} a\varepsilon^{1/2} + \frac{aN_1^2}{T} \\
	& \leq & 2(\sqrt{2}a+1) \varepsilon^{1/2}.\label{eq:sum_final}
	\end{eqnarray}
	Setting $\varepsilon:= (\delta/2(\sqrt{2}a+1))^2$ in the above argument yields \eqref{eq:lln_goal} as desired. \end{proof}

\subsubsection*{\bf Proof of Lemma \ref{thm:drift_condition}}
\noindent \begin{proof}
	We will begin the proof by showing that assumption \eqref{eq:main_drift_condition} implies that an analogous drift condition is satisfied for $\tPg^{n_0}$, $n_0$ defined in \eqref{eq:minorisation_condititon}, perhaps with different constants $\lambda$ and $b$, which we define below. For any $k \in \{1, \ldots, n_0\}$ we have
	$$
	\tPg^{k}(\tilde{x}): = \E \left(V_{\tpig}(\tilde{X}_{n+k}) \big| \tilde{X}_n = \tilde{x}, \Gamma_n = \gamma, \Gamma_{n+1} = \gamma, \ldots, \Gamma_{n+k-1} = \gamma\right).
	$$
	For $k=2$ we have
	\begin{align*}
	\tPg^{2}(\tilde{x}) = &\E \left(V_{\tpig}(\tilde{X}_{n+2}) \big| \tilde{X}_n = \tilde{x}, \Gamma_n = \gamma, \Gamma_{n+1} = \gamma\right)\\ = 
	&\E \left(\E \left(V_{\tpig}(\tilde{X}_{n+2}) \big| \tilde{X}_{n+1}, \tilde{X}_n = \tilde{x}, \Gamma_n = \gamma, \Gamma_{n+1} = \gamma\right) \big| \tilde{X}_n = \tilde{x}, \Gamma_n = \gamma, \Gamma_{n+1} = \gamma\right) \\
	\leq & \E \left( \lambda V_{\tpig}(\tilde{X}_{n+1}) + b \big| \tilde{X}_n = \tilde{x}, \Gamma_n = \gamma, \Gamma_{n+1} = \gamma\right) \\
	= & b + \lambda \E \left( V_{\tpig}(\tilde{X}_{n+1})\big| \tilde{X}_n = \tilde{x}, \Gamma_n = \gamma, \Gamma_{n+1} = \gamma\right) \\
	\leq &  b+ \lambda\left(\lambda V_{\tpig}(\tilde{x})+b\right) \leq \lambda^2 V_{\tpig}(\tilde{x}) +2b.
	\end{align*}
	By similar calculations and induction we obtain
	\begin{equation}\label{eq:additional_drift_condition}
	\tPg^{n_0}(\tilde{x}) \leq \lambda^{n_0} V_{\tpig}(\tilde{x}) +n_0 b,
	\end{equation}
	as required.
	
	By Theorem 12 of \cite{rosenthal1995minorization} and conditions \eqref{eq:minorisation_condititon} and \eqref{eq:additional_drift_condition},  there exists $K < \infty$ and $\rho<1$, depending only on $\lambda$, $b$, $v$, $n_0$ and $\delta$, such that for each $\gamma \in \mathcal{Y}$ and for any $k\in \mathbb{N}$ we have 
	
	\begin{equation}\label{eq:proposition_3_drift_condition}
	\| \tilde{P}_{\gamma}^{n_0 \cdot k}(\tilde{x}, \cdot)-\tilde{\pi}_{\gamma}(\cdot)  \|_{TV} \leq \left(K + V_{\tpig}(\tilde{x})\right)\rho^k.
	\end{equation}
	We now use the  monotonicity of 
	$\| \tilde{P}_{\gamma}^{n}(\tilde{x}, \cdot)-\tilde{\pi}_{\gamma}(\cdot)  \|_{TV}$ in $n$  (see Proposition 3b) of \cite{roberts2004general})
	to argue that
	\begin{align*}
	\| \tilde{P}_{\gamma}^{n_0 \cdot k +n_0-1}(\tilde{x}, \cdot)-\tilde{\pi}_{\gamma}(\cdot)  \|_{TV} \leq \ldots \leq & \| \tilde{P}_{\gamma}^{n_0 \cdot k +1}(\tilde{x}, \cdot)-\tilde{\pi}_{\gamma}(\cdot)  \|_{TV} \\
	\leq & \| \tilde{P}_{\gamma}^{n_0 \cdot k}(\tilde{x}, \cdot)-\tilde{\pi}_{\gamma}(\cdot)  \|_{TV} \\
	\leq & \left(K + V_{\tpig}(\tilde{x}) \right) \rho^k \\
	= & \left(K/\rho + V_{\tpig}(\tilde{x})/\rho \right) \rho^{k+1}.
	\end{align*}
	Let $\tilde{\rho}:=\rho^{\frac{1}{n_0}}$. It follows that for every $m \in \mathbb{N}$
	\begin{equation}\label{eq:transformed_drift_condition_bound}
	\| \tilde{P}_{\gamma}^{m}(\tilde{x}, \cdot)-\tilde{\pi}_{\gamma}(\cdot)  \|_{TV} \leq \left(K/\rho + V_{\tpig}(\tilde{x})/\rho \right) \tilde{\rho}^{m}.
	\end{equation}
	
	The next step of the proof will be to show that the sequence $V_{\tpi_{\Gamma_n}}(\tilde{X}_n)$ is bounded in probability. By Lemma 3 in \cite{roberts2007coupling}, it suffices to show that $\sup_{n} \E V_{\tpi_{\Gamma_n}}(\tilde{X}_n) < \infty$. Firstly, let us show that $\tP V_{\tpig}(\tilde{x})$ is bounded for $\gamma \in \mathcal{Y}$ and $\tilde{x} \in A$. Note that 
	\begin{align*}
	\sup_{\gamma \in \mathcal{Y}} \sup_{\tilde{x} \in A} \tPg V_{\tpig}(\tilde{x}) = & \sup_{\gamma \in \mathcal{Y}} \sup_{\tilde{x} \in A} \left(\frac{\tPg V_{\tpig}(\tilde{x})}{ V_{\tpig}(\tilde{x})} V_{\tpig}(\tilde{x})  \right) \leq \sup_{\gamma \in \mathcal{Y}} \sup_{\tilde{x} \in \mathcal{X}} \frac{\tPg V_{\tpig}(\tilde{x})}{ V_{\tpig}(\tilde{x})} \sup_{\gamma \in \mathcal{Y}} \sup_{\tilde{x} \in A}  V_{\tpig}(\tilde{x}).
	\end{align*}
	Since $A$  and $\mathcal{Y}$ were assumed to be  compact, $\sup_{\gamma \in \mathcal{Y}} \sup_{\tilde{x} \in A} V_{\tpig}(\tilde{x}) < \infty$. Additionally, the drift condition  \eqref{eq:main_drift_condition} yields
	$$
	\sup_{\gamma \in \mathcal{Y}} \sup_{\tilde{x} \in \mathcal{X}} \frac{\tPg V_{\tpig}(\tilde{x})}{V_{\tpig}(\tilde{x})} \leq \sup_{\gamma \in \mathcal{Y}} \sup_{\tilde{x} \in \mathcal{X}} \frac{ \lambda V_{\tpig}(\tilde{x}) + b}{V_{\tpig}(\tilde{x})} \leq \lambda + b.
	$$
	Therefore we can define $M:= \sup_{\gamma \in \mathcal{Y}} \sup_{\tilde{x} \in A} \tPg V_{\tpig}(\tilde{x}) < \infty$. It follows that
	\begin{align}\label{eq:drift_estimation}
	\begin{split}
	E \left(V_{\tpi_{\Gamma_{n+1}}}(\tilde{X}_{n+1})\big| \tilde{X}_n, \Gamma_n\right) = & \E \left(V_{\tpi_{\Gamma_{n+1}}}(\tilde{X}_{n+1})\big| \tilde{X}_n, \Gamma_n\right) \mathbb{I}_{\tilde{X}_n \in A}  \\
	& \quad +\E \left(V_{\tpi_{\Gamma_{n+1}}}(\tilde{X}_{n+1})\big| \tilde{X}_n, \Gamma_n\right) \mathbb{I}_{\tilde{X}_n \notin A} \\
	= & \E \left(V_{\tpi_{\Gamma_{n+1}}}(\tilde{X}_{n+1})\big| \tilde{X}_n, \Gamma_n\right) \mathbb{I}_{\tilde{X}_n \in A}  \\
	& \quad + \E \left(V_{\tpi_{\Gamma_{n}}}(\tilde{X}_{n+1})\big| \tilde{X}_n, \Gamma_n\right) \mathbb{I}_{\tilde{X}_n \notin A}\\
	\leq & \sup_{\tilde{x} \in A} \tP_{\Gamma_{n+1}} V_{\tpi_{\Gamma_{n+1}}}(\tilde{x}) + \E \left(V_{\tpi_{\Gamma_{n}}}(\tilde{X}_{n+1})\big| \tilde{X}_n, \Gamma_n\right) \\
	\leq & \sup_{\gamma \in \mathcal{Y}}\sup_{\tilde{x} \in A} \tPg V_{\tpig}(\tilde{x}) + \lambda V_{\tpi_{\Gamma_{n}}}(\tilde{X}_n) + b \\
	\leq & M+ \lambda V_{\tpi_{\Gamma_{n}}}(\tilde{X}_n) + b.
	\end{split}
	\end{align}
	By the law of total expectation,
	\begin{align*}
	\E V_{\tpi_{\Gamma_{n+1}}}(\tilde{X}_{n+1}) = & \E \E \left(V_{\tpi_{\Gamma_{n+1}}}(\tilde{X}_{n+1})\big| \tilde{X}_n, \Gamma_n\right), 
	\end{align*}
	which combined with \eqref{eq:drift_estimation} gives 
	\begin{align*}
	\E V_{\tpi_{\Gamma_{n+1}}}(\tilde{X}_{n+1}) \leq \lambda\E V_{\tpi_{\Gamma_{n}}}(\tilde{X}_{n}) + M + b.
	\end{align*}
	This implies, using Lemma 2 in \cite{roberts2007coupling}
	that
	\begin{equation}\label{eq:like_lemma_2}
	\sup_{n} \E V_{\tpi_{\Gamma_n}}(\tilde{X}_n) \leq \max \left[\E V_{\tpi_{\Gamma_0}}(\tilde{X}_0), \frac{M+b}{1-\lambda} \right].
	\end{equation}
	Lemma \ref{thm:drift_condition} will now follow from combining the fact that the sequence $V_{\tpi_{\Gamma_n}}(\tilde{X}_n)$ is bounded in probability with \eqref{eq:transformed_drift_condition_bound}. Note that for any fixed $\varepsilon$ and $\tilde{\delta}$, there exists $N$ such that 
	
	\begin{align*}
	\mathbb{P}\left(M_{\varepsilon}(\tilde{X}_{n}, \Gamma_n) \leq N  \right) & = \mathbb{P}\left(\| \tilde{P}_{\Gamma_n}^N(\tilde{X}_n, \cdot)-\tilde{\pi}_{\gamma}(\cdot) \|_{TV} \leq \varepsilon \right) \\
	&\geq \mathbb{P}\left( \left(K/\rho + V_{\tpi_{\Gamma_n}}(\tilde{X}_n)/\rho \right) \tilde{\rho}^{N} \leq \varepsilon  \right) \\
	& = \mathbb{P}\left( V_{\tpi_{\Gamma_n}}(\tilde{X}_n) \leq  \varepsilon \tilde{\rho}^{-N}\rho  - K \right) \\
	& \geq 1- \tilde{\delta}
	\end{align*}
	for all $n\in \mathbb{N}$. The last inequality holds since $ \varepsilon \tilde{\rho}^{-N}\rho  - K  \to  \infty$  as $N \to \infty$ and  $V_{\Gamma_n}(\tilde{X}_n)$ is bounded in probability.
\end{proof}

\section{Proofs for Section 4}\label{section:proofs_4}

\subsection*{\bf Proof of Theorem \ref{thm:heavy_tailed_proposal}}
\begin{proof}
	The aim of the proof is to verify the assumptions of Theorem \ref{thm:uniform} and conclude. Diminishing Adaptation has been addressed in Section \ref{subsection:airmcmc}, so it is enough to prove that Simultaneous Uniform Ergodicity holds. Note that assumption \eqref{eq:heavy_tailed_proposal} implies that for some positive constant~$c_1$
	\begin{equation}\label{eq:heavy_tailed_assumption_transformed}
	\frac{\RJk(y)}{\tpig(y,k)} =\frac{\RJk(y)}{\pi(y)} \frac{\sum_{j\in \mathcal{I}} \wgj Q_j(\mu_j, \Sigma_{\gamma,j})(y)}{\wgk Q_k(\mu_k, \Sigma_{\gamma,k})(y)}  >   \frac{\RJk(y)}{\pi(y)} > c_1 \quad 
	\end{equation}
	for each $k\in \mathcal{I}$, $y \in \mathcal{X}$ and $\giY$.
	For any $(x,i) \in \mathcal{X} \times \mathcal{I}$, any set $\hat{C} \subset \mathcal{X}$  and any $k \in \mathcal{I}$ we can compute
	\begin{align*}
	\int_{\hat{C} } \tPg \left((x,i), (dy,k) \right)   \geq &  \epsilon \agik  \int_{\hat{C} } \tpigJk \left((x,i), (dy,k) \right)   \\
	=& \epsilon \agik  \int_{\hat{C}} \RJk(y) \min \left[1, \frac{\tpig(y,k) \agki \RJi(x) }{\tpig(x,i)\agik\RJk(y)}\right] dy  \\
	\geq & \epsilon   \int_{\hat{C} } \min \left[\agik \RJk(y) , \tpig(y,k) \agki \frac{\RJi(x)}{\tpig(x,i)}\right] dy \\
	\geq & \epsilon \epsilon_a  \int_{\hat{C} }\min \left[c_1 \tpig(y,k), c_1 \tpig(y,k) \right] dy \\
	= & \epsilon \epsilon_a c_1 \int_{\hat{C} } \tpig(y,k) dy,
	\end{align*}
	where $\epsilon_a$ is as in equation \eqref{eq:weights_epsilon}. \\
	Furthermore, any set $C \subset \mathcal{X}\times\mathcal{I}$ may be decomposed as $C = \bigcup_{k \in \mathcal{I}} \hat{C}_k \times \{k\}$, therefore
	\begin{equation}\label{eq:space_small_set}
	\sum_{k \in \mathcal{I}} \int_{\hat{C}_k } \tPg \left((x,i), (dy,k) \right) \geq \sum_{k \in \mathcal{I}} \epsilon \epsilon_a c_1 \int_{\hat{C}_k } \tpig(y,k) dy = \epsilon \epsilon_a c_1 \tpig(C).
	\end{equation}
	Since $\tpig$ is a probability measure on $\mathcal{X} \times {I}$  for each $\giY$ and  \eqref{eq:space_small_set}
	holds for all $(x,i) \in \mathcal{X} \times {I}$, by Theorem 8 of \cite{roberts2004general} we have
	$$
	\|\tPg^n((x,i), \cdot) -
	\tilde{\pi}_{\gamma}(\cdot)\|_{TV} \leq \left(1-\epsilon \epsilon_a c_1\right)^n  \quad \text{for all } (x,i) \in \mathcal{X} \times \mathcal{I} \text{ and } \giY,
	$$
	which completes the proof.
\end{proof}

\subsubsection*{\bf Proof of Theorem \ref{thm:light_tailed_proposal}}

We will show that the assumptions of Theorem \ref{thm:non_uniform} are satisfied. Since Diminishing Adaptation was discussed in Section \ref{subsection:airmcmc}, it suffices to prove that the Containment condition holds, which we will do using Lemma \ref{thm:drift_condition}. Assumptions a) and d) were discussed in Section \ref{section:overview_assumptions}. Assumption e) follows directly from the construction of the algorithm for $A:= \bigcup_{i \in \mathcal{I} } A_i \times \{i\}$. Assumption f) holds trivially, since $\tilde{X}_0$ and $\Gamma_0$ are deterministic (chosen by the user of the algorithm). The remaining part of the proof is organised as follows.

We show that the drift condition expressed in assumption b) of Lemma~\ref{thm:drift_condition} is satisfied  under the assumptions  of Theorem \ref{thm:light_tailed_proposal}. To this end, we consider a drift function of the form 
\begin{equation} \label{equation_drift}
V_{\tpig}(\tilde{x}) := c\tpig(\tilde{x})^{-s} = c\tpig((x,i))^{-s}\end{equation} for some $s\in (0,1)$ and $c$ such that $c\pi(x)^{-s}\geq 1$ (thus enforcing $V_{\tpig}(\tilde{x}) >1$). We first focus on obtaining the appropriate result for the local kernels and subsequently we combine it with the result for jumps. Finally, we prove that assumption c) of Lemma~\ref{thm:drift_condition} is satisfied for $n_0=3$.

\noindent \subsubsection*{\bf Assumption b) of Lemma \ref{thm:drift_condition} (local kernels)}
\begin{proof}
	The drift function $V_{\tpig}$ defined as above is a jointly continuous  function of $(x,\gamma)$ so it is bounded on compact sets in $\mathcal{X} \times \mathcal{Y}$ for each $i \in \mathcal{I}$, as required by assumption~b) of Lemma \ref{thm:drift_condition}. Therefore, it is also bounded on compact sets in  $\mathcal{\tilde{X}} \times \mathcal{Y}$. The proof will be continued for $s=\frac{1}{2}$ but analogous reasoning would be valid for any~$s\in(0,1)$.

	We will prove that there exists $\lambda_L<1$ such that for the local move kernels we have 
	\begin{equation}\label{eq:local_kernels_limsup}
	\limsup_{|x|\to \infty} \sup_{\gamma \in \mathcal{Y}} \frac{\tpigLi V_{\tpig}\left((x,i)\right)}{V_{\tpig}\left((x,i)\right)} \leq \lambda_{L}
	\end{equation}
	for all $i \in \mathcal{I}$. We will refer multiple times to the proof of Theorem 4.1 of \cite{jarner2000geometric}. Following the notation used there, let $C_{\pi(x)}(\delta)$ denote the radial $\delta$-zone around $C_{\pi(x)}$, where $C_{\pi(x)}$ is the contour manifold corresponding to $\pi(x)$.  Firstly, there exists $R_0$ such that for $|x|>R_0$ the contour manifold $C_{\pi(x)}$ is parametrised by $S^{d-1}$ and encloses the acceptance set for $\pi$ defined as $a(x):=\{y \in \mathcal{X}: \pi(y) > \pi(x) \}$ (we refer to Section 4 of \cite{jarner2000geometric} for the details of this argument). In our proof we will only consider $|x| >R_0$. Define also
	\begin{equation}\label{eq:lambda_L_i_def}
	\lambda_{L,i} :=\limsup_{|x| \to \infty} \sup_{\gamma \in \mathcal{Y}} \int_{r_{\gamma,i}(x)}\RLi(x,y) dy.
	\end{equation}
	By assumption \eqref{eq:assumption_limsup} $\lambda_{L,i}<1$.
	
	Fix $i \in \mathcal{I}$ and $\epsilon >0$. We will show that for sufficiently large $x$  
	\begin{equation}\label{eq:epsilon_bounding}
	\frac{\tpigLi V_{\tpig} \left((x,i)\right)}{V_{\tpig}\left((x,i)\right)} \leq \lambda_{i,L}  + 3\epsilon  + \epsilon^{1/2}.
	\end{equation}
	The idea of this proof is to split $\mathcal{X}$ into disjoint sets $\mathcal{X} \setminus B(x,K)$, $B(x,K) \cap C_{\pi(x)}(\delta)$ and $B(x,K) \setminus C_{\pi(x)}(\delta)$ and show that for any $x$ with a sufficiently large norm the integral representing acceptance, that is, of the function $\RLi(x,y) \min \left[1, \frac{\tpig(y,i)}{\tpig(x,i)}\right] \frac{V_{\tpig}\left((y,i)\right)}{V_{\tpig}\left((x,i)\right)}$ on those sets is bounded from above by $\epsilon$, $\epsilon$ and $\epsilon^{1/2}$, respectively. We fix the values of $K$ and $\delta$ below. As for the rejection part, we use \eqref{eq:lambda_L_i_def} to show that the corresponding integral is bounded by $\lambda_{L,i} + \epsilon$, for all $x$ at a sufficient distance from \textbf{0}. Putting all these upper bounds together, we obtain the required $\lambda_{i,L}  + 3\epsilon  + \epsilon^{1/2}$.

	Firstly, observe that by assumption a) and condition \eqref{eq:matrices_estimation} the family of distributions $\RLi(x,\cdot)$ is tight. Thus, there exists $K$ such that 
	\begin{equation}\label{eq:local_kernel_inequality_part_1}
	\sup_{\giY}  \int_{\mathcal{X}\setminus B(x,K)} \RLi(x,y) dy \leq \epsilon.
	\end{equation}
	Furthermore, as shown the proof of Theorem 4.1 of \cite{jarner2000geometric}, under assumption c) of Theorem~\ref{thm:light_tailed_proposal} for any positive $\delta$ and $K$ 
	\begin{equation}\label{eq:jarner_hansen_formula}
	\mu^{\text{Leb}}\left(B(x,K)  \cap C_{\pi(x)}(\delta) \right) \leq \delta \left(\frac{|x|+K}{|x|-K}\right)^{d-1} \frac{\mu^{\text{Leb}} \left(B(x, 3K)\right)}{K}.
	\end{equation}
	Fix $K$ satisfying \eqref{eq:local_kernel_inequality_part_1}. Since $\lim_{x\to \infty} \left(\frac{|x|+K}{|x|-K}\right)^{d-1} = 1$, there exists $R_1>0$ such that for  $|x| > \max[R_0, R_1]$  
	$$
	\left(\frac{|x|+K}{|x|-K}\right)^{d-1} < 1+\epsilon.
	$$
	
	Recall that by assumption a)  for any $x\in \mathcal{X}$ we have $\sup_{\gamma \in \mathcal{Y}}\sup_{y \in \mathcal{X}} \RLi(x,y)>0$. Now let us choose $\delta$ such that for $|x| > R_1$
	$$
	\mu^{\text{Leb}}\left(C_{\pi(x)}(\delta) \cap B(x,K)\right) \leq  \frac{ \epsilon}{ \sup_{\gamma \in \mathcal{Y}}\sup_{y \in \mathcal{X}} \RLi(x,y)}, 
	$$ 
	therefore getting
	\begin{equation}\label{eq:local_kernel_inequality_part_2}
	\sup_{\gamma \in \mathcal{Y}} \int_{C_{\pi(x)}(\delta) \cap B(x,K)}\RLi(x, y)dy \leq \epsilon.
	\end{equation}
	Let $r(x) = \{y \in \mathcal{X}: \pi(y)< \pi(x)\}$ and $a(x) = \{y \in \mathcal{X}: \pi(y)\geq \pi(x)\}$. 
	We now split $B(x,K)\setminus C_{\pi(x)}(\delta)$ into $\left(r(x) \cap B(x,K)\right)\setminus C_{\pi(x)}(\delta)$ and $\left(a(x) \cap B(x,K)\right)\setminus C_{\pi(x)}(\delta)$ and we estimate the value of $\min \left[1, \frac{\tpig(y,i)}{\tpig(x,i)}\right] \frac{V_{\tpig}\left((y,i)\right)}{V_{\tpig}\left((x,i)\right)}$ on each of those sets separately. Fix $\tilde{K}$ such that
	\begin{equation}\label{eq:t_dist_bound}
	\frac{\sum_{j\in \mathcal{I}} \wgj Q_j(\mu_j, \Sigma_{\gamma,j})(x)}{\wgi Q_i(\mu_i, \Sigma_{\gamma,i})(x)} \leq \tilde{K} \quad \text{for all } x \in \mathcal{X} \text{ and } \gamma \in \mathcal{Y}.
	\end{equation}
	This is possible by assumption d) combined with conditions \eqref{eq:matrices_estimation} and \eqref{eq:weights_epsilon}.
	Since $\pi$ is super-exponential, there exists $R_2$ so large that for   $|x| > \max[R_0, R_2]$: 
	\begin{enumerate}[1)]
		\item If $y \in \left(r(x) \cap B(x,K)\right)\setminus C_{\pi(x)}(\delta)$, then $\frac{\pi(y)}{\pi(x)} \leq \frac{\epsilon}{\tilde{K}}$.
		\item If $y \in \left(a(x) \cap B(x,K)\right)\setminus C_{\pi(x)}(\delta)$, then  $\frac{\pi(x)}{\pi(y)} \leq \frac{\epsilon}{\tilde{K}}$.
	\end{enumerate}
	In the first case we have (using \eqref{eq:t_dist_bound}):
	\begin{align*}
	\frac{\tpig(y,i)}{\tpig(x,i)} = & \frac{\pi(y)}{\pi(x)} \frac{\sum_{j\in \mathcal{I}} \wgj Q_j(\mu_j, \Sigma_{\gamma,j})(x)}{ \wgi Q_i(\mu_i, \Sigma_{\gamma,i})(x)}\frac{\wgi Q_i(\mu_i, \Sigma_{\gamma,i})(y)}{\sum_{j\in \mathcal{I}} \wgj Q_j(\mu_j, \Sigma_{\gamma,j})(y)} \\
	\leq & \frac{\pi(y)}{\pi(x)} \frac{\sum_{j\in \mathcal{I}} \wgj Q_j(\mu_j, \Sigma_{\gamma,j})(x)}{ \wgi Q_i(\mu_i, \Sigma_{\gamma,i})(x)} \leq \tilde{K} \frac{\pi(y)}{\pi(x)} \leq \epsilon.
	\end{align*}
	Similarly for $y \in \left(a(x) \cap B(x,K)\right)\setminus C_{\pi(x)}(\delta)$ we get
	$\frac{\tpig(x,i)}{\tpig(y,i)} \leq \epsilon$. Hence, on $B(x,K)\setminus C_{\pi(x)}(\delta)$ we have
	\begin{equation}\label{eq:B_minus_C_pi}
	\min \left[1, \frac{\tpig(y,i)}{\tpig(x,i)}\right] \frac{V_{\tpig}\left((y,i)\right)}{V_{\tpig}\left((x,i)\right)}= \min \left[\frac{\tpig\left(x,i\right)^{1/2}}{\tpig(y,i)^{1/2}} , \frac{\tpig\left(y,i\right)^{1/2}}{\tpig(x,i)^{1/2}}\right] \leq \epsilon^{1/2}.
	\end{equation}
	Furthermore, by assumption \eqref{eq:lambda_L_i_def} we can choose $R_3$ such that for $|x|> R_3$
	\begin{equation}\label{eq:local_kernel_inequality_part_3}
	\sup_{\gamma \in \mathcal{Y}} \int_{r_{\gamma,i}(x)}\RLi(x,y) dy \leq \lambda_{L,i} +\epsilon.
	\end{equation}
	Finally, for  $|x| > \max[R_0, R_1, R_2,R_3]$  we obtain
	\begin{align*}
	\frac{\tpigLi V_{\tpig} \left((x,i)\right)}{V_{\tpig}\left((x,i)\right)} = & \int_{\mathcal{X}}\RLi(x,y) \min \left[1, \frac{\tpig(y,i)}{\tpig(x,i)}\right] \frac{V_{\tpig}\left((y,i)\right)}{V_{\tpig}\left((x,i)\right)}dy &\\
	& \quad  +   
	\int_{\mathcal{X}}\RLi(x,y)
	\left(1 - \min\left[1 , \frac{\tpig\left(y,i\right)}{\tpig\left(x,i\right)}\right] \right) dy  \\
	=& \int_{\mathcal{X}}\RLi(x,y) \min \left[\frac{\tpig\left(x,i\right)^{1/2}}{\tpig(y,i)^{1/2}}, \frac{\tpig\left(y,i\right)^{1/2}}{\tpig(x,i)^{1/2}}\right] dy & \\
	& \quad  +   
	\int_{r_{\gamma,i}(x)}\RLi(x,y)
	\left(1 - \min\left[1 , \frac{\tpig\left(y,i\right)}{\tpig\left(x,i\right)}\right] \right) dy \\
	=& \int_{\mathcal{X}\setminus B(x,K)}\RLi(x,y) \min \left[\frac{\tpig\left(x,i\right)^{1/2}}{\tpig(y,i)^{1/2}}, \frac{\tpig\left(y,i\right)^{1/2}}{\tpig(x,i)^{1/2}}\right] dy \\
	& \quad + 
	\int_{ B(x,K)\cap C_{\pi(x)}(\delta)}\RLi(x,y) \min \left[\frac{\tpig\left(x,i\right)^{1/2}}{\tpig(y,i)^{1/2}}, \frac{\tpig\left(y,i\right)^{1/2}}{\tpig(x,i)^{1/2}}\right]dy \\
	\text{(see \eqref{eq:B_minus_C_pi})} \quad & \quad \quad + 
	\int_{B(x,K)\setminus C_{\pi(x)}(\delta)}\RLi(x,y) \min \left[\frac{\tpig\left(x,i\right)^{1/2}}{\tpig(y,i)^{1/2}}, \frac{\tpig\left(y,i\right)^{1/2}}{\tpig(x,i)^{1/2}}\right]dy  \\
	& \quad \quad \quad  + 
	\int_{r_{\gamma,i}(x)}\RLi(x,y) 
	\left(1 -  \frac{\tpig\left(y,i\right)}{\tpig\left(x,i\right)} \right) dy \\
	(\leq \epsilon \text{ by \eqref{eq:local_kernel_inequality_part_1}}) \leq & \int_{\mathcal{X}\setminus B(x,K)}\RLi(x,y) dy  \\
	(\leq \epsilon \text{ by \eqref{eq:local_kernel_inequality_part_2}}) \quad & \quad  + 
	\int_{ B(x,K)\cap C_{\pi(x)}(\delta)}\RLi(x,y) dy & \\
	(\leq \epsilon^{1/2}) \quad & \quad \quad + 
	\int_{B(x,K)\setminus C_{\pi(x)}(\delta)}\RLi(x,y) \epsilon^{1/2} dy & \\
	(\leq \lambda_{L,i} +\epsilon  \text{ by \eqref{eq:local_kernel_inequality_part_3}}) \quad & \qquad \qquad \qquad +
	\int_{r_{\gamma,i}(x)}\RLi(x,y) dy\\
	\leq & \lambda_{L,i}  + 3\epsilon  + \epsilon^{1/2},
	\end{align*}
	which ends the proof of \eqref{eq:epsilon_bounding}. Consequently, by setting  $\lambda_L$ such that $\max_{i\in \mathcal{I}} \lambda_{i,L} < \lambda_L <1$, we obtain~\eqref{eq:local_kernels_limsup}. Observe that there exists $R_L>0$ such that if $|x| >R_L$, then
	$$
	\tpigLi V_{\tpig}\left((x,i)\right) \leq \lambda_{L} V_{\tpig}\left((x,i)\right).
	$$
	For $|x| \leq R_L$ we have
	\begin{align*}
	\sup_{|x| <R_L} \sup_{\gamma \in \mathcal{Y}} \tpigLi V_{\tpig}\left((x,i)\right) \leq &  \sup_{|x| <R_L} \sup_{\gamma \in \mathcal{Y}} \frac{\tpigLi V_{\tpig} \left((x,i)\right)}{V_{\tpig}\left((x,i)\right)} \sup_{|x| <R_L} \sup_{\gamma \in \mathcal{Y}}  V_{\tpig}\left((x,i)\right).
	\end{align*}
	Now analogously to $r_{\gamma,i}(x)$, let us define the acceptance region for $\tpig$ as
	\begin{equation}\label{eq:acc_region_gamma}
	a_{\gamma,i}(x) = \{y \in \mathcal{X}: \tpi_{\gamma}(y,i)\geq \tpi_{\gamma}(x,i)\}.
	\end{equation}
	Note that 
	\begin{align*}
	\frac{\tpigLi V_{\tpig} \left((x,i)\right)}{V_{\tpig}\left((x,i)\right)} = & \int_{a_{\gamma,i}(x)} R_{\gamma,i}(x,y) \frac{V_{\tpig}\left((y,i)\right)}{V_{\tpig}\left((x,i)\right)} dy & \\ 
	& \quad  + 
	\int_{r_{\gamma,i}(x)} \RLi(x,y)  \frac{\tpig(y,i)}{\tpig(x,i)}\frac{V_{\tpig}\left((y,i)\right)}{V_{\tpig}\left((x,i)\right)} dy & \\
	& \quad \quad + 
	\int_{r_{\gamma,i}(x)} \RLi(x,y)  \left(1 - \frac{\tpig(y,i)}{\tpig(x,i)} \right)dy \\ 
	=  & \int_{a_{\gamma,i}(x)} \RLi(x,y) \frac{\tpig\left(x,i\right)^{1/2}}{\tpig(y,i)^{1/2}} dy \\
	& \quad  +
	\int_{r_{\gamma,i}(x)} \RLi(x,y)  \left(1 - \frac{\tpig(y,i)}{\tpig(x,i)}  + \frac{\tpig(y,i)^{1/2}}{\tpig(x,i)^{1/2}}\right)dy  \\
	\leq & 2 \int_{\mathcal{X}} \RLi(x,y) dy =2.
	\end{align*}
	Besides
	$$
	\sup_{|x| <R_L} \sup_{\gamma \in \mathcal{Y}}  V_{\tpig}\left((x,i)\right) <\infty 
	$$
	as for each $i$ the function $V_{\tpig}\left((x,i)\right)$  is  jointly continuous with respect to $x$ and $\gamma$.
	By setting 
	$$ 
	b_L:=2\max_{i \in \mathcal{I}}\sup_{|x| <R_L} \sup_{\gamma \in \mathcal{Y}}  V_{\tpig}\left((x,i)\right)
	$$ we obtain
	
	\begin{equation}\label{eq:local_drift_condition}
	\tpigLi V_{\tpig}\left((x,i)\right) \leq \lambda_{L} V_{\tpig}\left((x,i)\right) + b_L
	\end{equation}
	for all $(x,i)\in \mathcal{X} \times \mathcal{I}$.
\end{proof}
\newpage

\noindent\subsubsection*{\bf Assumption b) of Lemma \ref{thm:drift_condition} (jump kernels)}
\begin{proof}
	Firstly recall that under assumption e1) of Theorem \ref{thm:light_tailed_proposal} we have:
	\begin{align}\label{eq:P_combined}
	\begin{split}
	\tPg V_{\tpig}\left((x,i)\right)  =& \sum_{k \in \mathcal{I}} \int_{\mathcal{X}} \tPg\left((x,i), (dy,k)\right)V_{\tpig}\left((y,k)\right)\\
	=&  (1-\epsilon) \int_{\mathcal{X}} \tpigLi\left((x,i), (dy,i)\right) V_{\tpig}\left((y,i)\right)  \\
	& \quad + \epsilon\sum_{k\in \mathcal{I}} a_{\gamma,ik}  \int_{\mathcal{X}} \tpigJk\left((x,i), (dy,k)\right) V_{\tpig}\left((y,k)\right).
	\end{split}
	\end{align}
	if $x$ belongs to the jumping region $JR_{\gamma,i}$, and $\tPg V_{\tpig}\left((x,i)\right) = \int_{\mathcal{X}} \tpigLi\left((x,i), (dy,i)\right) V_{\tpig}\left((y,i)\right)$ otherwise. 
	Recall as well that all the jumping regions $JR_{\gamma,i}$ for $\gamma \in \mathcal{Y}$, $i \in \mathcal{I}$ are contained within a compact set $D$ and consequently any point $(y,k)$ proposed in a deterministic jump satisfies $(y,k) \in D\times \{k\}$. Let us now define
	$$
	b_J : = \sup_{\giY} \max_{k \in \mathcal{I}} \sup_{y \in D}V_{\tpig}\left((y,k)\right) < \infty.
	$$
	Observe that 
	$$
	\sup_{\giY} \max_{k \in \mathcal{I}}\sup_{x\in D} \int_{\mathcal{X}} \tpigJk\left((x,i), (dy,k)\right) V_{\tpig}\left((y,k)\right) \leq b_J
	$$
	and so  for all $(x,i)$ 
	\begin{align*}	 
	\tPg V_{\tpig}\left((x,i)\right)  \leq & \int_{\mathcal{X}} \tpigLi\left((x,i), (dy,i)\right) V_{\tpig}\left((y,i)\right) \\ &\quad+ \sum_{k\in \mathcal{I}} a_{\gamma,ik}   \int_{\mathcal{X}} \tpigJk\left((x,i), (dy,k)\right) V_{\tpig}\left((y,k)\right)\\
	\leq & \lambda_{L} V_{\tpig}\left((x,i)\right) + b_L +b_J.
	\end{align*}
	Finally, setting $\lambda:= \lambda_L$ and $b:=b_L+b_J$ yields \eqref{eq:main_drift_condition}  under assumption e1).
	
	Let us now consider assumption e2).  Recall that for any $s \in (0,1)$ if $V_{\tpig}\left((x,i)\right) = c \tpig(x,i)^{-s}$, then \eqref{eq:local_drift_condition} holds for some $\lambda_L$, $b_L$ and~$R_L$.
	Furthermore,
	\begin{align}\label{eq:jump_drift_condition}
	\begin{split}
	&\int_{\mathcal{X}} \tpigJk\left((x,i), (dy,k)\right) V_{\tpig}\left((y,k)\right)   \\
	= & \int_{\mathcal{X}}\RJk(y)\min \left[1, \frac{\tpig(y, k)}{\tpig(x, i)} \frac{\agki \RJi(x)} {\agik \RJk(y)} \right] V_{\tpig}\left((y,k)\right) dy  \\
	& \quad +
	\left(1 - \int_{\mathcal{X}} \RJk(y) \min \left[1, \frac{\tpig(y, k)}{\tpig(x, i)} \frac{\agki \RJi(x)} {\agik \RJk(y)} \right] dy\right) V_{\tpig}\left((x,i)\right) \\
	\leq &  \int_{\mathcal{X}}\RJk(y)  V_{\tpig}\left((y,k)\right) dy +  V_{\tpig}\left((x,i)\right).
	\end{split}
	\end{align}
	By assumption \eqref{eq:light_tailed_proposal} there exists a constant $c_2$ such that  $\frac{\RJi(x)}{\pi(x)^{s_J}} < c_2$  for  each $x \in \mathcal{X}$, $i \in \mathcal{I}$ and $\gamma \in \mathcal{Y}$ and as a consequence,
	$$
	\frac{\RJi(x)}{\tpig(x,i)^{s_J}} = \frac{\RJi(x)}{\pi(x)^{s_J}} \left(\frac{\sum_{j\in \mathcal{I}} \wgj Q_j(\mu_j, \Sigma_{\gamma,j})(x)}{\wgi Q_i(\mu_i, \Sigma_{\gamma,i})(x)} \right)^{s_J} \leq c_2 \tilde{K}^{s_J},
	$$
	where the last inequality follows from \eqref{eq:t_dist_bound}. Fix $s<s_J$ and observe that 
	\begin{align}\label{eq:jump_estimation_b} 
	\begin{split}
	b_J:= \sup_{\giY} \max_{k\in \mathcal{I}}\int_{\mathcal{X}} \RJk(y)  V_{\tpig}\left((y,k)\right) dy  = & \sup_{\giY} \max_{k\in \mathcal{I}}\int_{\mathcal{X}} \RJk(y) \tpig(y,k)^{-s} dy \\
	= & \sup_{\giY} \max_{k\in \mathcal{I}} \int_{\mathcal{X}} \frac{\RJk(y)}{\tpig(y,k)^{s_J}} \tpig(y,k)^{s_J-s} dy \\
	\leq &\sup_{\giY} \max_{k\in \mathcal{I}} c_2 \tilde{K}^{s_J} \int_{\mathcal{X}} \tpig(y,k)^{s_J-s} dy \\
	\leq  & c_2 \tilde{K}^{s_J}\int_{\mathcal{X}} \pi(y)^{s_J-s} dy  < \infty,
	\end{split}
	\end{align}
	where the last inequality follows from $\pi$ being super-exponential and $s_J-s$ positive.	
	
	Recall additionally that under e2) \eqref{eq:P_combined} holds for all $(x,i) \in \mathcal{X}\times\mathcal{I}$.
	Putting together \eqref{eq:local_drift_condition}, \eqref{eq:jump_drift_condition}, \eqref{eq:jump_estimation_b} and \eqref{eq:P_combined} yields	
	\begin{align*}
	\tPg V_{\tpig} \left((x,i)\right) \leq &(1-\epsilon) \lambda_L V_{\tpig} \left((x,i)\right) + (1-\epsilon) b_L + \epsilon\sum_{k \in \mathcal{I}} a_{\gamma,ik}  b_{J} + \epsilon V_{\tpig} \left((x,i)\right) \\
	= & \left((1-\epsilon)\lambda_L + \epsilon\right) V_{\tpig} \left((x,i)\right) +  (1-\epsilon) b_L + \epsilon b_{J}.
	\end{align*}
	By setting $\lambda: = (1-\epsilon)\lambda_L + \epsilon$ and $b:= (1-\epsilon) b_L + \epsilon b_J$, we
	obtain the drift condition as given by \eqref{eq:main_drift_condition}.
\end{proof}
\noindent\subsubsection*{ \bf Assumption c) of Lemma \ref{thm:drift_condition}}
\begin{proof}
	Proving the minorisation condition \eqref{eq:minorisation_condititon} amounts to specifying $n_0,$ $\delta$, $\nu_{\gamma}$ and $v$, and verifying that $\tPg^{n_0}(\tilde{x}, B) \geq \delta \nu_{\gamma}(B)$ for all measurable sets $B$ and all $\tilde{x}$ satisfying $V_{\tpig}(\tilde{x}) \leq v$.  Let $C_v$ be defined as $C_v:= \{x\in \mathcal{X}: c\pi(x)^{-s} \leq v\}$, where $c$ is defined in \eqref{equation_drift}. We specify the value of $v$ below, separately for assumptions e1) and e2). 
	
	Note that if $x \in C_v$, then $V_{\tpig}\left((x,i)\right) \leq v$ for each $i\in \mathcal{I}$ and each $\gamma \in \mathcal{Y}$.  Observe also that $C_v$ is a compact set. Let $\nu_{\gamma}$ be the uniform distribution on $C_v \times \mathcal{I}$ (and 0 everywhere else)
	i.e. for $A \subseteq C_v$ we have $\nu_{\gamma}(A\times\{i\}) = \frac{1}{N}\frac{\mu^{\text{Leb}}(A)}{\mu^{\text{Leb}}(C_v)}$. To prove the claim, it is enough to show that $$\tPg^{n_0}(\tilde{x}, \hat{B}) \geq \delta\nu_{\gamma}\left(\hat{B}\right)$$ for $\hat{B}$ of the form $B \times \{k\}$, for any $B\subseteq C_v$ and any $k\in \mathcal{I}$. 
	
	Firstly, note that for any $i, k \in \mathcal{I}$
	\begin{align}\label{eq:assumption_c_augmented_target}
	\begin{split}
	\inf_{x,y \in C_v} \inf_{\giY}  \min \left[1, \frac{\tpig(y,k)}{\tpig(x,i)}\right] \geq \inf_{x,y \in C_v} \inf_{\giY}  \min \left[1, \frac{\pi(y)}{\pi(x)} \frac{\wgk Q_k(\mu_k, \Sigma_{\gamma,k})(y)}{\sum_{j\in \mathcal{I}} \wgj Q_j(\mu_j, \Sigma_{\gamma,j})(y)} \right] \geq 0,
	\end{split}
	\end{align}
	where the last inequality is satisfied by assumption c) and equation \eqref{eq:t_dist_bound}.

	We will first focus on verifying the minorisation condition under assumption~e1). Recall that $D$ is a compact set in $\mathcal{X}$ such that for each $\gamma \in \mathcal{Y}$ and  $i \in \mathcal{I}$ we have $JR_{\gamma, i} \subseteq D$.  Recall also that by the construction of the jumping regions there exists $r_1$ such that for each $\gamma \in \mathcal{Y}$ and  $i \in \mathcal{I}$ the ball $B(\mu_i, r_1) \subseteq JR_{\gamma, i}$. Let us now pick $v$ so large that $D \subseteq C_v$ and $v >2n_0b/(1-\lambda^{n_0})$ for $n_0 =3$. 
	
	The minorisation condition will be proved for $n_0=3$.	Indeed, three steps of the algorithm are enough to get from a point $(x,i)$ to a set $B\times \{k\}$ (a local step within mode $i$ to reach its jumping region, a jump to mode $k$ and a local move within mode $k$ to set $B$).  
	
	Fix $(x,i) \in C_v\times \mathcal{I}$ and a set $\hat{B} = B \times \{k\}$ for $B \subseteq C_v$ (we allow for the case $k=i$). Note that since $JR_{\gamma,i} \subset C_v$ for all $i\in \mathcal{I}$ and $\giY$ we have
	\begin{align}\label{eq:proof_ass_c_e1_1}
	\begin{split}
	& \inf_{x\in C_v} \inf_{\giY} \tpigLi\left((x,i),JR_{\gamma, i} \times \{i\}\right) \\
	\geq &\inf_{x\in C_v}\inf_{\giY}\int_{JR_{\gamma, i}}\RLi(x,y) \min \left[1, \frac{\tpig(y,i)}{\tpig(x,i)}\right] dy\\
	\geq & \inf_{x, y\in C_v}\inf_{\giY} \RLi(x,y) \inf_{x, y\in C_v} \inf_{\giY}  \min \left[1, \frac{\tpig(y,i)}{\tpig(x,i)}\right] \mu^{\text{Leb}}(JR_{\gamma, i}) \\
	\geq &	\inf_{x, y\in C_v}\inf_{\giY} \RLi(x,y) \inf_{x, y\in C_v} \inf_{\giY}  \min \left[1, \frac{\tpig(y,i)}{\tpig(x,i)}\right] \mu^{\text{Leb}}\left(B(\mu_i, r_1)\right) \\
	:= & p_{1,i}.
	\end{split}
	\end{align}
	By equations \eqref{eq:local_proposals_compact_set_assumption} and \eqref{eq:assumption_c_augmented_target} we get that $p_{1,i}$ defined above is strictly positive for $i \in \mathcal{I}$. Considering the probability of accepting a deterministic jump from mode $i$ to mode $k$, we obtain  
	\begin{align}\label{eq:proof_ass_c_e1_2}
	\begin{split}
	& \inf_{x\in JR_{\gamma, i}} \inf_{\giY} \tilde{P}_{\gamma, J, k} \left((x,k), B \times\{k\} \right) \\
	\geq	&\inf_{x\in JR_{\gamma,i}} \inf_{ y \in  JR_{\gamma,k} }  \inf_{\giY} \min\left[1, \frac{\tilde{\pi}(y,k)}{\tilde{\pi}(x,i)} \frac{ a_{\gamma, ki} \sqrt{\det \Sigma_{\gamma,k}}}{ a_{\gamma, ik} \sqrt{\det \Sigma_{\gamma,i}} } \right] \\
	\geq & \inf_{x, y\in C_v} \inf_{\giY} \min\left[1, \frac{\tilde{\pi}(y,k)}{\tilde{\pi}(x,i)} \frac{ a_{\gamma, ki} \sqrt{\det \Sigma_{\gamma,k}}}{ a_{\gamma, ik} \sqrt{\det \Sigma_{\gamma,i}} } \right] \\
	:= & p_{2, ik}.
	\end{split}
	\end{align}
	It follows from equations \eqref{eq:assumption_c_augmented_target}, \eqref{eq:weights_epsilon} and \eqref{eq:matrices_estimation} that $p_{2, ik}>0$ for  $i, k\in \mathcal{I}$.
	Analogous arguments show that	
	\begin{align}\label{eq:proof_ass_c_e1_3}
	\begin{split}
	& \inf_{x\in JR_{\gamma, k}} \inf_{\giY} \tpigLk\left((x,k), B \times\{k\} \right) \\
	\geq &\inf_{x\in C_v}\inf_{\giY}\int_{B}\RLk(x,y) \min \left[1, \frac{\tpig(y,k)}{\tpig(x,k)}\right] dy\\
	\geq & \underbrace{\inf_{x, y\in C_v}\inf_{\giY} \RLk(x,y) \inf_{x, y\in C_v} \inf_{\giY}  \min \left[1, \frac{\tpig(y,k)}{\tpig(x,k)}\right]}_{:=p_{3,k}} \mu^{\text{Leb}}(B).
	\end{split}
	\end{align}
	Combining \eqref{eq:proof_ass_c_e1_1}, \eqref{eq:proof_ass_c_e1_2} and \eqref{eq:proof_ass_c_e1_3} yields
	$$
	\tPg^{3}((x,i), B\times \{k\}) \geq (1-\epsilon)^2 \epsilon \epsilon_a p_{1,i}p_{2, ik} p_{3,k} \mu^{\text{Leb}}(B).
	$$
	Setting $\delta:=  (1-\epsilon)^2 \epsilon \epsilon_a \min_{i, k \in \mathcal{I}}p_{1,i}p_{2, ik} p_{3,k} N \mu^{\text{Leb}}(C_v)$ ends the proof.
	
	We will now verify the minorisation condition under assumption~e2). Let $v$ be so large that $B(\mu_i, r)\subseteq C_v$ (see assumption \eqref{eq:jump_proposals_compact_set_assumption}) for $i \in \mathcal{I}$ and $v >2n_0b/(1-\lambda^{n_0})$, for $n_0=3$.  We will prove that indeed the minorisation condition holds for $n_0=3$. Note that if we want to move from $(x,i)$ to a set $B \times \{k\}$, it is enough to make a local step to $B(\mu_i,r)$, and then a jump to $B(\mu_k,r)$ followed by a local step to $B$.
	
	As before,  fix $(x,i) \in C_v\times \mathcal{I}$ and a set $\hat{B} = B \times \{k\}$ for $B \subseteq C_v$. Again we include the case $k=i$. Analogous calculations to \eqref{eq:proof_ass_c_e1_1} show that 
	\begin{align}\label{eq:proof_ass_c_e2_1}
	\begin{split}
	& \inf_{x\in C_v} \inf_{\giY} \tpigLi \left((x,i), B(\mu_i, r) \times\{i\} \right) \\
	\geq &	\inf_{x, y\in C_v}\inf_{\giY} \RLi(x,y) \inf_{x, y\in C_v} \inf_{\giY}  \min \left[1, \frac{\tpig(y,i)}{\tpig(x,i)}\right] \mu^{\text{Leb}}\left(B(\mu_i, r)\right) \\
	:= & p_{4,i}>0.
	\end{split}
	\end{align}
	For the jump kernel involved we obtain
	\begin{align}\label{eq:proof_ass_c_e2_2}
	\begin{split}
	& \inf_{x\in B(\mu_i,r)} \inf_{\giY} \tilde{P}_{\gamma, J, k} \left((x,i), B(\mu_k, r) \times\{k\} \right) \\
	\geq &  \inf_{x\in B(\mu_i,r)} \inf_{\giY} \int_{ B(\mu_k, r)} \RJk(y)  \min \left[1, \frac{\tpig(y, k)}{\tpig(x, i)} \frac{\agki \RJi(x)} {\agik \RJk(y)} \right] dy \\
	\geq &  \inf_{x\in B(\mu_i,r)} \inf_{y \in B(\mu_k, r) } \inf_{\giY} \RJk(y) \min \left[1, \frac{\tpig(y, k)}{\tpig(x, i)} \frac{\agki \RJi(x)} {\agik \RJk(y)} \right] \mu^{\text{Leb}} \left( B(\mu_k, r) \right)\\
	:= & p_{5,ik}.
	\end{split}
	\end{align}
	Note that $p_{5,ik}$ is positive by equations  \eqref{eq:assumption_c_augmented_target} and \eqref{eq:weights_epsilon}, and assumption e2). Finally, similar calculations to \eqref{eq:proof_ass_c_e1_3} yield
	\begin{align}\label{eq:proof_ass_c_e2_3}
	\begin{split}
	& \inf_{x\in B(\mu_k, r)} \inf_{\giY} \tpigLk\left((x,k), B \times\{k\} \right) \\
	\geq &\inf_{x\in C_v}\inf_{\giY}\int_{B}\RLk(x,y) \min \left[1, \frac{\tpig(y,k)}{\tpig(x,k)}\right] dy\\
	\geq & \underbrace{\inf_{x, y\in C_v}\inf_{\giY} \RLk(x,y) \inf_{x, y\in C_v} \inf_{\giY}  \min \left[1, \frac{\tpig(y,k)}{\tpig(x,k)}\right]}_{=p_{3,k}} \mu^{\text{Leb}}(B).
	\end{split}
	\end{align}
	for $p_{3,k}$ defined in the previous part of the proof. We now combine \eqref{eq:proof_ass_c_e2_1}, \eqref{eq:proof_ass_c_e2_2} and \eqref{eq:proof_ass_c_e2_3} to get
	$$
	\tPg^{3}((x,i), B\times \{k\}) \geq (1-\epsilon)^2 \epsilon \epsilon_a p_{4,i}p_{5, ik} p_{3,k} \mu^{\text{Leb}}(B).
	$$
	Setting $\delta:=  (1-\epsilon)^2 \epsilon \epsilon_a \min_{i, k \in \mathcal{I}}p_{4,i}p_{5, ik}p_{3,k} N \mu^{\text{Leb}}(C_v)$ ends the proof.
\end{proof}	
\subsubsection*{\bf Proof of Theorem \ref{thm:lln_jams}}
\begin{proof}
	This theorem is a direct corollary from Theorem \ref{thm:lln}. The assumptions of this theorem were verified in the proofs of Theorems \ref{thm:heavy_tailed_proposal} or  \ref{thm:light_tailed_proposal}, under the uniform and the non-uniform scenario, respectively. 
\end{proof}
\subsubsection*{\bf Proof of Lemma \ref{thm:useful_lemma_1}}
\begin{proof}
	Fix $i \in \mathcal{I}$ and let $\epsilon_L$ be such that for $|x|$ larger than some $R_0$
	$$
	\int_{a(x)}R_{\gamma^{*}, L, i}(x,y) \geq \epsilon_L,
	$$
	(such $\epsilon_L$ can be found due to assumption \eqref{eq:assumption_single_gamma}). Hence, for $K$ sufficiently large 
	$$
	\int_{a(x) \cap B(x, K)}R_{\gamma^{*}, L, i}(x,y) \geq \frac{\epsilon_L}{2 },
	$$
	which implies that for any $|x|>R_0$ 
	$$
	\mu^{\text{Leb}}\left(a(x) \cap B(x, K)\right) \geq \frac{\epsilon_L}{2\sup_{y \in B(x,K) }R_{\gamma^{*}, L, i}(x,y) } = \frac{\epsilon_L}{2\sup_{y \in B(\textbf{0},K) }R_{\gamma^{*}, L, i}(\textbf{0},y) }
	$$
	and consequently 
	\begin{align}\label{eq:lemma_upper_bound}
	\begin{split}
	& \inf_{\giY}\int_{a(x) \cap B(x, K)}\RLi(x,y)dy \\
	\geq  & \mu^{\text{Leb}}\left(a(x) \cap B(x, K)\right) \inf_{\giY} \inf_{y \in B(x,K)}\RLi(x,y) \\
	\geq & \frac{\epsilon_L \inf_{\giY} \inf_{y \in B(\textbf{0},K)}\RLi(\textbf{0},y)}{2\sup_{y \in B(\textbf{0},K) }R_{\gamma^{*}, L, i}(\textbf{0},y) }=: \tilde{\epsilon}_L>0. 
	\end{split}
	\end{align}
	By assumption a) of Theorem \ref{thm:light_tailed_proposal} $\tilde{\epsilon}_L$ is indeed positive. 
	
	Let the acceptance region $a_{\gamma,i}(x)$ be given by \eqref{eq:acc_region_gamma}. We will show that for $|x|$ sufficiently large and for each $\giY$
	\begin{equation*}
	\int_{a_{\gamma,i}(x)} \RLi(x,y) dy \geq \frac{\tilde{\epsilon}_L}{2},
	\end{equation*}
	which will prove the claim. We shall now repeat similar arguments to those used in the proof of Theorem~\ref{thm:light_tailed_proposal}, in the part for the local kernels. Firstly, we use formula \eqref{eq:jarner_hansen_formula} to conclude that for $|x|$ larger than some $R_1$ (which may depend on $K$) and for sufficiently small $\delta$ (which may depend on $K$, $R_1$ and $\tilde{\epsilon}_L$ ), we have 
	\begin{equation}\label{eq:lemma_lower_bound}
	\sup_{\gamma \in \mathcal{Y}}\int_{C_{\pi(x)}(\delta) \cap B(x,K)}\RLi(x,y ) dy\leq \frac{\tilde{\epsilon}_L}{2}.
	\end{equation}
	We put \eqref{eq:lemma_upper_bound} together with \eqref{eq:lemma_lower_bound} to obtain
	\begin{align*}
	\inf_{\giY} \int_{\left(a(x) \cap B(x,K)\right)\setminus C_{\pi(x)}(\delta) }\RLi\left(x,y\right)dy \geq \frac{\tilde{\epsilon}_L}{2}
	\end{align*}
	for $|x|> \max\left[R_0,R_1\right]$. Now recall that for each $\delta$ there exists $R_2$ such that for $|x|>R_2$ if
	$y \in \left(a(x) \cap B(x,K)\right)\setminus C_{\pi(x)}(\delta)$ then $\frac{\pi(y)}{\pi(x)} \geq \tilde{K}$  for $\tilde{K}$ defined in  \eqref{eq:t_dist_bound}. Therefore in particular $y \in a_{\gamma,i}(x)$ for each $\giY$. Finally, for $|x|> \max\left[R_0,R_1, R_2\right]$ we have
	
	$$
	\inf_{\giY} \int_{a_{\gamma,i}(x) }\RLi\left(x,y\right) dy \geq \inf_{\giY} \int_{\left(a(x) \cap B(x,K)\right)\setminus C_{\pi(x)}(\delta)}\RLi\left(x, y \right) dy \geq \frac{\tilde{\epsilon}_L}{2},
	$$
	which ends the proof.
\end{proof}
\subsubsection*{\bf Proof of Lemma \ref{thm:useful_lemma_2}}
\begin{proof}
	Fix any $\giY$ and $i \in \mathcal{I}$.  
	To prove the required result, we will use analogous arguments to those from the proof of Theorem 4.3 of \cite{jarner2000geometric}. Let $\epsilon>0$ and $R$ be such that for $|x| >R$
	$$
	\frac{x}{|x|} \cdot \frac{\nabla \pi(x)}{|\nabla \pi(x)|} \leq - \epsilon.
	$$
	Fix $K>0$ and define the cone $W(x)$ as
	$$
	W(x) := \left\{x- a \xi: 0<a<K, \xi \in S^{d-1}, \left|\xi - \frac{x}{|x|}\right| \leq \frac{\epsilon}{2} \right\}.
	$$
	We now refer to the proof of Theorem 4.3 of \cite{jarner2000geometric} to see that for $x$ sufficiently large $W(x) \subset a(x)$. What is more, 
	$$
	\liminf_{|x| \to \infty} \int_{W(x)} \RLi(x,y) dy >0 \quad\text{and so} \quad \liminf_{|x| \to \infty} \int_{a(x)} \RLi(x,y) dy >0.
	$$
	Hence, since $i$ was chosen arbitrarily, assumption \eqref{eq:assumption_single_gamma} is satisfied for $\gamma^{*} := \gamma$.
	
	We would like to point out here that originally Theorem 4.3 of \cite{jarner2000geometric} was proved under a stronger assumption, that is, $\RLi(x, y) = \RLi(|x-y|)$. However, careful inspection of this proof shows that it is enough to assume that $\RLi(x, y)= \RLi(y, x)$, which is satisfied in our case as $\RLi$ follows an elliptical distribution.
\end{proof}

\subsubsection*{ \bf Proof of Corollary \ref{thm:corrollary}}
\begin{proof}
	We will again refer multiple times to \cite{jarner2000geometric}. Firstly, by Theorem 4.4 of this paper, if $\pi_1$ and $\pi_2$ are super-exponential and satisfy \eqref{eq:simplifying_assumption}, then also $a_1 \pi_1 + a_2 \pi_2$ is super-exponential and satisfies \eqref{eq:simplifying_assumption} for positive $a_1$ and $a_2$. By Theorem 4.6 of the same paper, each density of the form $\pi(x) \propto\exp\left(-p(x) \right)$  is super-exponential and satisfies \eqref{eq:simplifying_assumption}, if $p$ is a polynomial of order $\geq 2$. Therefore, the assumptions of Lemma \ref{thm:useful_lemma_2} hold, as required.
\end{proof}

\section{Other algorithms in the Auxiliary Variable Adaptive MCMC class}\label{section:other_algs}
As mentioned in Section \ref{section:auxiliary_variable}, an instance of an algorithm in the Auxiliary Variable Adaptive MCMC class is adaptive parallel tempering introduced by \cite{miasojedow2013adaptive}. Indeed, let us consider $\Phi:= \mathcal{X}^N$ and $\tilde{X} := \mathcal{X} \times \Phi = \mathcal{X} \times \mathcal{X}^N$ and
$$
\tpig\left(x_N, (x_0, \ldots, x_{N-1})\right): = \prod_{i=0}^N \pi(x_i)^{\beta_{j, \gamma}}.
$$
Then for any $B \in \mathcal{B}(\mathcal{X})$ we have
$$
\tpig\left(B \times \Phi\right) = \int_{B} \pi(x_N)^{\beta_{N, \gamma}} dx_N \int_{\Phi}\prod_{i=0}^{N-1} \pi(x_i)^{\beta_{j, \gamma}}dx_0 \ldots d x_{N-1} = \int_{B} \pi(x_N)^{\beta_{N, \gamma}} dx_N = \pi(B),
$$
where the last equality follows since $\beta_{N, \gamma}=1$ for all $\gamma$. Additionally, the transition kernels used in adaptive parallel tempering $\{ \tilde{P}_{\gamma} \}_{\gamma \in \mathcal{Y}}$ are defined in such a way that  detailed balance holds. 

Another example of a group of algorithms in the Auxiliary Variable Adaptive MCMC class is an adaptive version of pseudomarginal algorithms. Recall that pseudomarginal algorithms are a powerful tool used in situations when the target  density $\pi(x)$ on $\mathcal{X}$ cannot be evaluated pointwise or this evaluation would be very expensive, but an unbiased estimator of $\pi(x)$ is available. In the simplest setting an importance sampling estimator is used for this purpose. Then the pseudomarginal algorithm is equivalent to the Metropolis-Hastings algorithm targeting a distribution $\tilde\pi^N(x, z)$ on an augmented state space $\mathcal{X} \times \mathcal{Z}$, where $Z\in \mathcal{Z}$ is a vector representing $N$ samples on which the importance sampling estimator is based. A remarkable property of the pseudomarginal algorithms is that $\pi(x)$ is the marginal distribution of  $\tilde\pi^N(x, z)$ regardless of $N$ (see \cite{andrieu2009pseudo} and \cite{andrieu2015convergence}) . The number of samples $N$ and, in more complex settings, the amount of correlation between those samples (see  \cite{deligiannidis2018correlated}), may follow an adaptive scheme. Therefore, conditions $\eqref{ker_erg}$ and $\eqref{marginal_ok}$ are satisfied for $\Phi = \mathcal{Z}^{\mathbb{N}} \times \mathbb{N}$, where $N \in \mathbb{N}$ corresponds to the number of samples used for estimation.

\begin{center}
	\large \textbf{\textsc{Supplementary Material B}}
\end{center}
In Sections \ref{section:updating} and \ref{section:burnin} we present details of the implementation of our method.  An additional simulation example  and settings of our numerical experiments are shown in Section \ref{section:details_experiments}.

\section{Updating $w_{\gamma,i}$ and  $a_{\gamma,ik}$}\label{section:updating}
Recall that $N$ denotes the number of modes. The weights $w_{\gamma,i}$ are set to $1/N$ at the beginning of the algorithm and they are adapted while the algorithm runs in such a way that they represent the proportion of samples observed so far in each mode. At the same time we do not allow any of the weights to get below some pre-specified value $\epsilon_w$; otherwise the target distribution $\tpig$ could run the risk of being severely distorted by weights very close or equal to 0. In particular we use the update scheme described below.

Let $n_{i, \text{obs}}$ be the number of samples in mode $i$ for $i=1, \ldots, N$ observed after $n$ iterations of the main algorithm. Then $n= \sum_{i=1}^N n_{i, \text{obs}}$. Define 
$$
w_{\text{add}}:= \frac{n}{\frac{1}{\epsilon_w} - N} \quad \text{and set} \quad w_i := \frac{n_{i, \text{obs}} + w_{\text{add}}}{n + N w_{\text{add}}}\quad \text{for } i=1, \ldots, N.
$$
It is easily checked that if there are no observations in mode $i$, i.e. $n_{i, \text{obs}}$ is equal to 0, then $w_i = \epsilon_w$. Since $\epsilon_w$ must satisfy $N \epsilon_w <1$ and the number of modes $N$ is typically unknown in advance, in our implementation the user provides $\tilde{\epsilon}_w$ and the algorithm sets $\epsilon_w := \tilde{\epsilon}_w/N$.

Even though the theory we present in Section \ref{section:ergodicty_amcmc} holds when  $\agik$ follow some adaptive rule, we propose to keep these values fixed throughout the run of the algorithm, with a default choice $a_{\gamma, ii}=0$  and $a_{\gamma, ik} = 1/(N-1)$ for $\giY$, $i,k\in \{1, \ldots, N\}$ and $i \neq k$. If $N$ is not very large the benefit of adapting $\agik$ is rather marginal while it may add to finite sample instability. A natural alternative improving acceptance rates would be to keep $a_{\gamma, ii}=0$ and $a_{\gamma, ik} = w_{\gamma,k}/\sum_{j \neq i} w_{\gamma,j}$. However, consider a scenario when a mode with a significant weight in the target distribution is particularly difficult to jump into (for example, because the covariance matrix estimation has not been run for long enough). The jumps to this mode will very likely get rejected many times before we observe the first sample in this mode and start adapting its covariance matrix. In such case proposing modes proportionally to the number of samples collected so far in those modes would effectively make moves to this "difficult" mode even less frequent. Consequently we could face the risk of underestimating the weight of this mode for a fixed computational budget. Hence, we adopt the more conservative approach of keeping these values fixed to avoid the risk described above.

Note also that we in our implementation we use $a_{\gamma, ii}=0$ even though formally we assumed in Section \ref{section:ergodicty_amcmc} that $a_{\gamma, ii}> \epsilon_a$. This is because in practice we do not want to propose jumps to the same mode. In case of deterministic jumps this would mean proposing a move to the same state (recall equation \eqref{eq:deterministic_proposal}), which would have a negative impact on the mixing of the algorithm.

\section{Burn-in algorithm}\label{section:burnin}
For the mode finding part in our implementation we use the BFGS  method from the \texttt{optimx} package in \texttt{R} \cite{R}. We only pass to the next stage of the burn-in algorithm (mode-merging) those vectors for which first and second order Kuhn, Karush, Tucker (KKT) optimality conditions are satisfied. Checking these conditions is necessary  in order to avoid including points that are not local minima of $-\log(\pi)$ (but, for example, saddle points) in the list of modes. Besides, we recommend that the user codes up their own function for calculating the gradient and the Hessian, whenever possible, or uses packages that compute those values with high numerical precision. This will help ensure numerical stability of the optimisation runs. What is more, working with variables with bounded support tends to be problematic -- the optimisation algorithm will typically struggle in the neighbourhood of the boundary. In such cases it is usually beneficial to work with transformed variables, defined on the whole space (see Section \ref{section:loh_supplementary}).

Recall that the initial value of the matrix corresponding to mode $i$ at the beginning of round 1 of the covariance matrix estimation is the inverse of the Hessian evaluated at $\mu_i$ (see line 17 of Algorithm \ref{alg:burn_in_algorithm}). The heuristics behind this idea is that in case of the Gaussian distribution the inverse of the Hessian of $-\log(\pi)$ would correspond to the covariance matrix, so intuitively for a large class of target distributions this will be a good starting value.

As mentioned in Section \ref{subsection:burn_in_algorithm}, we propose a semi-automatic way of choosing the number of rounds of the covariance matrix estimation, denoted by $K$. Recall that $\Sigma_{k,i}$ is the matrix corresponding to mode $i$ updated during round $k$. The choice of $K$ is based on monitoring the following quantity, called  inhomogeneity factor (see \cite{roberts2009examples} and \cite{rosenthal2011optimal}), given by
\begin{equation}\label{eq:inhomogeneity_factor}
b_{k,i} := d\frac{\sum_{j=1}^d \lambda_j^{-1}}{\left( \sum_{j=1}^d \lambda_j^{-1/2}\right)^2},
\end{equation}
where $d$ is the dimension of the state space of $\pi$ and $\lambda_j$ for $j=1, \ldots, d$ are the eigenvalues of $\Sigma_{k-1,i}^{-1} \Sigma_{k,i}$. Note that this factor is always a real number even though  $\Sigma_{k-1,i}^{-1} \Sigma_{k,i}$ does not need to be symmetric. If $\lambda$ is a complex eigenvalue of $\Sigma_{k-1,i}^{-1} \Sigma_{k,i}$, its conjugate $\bar{\lambda}$ is also an eigenvalue of $\Sigma_{k-1,i}^{-1} \Sigma_{k,i}$ and so the imaginary components cancel both in the numerator and the denominator of $\eqref{eq:inhomogeneity_factor}$. Moreover, by Jensen's inequality $b_{k,i}$ satisfies
$b_{k,i} \geq 1,$ and $b_{k,i}=1$ if and only if $\Sigma_{k-1,i}$ and $\Sigma_{k,i}$ are proportional to each other. In particular, the value of $b_{k,i}$ is always equal to 1 in the scaling phase.

The procedure we propose is the following: perform $AC_1$ scaling steps for each matrix (perhaps split into several rounds). Then perform at least one covariance-based round for each mode. Continue running covariance-matrix rounds until the inhomogeneity factor drops below a certain threshold $b_{\text{acc}}$ for all matrices. In other words, having performed $AC_1$ scaling steps and at least one covariance-based round, we set $K$ to the smallest value of $k$ satisfying
$\max_{i \in \{1, \ldots, N \}} b_{k,i} \leq b_{\text{acc}}.
$
In the version in which the modes can be added when the main algorithm runs, one could consider stopping the burn-in algorithm separately for each mode and passing the covariance matrix to the main algorithm once its corresponding inhomogeneity factor goes below $b_{\text{acc}}$.

As for the choice of the lengths of the rounds, by default we use a geometric sequence with a common ratio 2. The number $AC_1$ should grow with the dimension of the state space $d$ since the initial covariance matrix will be based on $AC_1$ samples for each mode. In our experiments $AC_1$ is equal to $\max(1000, d^2/2)$.

Note that this construction  implies that adapting the matrices by scaling will happen only during the burn-in algorithm, as the number of samples in each mode at the beginning of the main algorithm will be equal to the total length of the number of iterations in the burn-in algorithm, so in particular this number will exceed $AC_1$. 

The adaptation scheme of the main algorithm is based on updating the covariance matrices passed from the burn-in algorithm.

\section{Examples --  further details}\label{section:details_experiments}
Below we present one more example, a hierarchical Bayesian model for cancer data. We also  discuss some further details related to the simulations described in Section \ref{section:examples}. The exact parameters settings of our experiments are summarised in Table \ref{table:experimens_settings}.  For all examples shown in this paper we used an implementation of the algorithm in which the burn-in algorithm runs only before the main algorithm (without adding modes on the fly).

\begin{table}
	\begin{center}
		\begin{tabular}{|c|c|c|c|c|c|}
			\hline
			\multicolumn{2}{|c|}{} &\makecell{Mixture \\of Gaussians} & \makecell{Mixture 
				\\of banana-shaped  \\ and t-distributions} &   \makecell{Sensor \\ network} & \makecell{LOH \\ example}\\
			\hline
			\multicolumn{2}{|c|}{\textit{Main algorithm}} & \multicolumn{4}{|c|}{}  \\
			\hline
			\multicolumn{2}{|c|}{number of iterations} & 500,000 & 500,000  & 500,000 & 200,000 \\
			\multicolumn{2}{|c|}{$\alpha$} & 0.7 &  0.7 & 0.7 &0.7 \\
			\multicolumn{2}{|c|}{$\beta$} & 0.0001 & 0.0001 &0.0001 &0.0001\\
			\multicolumn{2}{|c|}{$\epsilon$} & 0.1 & 0.1 & 0.1 & 0.1\\
			\multicolumn{2}{|c|}{$\tilde{\epsilon_w}$} & 0.01 & 0.01 & 0.01 & 0.01 \\
			\multicolumn{2}{|c|}{$AC_2$} & 1000 &  1000 & 500 &500\\
			\multicolumn{2}{|c|}{optimal acceptance rate} &  0.234 & 0.234 & 0.234 & 0.234\\
			\multicolumn{2}{|c|}{local proposal} & Gaussian & Gaussian & Gaussian & \makecell{Gaussian/ \\ $t$-distributed}\\
			\multicolumn{2}{|c|}{distributions $Q_i$} & $t$ with 7 df & $t$ with 7 df & $t$ with 7 df& $t$ with 7 df\\
			\multicolumn{2}{|c|}{\makecell{df of the proposal  \\ (if $t$-distributed)}} &7 &7 &7 &7 \\
			\hline
			\multicolumn{2}{|c|}{\textit{Burn-in algorithm}} & \multicolumn{4}{|c|}{} \\
			\hline
			\multicolumn{2}{|c|}{number of BFGS runs} & 1500 & 40,000 &  10,000 & 500  \\
			\multicolumn{2}{|c|}{$b_{\text{acc}}$} &1.1  & 1.1 & 1.1 & 1.1 \\
			\hline	
		\end{tabular}
	\end{center}
	\caption{Settings of the parameters used for the examples presented in this paper.}\label{table:experimens_settings}
\end{table}
\subsection{Hierarchical Bayesian model for LOH data}\label{section:loh_supplementary}
The example presented here is based on  the Seattle Barrett’s Oesophagus study (see \cite{barrett1996determination})  analysed later by \cite{warnes2000normal}, \cite{craiu2009learn} and \cite{bai2011divide}. Loss of Heterozygosity is the process by which a region of the genome is deleted on either the paternal or maternal inherited chromosomes leading to a loss of diversity. Loss of Heterozygosity (LOH) rates were collected from oesophageal cancers for 40 regions,
each on a distinct chromosome arm. They are of interest since chromosome regions with high rates of LOH are thought to
contain so-called Tumour Suppressor Genes (TSGs) whose functionality is adversely affected by the reduction in genetic diversity. There exists also a proportion of "background" (not cancer-related) LOH. The aim of this study is to provide, for each LOH rate, the probability of being in the TSG group and in the "background" group. 
Following the approach adopted in the above papers, we consider the following mixture model:
$$
x_i \sim\eta \text{Binomial}(N_i, \pi_1)  + (1-\eta) \text{Beta-Binomial}(N_i, \pi_2, \gamma) \quad\text{for } i = 1, \ldots 40,
$$
where $x_i$ is the number of events of interest (Loss of Heterozygosity) observed in region $i$, and $N_i$ -- the corresponding sample size. Besides, $\eta$ denotes the probability of a location being a member
of the binomial group, $\pi_1$ is the probability of LOH in the
binomial group, $\pi_2$ is the probability of LOH in the beta-binomial
group, and $\gamma$ controls the variability of the beta-binomial
group. That is, the likelihood function for this model is given by
$ \prod_{i=1}^{40} f(x_i, N_i| \eta, \pi_1, \pi_2, \gamma)$ for
$$
f(x_i, N_i| \eta, \pi_1, \pi_2, \gamma)= \eta   {N_i \choose x_i} \pi_1^{x_i} (1- \pi_1)^{n_i - x_i} + (1-\eta) {N_i \choose x_i} \frac{\text{B}\left(x_i + \frac{\pi_2}{\omega_2}, n_i-x_i + \frac{1-\pi_2} {\omega_2} \right)}{\text{B}\left( \frac{\pi_2}{\omega_2},\frac{1-\pi_2} {\omega_2} \right)},
$$
where $\text{B}$ denotes the beta function and $\omega_2:= \frac{\exp(\gamma)}{2\left(1+\exp(\gamma)\right)}$.
The following prior distributions were used for the parameters of interest:
\begin{align*}
\eta & \sim \text{Unif}(0,1)\\
\pi_1 & \sim \text{Unif}(0,1)\\
\pi_2 & \sim \text{Unif}(0,1)\\
\gamma & \sim \text{Unif}(-30, 30).
\end{align*}
The resulting target distribution has two non-symmetric and well-separated modes, as depicted in Figure  \ref{fig:loh_scatterplot}, one of which has a significantly bigger weight than the other; below we denote them by mode 1 and 2, respectively. We based our analysis on 200,000 steps of the main algorithm and 500 BFGS runs for the mode-finding stage. The length of the covariance matrix estimation in burn-in algorithm  was equal to 3000 iterations for each experiment (chosen automatically). Table  \ref{table:loh_t} summarizes the acceptance rates of jumps between the modes for the three versions of the implementation of the algorithm.
\begin{figure}[t]
	\begin{center}
		\includegraphics[width=7cm]{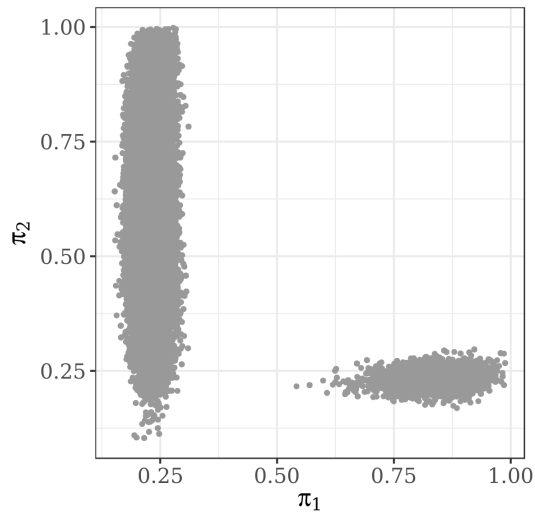} 
	\end{center}
	\caption{\footnotesize Scatterplot of coordinates $\pi_1$ and $\pi_2$ in the LOH study.}\label{fig:loh_scatterplot}
\end{figure}
\FloatBarrier

As stated above, the prior distribution for all the variables has its support on a compact set.  Since this typically has an adverse effect on both mode-finding and sampling, we decided to work with transformed variables, which live on the real line. For the first three variables we applied the logit transformation, i.e. we transformed them 
using a function $t_1(x)= \log(x) - \log(1-x)$. For the last variable we used the transformation given by $t_2(x)= \log(30+x) - \log(30-x)$.

The starting points for the optimisation runs were sampled from the prior distribution and transformed the way described above. The number of function and gradient evaluations for the $20 \times 500$ BFGS runs varied between 27 and 428, with an average of~73.

\begin{table}[ht]
	\begin{center}
		\begin{tabular}{|l|rr|rr|rr|}
			\hline
			& \multicolumn{2}{c|}{deterministic} & \multicolumn{2}{c|}{Gaussian}& \multicolumn{2}{c|}{$t$-distributed} \\ 
			& Lowest & Highest & Lowest & Highest & Lowest & Highest \\
			\hline
			Gaussian local proposal & & & & & & \\
			\hline
			mode 1 to mode 2 & 0.01 & 0.02 & 0.02 & 0.02 & 0.02 & 0.03 \\ 
			mode 2 to mode 1 & 0.44 & 0.71 & 0.70 & 0.76 & 0.71 & 0.77 \\ 
			\hline
			$t$-distributed local proposal & & & & & & \\
			\hline
			mode 1 to mode 2& 0.01 & 0.03 & 0.02 & 0.03 & 0.02 & 0.02 \\ 
			mode 2 to mode 1 & 0.50 & 0.78 & 0.61 & 0.76 & 0.63 & 0.76 \\ 
			\hline
		\end{tabular}
	\end{center}
	\caption{The lowest and the highest value (across 20 runs of the experiment) of the  acceptance rates of jump moves between the two modes of the posterior distribution in the LOH study for different jump methods (for the Gaussian and $t$-distributed local proposal).} \label{table:loh_t}
\end{table}
\FloatBarrier

\subsection{Mixture of Gaussians}
The starting points for the optimisations runs were sampled uniformly on  $[-2,2]^d$. In Table \ref{table:gaussian_mixture_bfgs} we gathered information about the number of the target density and its gradient evaluations (jointly) in the BFGS runs. We reported the minimum, the mean and the maximum value required for the optimisation algorithm to converge. The last two columns show the minimum and the maximum number of iterations used for the estimation of the covariance matrices in the burn-in algorithm. These figures show that indeed the computational budget used by our method for dimensions $d=10$ and $d=20$ was significantly smaller than the budget of APT (see Section \ref{section:mixture_of_gaussians}). 

Figures \ref{fig:gaussian_mixture_boxplots_dim_130_160} and \ref{fig:gaussian_mixture_density_plot_high} illustrate good performance of our method in dimensions $d=130$ and $d=160$, especially for the deterministic jumps. Interestingly, the Gaussian proposal for jumps gives results of the poorest quality on this example.

\begin{table}[ht]
	\centering
	\begin{tabular}{|l|rrr|rr|}
		\hline
		& \multicolumn{3}{|c|}{Optimisation runs} & \multicolumn{2}{|c|}{\makecell{Covariance matrix \\ estimation}} \\
		\hline
		& minimum & mean & maximum & minimum & maximum \\ 
		\hline
		d=10 & 9 & 11.39 & 42 & 3000 & 3000 \\ 
		d=20 & 9 & 10.61 & 39 & 3000 & 7000\\ 
		d=80 & 6 & 7.36 & 27 & 255,000 & 511,000 \\ 
		d=130 & 8 & 8.07 & 24 & 511,000 & 511,000 \\ 
		d=160 & 6 & 8.02 & 22 & 511,000 & 511,000 \\ 
		d=200 & 6 & 6.85 & 23 & 1,023,000 & 1,023,000 \\ 
		\hline
	\end{tabular}
	\caption{First part: number of the target density and its gradient evaluations in the optimisation runs for the mixture of Gaussians. Second part: number of iterations used for the estimation of the covariance matrices in the burn-in algorithm.}\label{table:gaussian_mixture_bfgs}
\end{table}

\begin{figure}[t]
	\begin{center}
		\begin{subfigure}{0.49\textwidth}
			\centering
			\includegraphics[width=7cm]{/Mixture_Gaussians/mixture_gaussians_boxplots_130.png} 
			
		\end{subfigure}
		\begin{subfigure}{0.49\textwidth}
			\centering
			\includegraphics[width=7cm]{/Mixture_Gaussians/mixture_gaussians_boxplots_160.png} 
			
		\end{subfigure} 
		\caption{\footnotesize Boxplots  of the values of $\text{RMSE}/\sqrt{d}$ for the mixture of Gaussians across 20 runs of the experiment, dimensions 130 and 160. Note different scales on the $y$-axis.} 
		\label{fig:gaussian_mixture_boxplots_dim_130_160}
	\end{center}
\end{figure}
\FloatBarrier
\begin{figure}[t]
	\begin{center}
		
		\includegraphics[width=14cm]{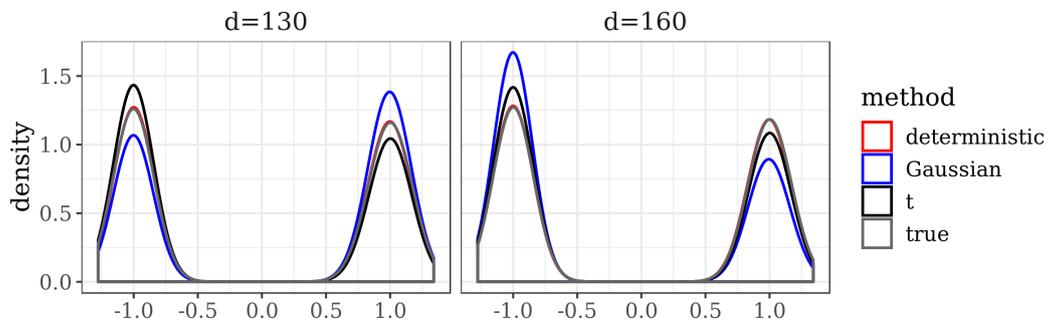} 
		
		\caption{\footnotesize Density plots for dimensions 130 and 160.  The simulations chosen for the analysis correspond to the median value of RMSE across 20 experiments (the tenth largest value of RMSE).}\label{fig:gaussian_mixture_density_plot_high}
	\end{center}
\end{figure}
\FloatBarrier

\subsection{Mixture of $t$ and banana-shaped distributions}
The starting points for the optimisation runs were sampled uniformly on  $[-2, 12]^d$. Table \ref{table:banana_mixture_bfgs} presents analogous information to Table \ref{table:gaussian_mixture_bfgs}, for the mixture of $t$ and banana-shaped distributions considered in Section \ref{section:mixture_bananas}. For dimensions $d=50$ and $d=80$ we did not run the mode-finding part, assuming the locations of the modes were known. Overall, our method in all its versions proved to  perform well on this high-dimensional example, despite the complicated shapes of the modes. Table \ref{table:banana_mixture_high_dimensions} shows that, as before, the deterministic jump method ensures best between-mode mixing. However, it can be noticed that given 20 runs of the experiment, a few times this method delivered results that deviated significantly from the truth (see Figure \ref{fig:banana_mixture_boxplots_dim_50_80}).

\begin{table}[ht]
	\centering
	\begin{tabular}{|l|rrr|rr|}
		\hline
		& \multicolumn{3}{|c|}{Optimisation runs} & \multicolumn{2}{|c|}{\makecell{Covariance matrix \\ estimation}} \\
		\hline
		& minimum & mean & maximum & minimum & maximum \\ 
		\hline
		d=10 & 21 & 49 & 220 & 7000 & 15,000 \\ 
		d=20 & 21 & 48 & 216 & 15,000 & 63,000 \\ 
		d=50 & - & - & - & 255,000 & 255,000 \\ 
		d=80 &  - & - & - & 511,000 & 511,000 \\ 
		\hline
	\end{tabular}
	\caption{First part: number of the target density and its gradient evaluations in the optimisation runs for the  mixture of banana-shaped and $t$-distributions. Second part: number of iterations used for the estimation of the covariance matrices in the burn-in algorithm.}\label{table:banana_mixture_bfgs}
\end{table}

\begin{figure}[t]
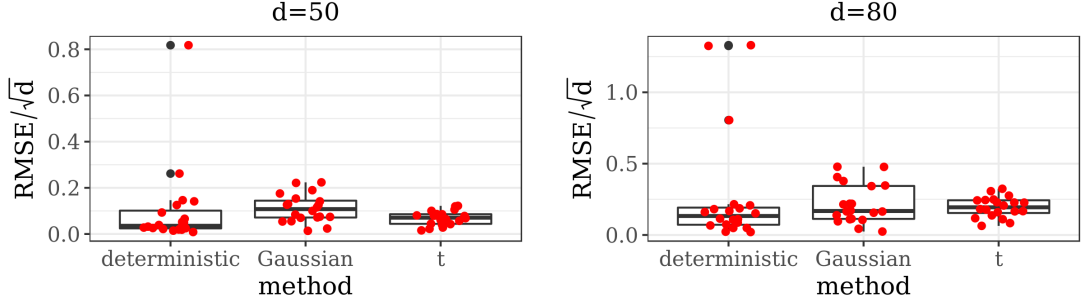

	\begin{center}
		\begin{subfigure}{0.49\textwidth}
			\centering
			\includegraphics[width=7cm]{/Mixture_bananas/mixture_bananas_boxplots_50.png} 
			
		\end{subfigure}
		\begin{subfigure}{0.49\textwidth}
			\centering
			\includegraphics[width=7cm]{/Mixture_bananas/mixture_bananas_boxplots_80.png} 
			
		\end{subfigure} 
		\caption{\footnotesize Boxplots  of the values of $\text{RMSE}/\sqrt{d}$ for the mixture of banana-shaped and $t$-distributions across 20 runs of the experiment, dimensions 50 and 80. Note different scales on the $y$-axis.} 
		\label{fig:banana_mixture_boxplots_dim_50_80}
	\end{center}
\end{figure}
\FloatBarrier

\begin{table}[ht]
	\begin{center}
		\begin{tabular}{|l|rr|rr|rr|}
			\hline
			& \multicolumn{2}{c|}{deterministic} & \multicolumn{2}{c|}{Gaussian}& \multicolumn{2}{c|}{$t$-distributed} \\ 
			\hline
			& Lowest & Highest & Lowest & Highest & Lowest & Highest \\
			d=50 & 0.26 & 1.00 & 0.03 & 0.14 & 0.06 & 0.23 \\ 
			d=80 & 0.16 & 1.00 & 0.01 & 0.07 & 0.03 & 0.12 \\ 
			\hline
		\end{tabular}
	\end{center}
	\caption{The lowest and the highest value (across 20 runs of the experiment) of the acceptance rate of jumps from a given mode, dimensions 50 and 80.}\label{table:banana_mixture_high_dimensions}
\end{table}
\subsection{Sensor network localisation}
Starting points for the BFGS procedures were sampled uniformly on $[0,1]^{16}$. The number of function and gradient evaluations for these runs varied between 175 and 876, with an average of~400. The starting points for  the APT simulations were the 14 modes identified by the BFGS optimiser and 6 points sampled uniformly on $[0,1]^{16}$. 

Recall that the results for APT presented in Section \ref{section:sensor_network} were based on 4 temperatures. In Figure \ref{fig:sensor_network_boxplot_5_temp} we present analogous results to those shown in Figure \ref{fig:sensor_network_boxplot}, with the number of temperatures increased to 5 (and the same number of iterations equal to 700,000). Under such settings the APT algorithm mixes better between the modes, however, as illustrated in Figure \ref{fig:sensor_network_boxplot_5_temp}, it still yields less stability than our method. Note that APT required $4\times 700,000 = 2,800,000$ or $5\times 700,000 = 3,500,000$ target evaluations, for $4$ and $5$ temperature levels, respectively. Assuming an implementation of JAMS on a standard desktop computer with 8 cores, the computational cost measured by the number of target and gradient evaluations per core would be at most:
\begin{itemize}
	\item for mode finding: $10,000/8 \times 875$ (as for each BFGS run we had at most 875  such evaluations);
	\item for the burn-in-algorithm:  $2 \times 15,000$ target evaluations (as the estimation of 14 covariance matrices needed to be split across 8 cores);
	\item for the main algorithm: $500,000$ target evaluations.
\end{itemize}
Altogether this would give $1,623,750$ evaluations, and the additional overhead resulting from the communication between cores. This figure shows that APT required a larger computational cost than JAMS in our setup even though the above analysis was carried out under a pessimistic scenario. Firstly, on average there were 400 evaluations per BFGS procedure and plugging this value into the above calculations would decrease the overall number of evaluations to 1,000,030. What is more, typically a user would run our algorithm on a server or a cloud service, which we in fact did as well. This would allow to split the computational cost (in particular, that of BFGS runs) across a much larger number of cores.

\begin{figure}[ht]
	\begin{center}
		\centering
		\includegraphics[width=10cm]{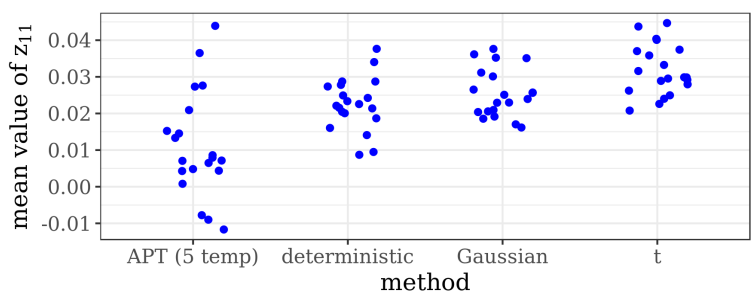} 
		\caption{\footnotesize A boxplot of the mean value of the first coordinate of sensor 1 for 20 runs of the experiment for APT (with 5 temperature levels) and three versions of JAMS.} 
		\label{fig:sensor_network_boxplot_5_temp}
	\end{center}
\end{figure}
\FloatBarrier

\bibliographystyle{apalike}

\end{document}